\DeclareMathAlphabet{\mathpzc}{OT1}{pzc}{m}{it}
\newcommand{\A}{\mathcal{A}}
\newcommand{\B}{\mathcal{B}}
\newcommand{\G}{\mathbb{G}}
\newcommand{\Z}{\mathbb{Z}}
\newcommand{\N}{\mathbb{N}}
\newcommand{\PR}{\mathrm{Pr}}
\newcommand{\ID}{\mathsf{ID}}
\newcommand{\U}{\mathsf{U}}
\renewcommand{\check}{\stackrel{?}{=}}
\def \sample { \stackrel{\begin{footnotesize} _{\$}
\end{footnotesize}}{\leftarrow} }
\title{Towards Black-Box Accountable Authority IBE with Short Ciphertexts and Private Keys}
\author{Beno\^it Libert$^{1}$ \and
        Damien Vergnaud$^{2}$ \thanks{The first author
acknowledges  the Belgian National Fund for Scientific Research
(F.R.S.-F.N.R.S.) for their financial support  and the BCRYPT
Interuniversity Attraction Pole. The second author is supported by
the European Commission through the IST Program under Contract
ICT-2007-216646 ECRYPT II and by the French \emph{Agence Nationale de
la Recherche} through the PACE project.
 }  \thanks{This is the full
version of a paper with the same title  presented   in Public Key
Cryptography 2009 \cite{LV09} }  }
\institute{Universit\'e Catholique de Louvain,   Microelectronics Laboratory  \\
Place du Levant, 3 -- 1348 Louvain-la-Neuve -- Belgium \\ \and
Ecole Normale Sup\'erieure  -- C.N.R.S. -- I.N.R.I.A.  \\
45, Rue d'Ulm --  75230 Paris CEDEX 05 -- France  }
\begin{document}

\maketitle

\begin{abstract}
At Crypto'07, Goyal introduced the concept of \emph{Accountable
Authority Identity-Based Encryption}   as a convenient tool to
reduce the amount of trust in authorities in Identity-Based
Encryption. In this model, if the Private Key Generator (PKG)
maliciously re-distributes users' decryption keys, it runs the risk
of being caught and prosecuted. Goyal proposed two constructions:
the first one is efficient but can only trace well-formed decryption
keys to their source; the second one allows tracing   obfuscated
decryption boxes in a model (called {\it weak black-box} model)
where cheating authorities have no decryption oracle. The latter
scheme is unfortunately  far less efficient in terms of decryption
cost and ciphertext size. The contribution of this paper is to
describe a new construction that combines the efficiency of
  Goyal's first proposal with a very simple weak black-box tracing mechanism. The proposed    scheme is presented in the
selective-ID  model but readily
extends to meet all security properties in the adaptive-ID sense, which is not known to be true for prior black-box   schemes. \\

\textbf{Keywords. } Identity-based encryption, traceability,
efficiency.
\end{abstract}

\section{Introduction}

Identity-based cryptography, first proposed by Shamir \cite{Sha},
alleviates the need for digital certificates used in traditional
public-key infrastructures. In such systems, users'  public keys are
public identifiers ({\it e.g.} email addresses) and the matching
private keys are derived by a trusted party called Private Key
Generator (PKG). The first practical construction for
\emph{Identity-Based Encryption} (IBE) was put forth by Boneh and
Franklin \cite{BF2} -- despite the
  bandwidth-demanding proposal by Cocks \cite{Co} -- and,
since then, a  large body of work has been devoted to the design of
schemes with
   additional properties or relying on
 different algorithmic  assumptions
\cite{GS0,BB2,BB3,SW05,Wat05,BBG05,Ge06,BW06,BGH07}. \\
\indent In spite of its appealing  advantages, identity-based
encryption has not undergone rapid adoption as a standard. The main
reason is arguably the fact that it requires unconditional trust in
the PKG: the latter can indeed decrypt any ciphertext or, even
worse, re-distribute users' private keys. The key escrow problem can
be mitigated  as suggested in \cite{BF2}  by sharing the master
secret among multiple PKGs, but this inevitably entails extra
communication and infrastructure. Related paradigms \cite{Ge03,AP03}
strived to remove the key escrow problem but only did so at the
expense of losing the benefit of human-memorizable public keys:
these models get rid of escrow authorities but both involve
traditional (though not explicitly
certified) public keys that are usually less convenient to work with than easy-to-remember public identifiers. \\
\indent In 2007, Goyal \cite{Goyal} explored a new approach to
 deter rogue actions from  authorities. With the
    \emph{Accountable Authority Identity-Based Encryption} (A-IBE) primitive, if the
    PKG
discloses a decryption key associated with some identity over the
Internet, it runs the risk of being caught and sued by the user.
A-IBE schemes achieve this goal by means of an interactive  private
key generation protocol between the user and the PKG. For each
identity,
 there are exponentially-many families of possible decryption
 keys.
  The key generation
protocol provides the user with a single decryption key   while
concealing to the PKG the family that  this key belongs to. From
this  private key,  the user is computationally unable to find one
from a different family. Hence, for a given identity, a pair of
private keys from distinct families   serves as evidence of a
fraudulent PKG.
 The latter
remains able to passively eavesdrop
 communications  but is discouraged to   reveal users' private keys.
Also, users cannot falsely accuse an honest PKG since they are
unable to compute a new   key from a different family using a given
key.

\medskip \noindent \textsc{Prior Works.}  Two
constructions were given in \cite{Goyal}. The first one (that we
call $\mathpzc{Goyal}$-$\mathpzc{1}$ hereafter)
 builds  on Gentry's IBE \cite{Ge06} and, while efficient,     only allows tracing well-formed decryption keys. This white-box model seems unlikely to suffice in practice  since
malicious parties can rather release
 an imperfect and/or
 obfuscated   program that only decrypts with small but noticeable probability.    The second scheme of \cite{Goyal} (let us call it
$\mathpzc{Goyal}$-$\mathpzc{2}$), which is constructed on the
Sahai-Waters fuzzy IBE \cite{SW05}, has a variant  providing weak
black-box traceability: even an imperfect pirate   decryption box
can be traced  (based on its input/output behavior) back to
  its source although traceability is only guaranteed against dishonest
 PKGs that have no decryption oracle in the attack game. However, $\mathpzc{Goyal}$-$\mathpzc{2}$
  is somewhat inefficient as decryption requires a number of
pairing
 calculations that is linear in the security parameter. For the usually required  security level, ciphertexts contain more than $160$ group
 elements and decryption    calculates
 a product of about $160$ pairings. \\ \indent
  Subsequently, Au {\it et al.} \cite{AHLSW} described another A-IBE scheme providing retrievability ({\it i.e.}, a property that prevents  the PKG from revealing more than one key for a given identity without exposing
 its master key) but remained in  the white-box model. More recently, Goyal {\it et al.} \cite{GLSW}
   modified the $\mathpzc{Goyal}$-$\mathpzc{2}$ system   using attribute-based encryption techniques \cite{SW05,GPSW}
   to achieve full black-box traceability: unlike   $\mathpzc{Goyal}$-$\mathpzc{2}$, the scheme of  \cite{GLSW} preserves security
   against dishonest PKGs that have access to a decryption oracle in the model.
  While definitely  desirable in practice, this property is currently  achievable only at the expense of the same significant penalty as
 in $\mathpzc{Goyal}$-$\mathpzc{2}$
  \cite{Goyal} in  terms of  decryption
  cost and ciphertext  size.

\medskip\noindent \textsc{Our Contributions.} We present a    very efficient
and  conceptually simple scheme with weak black-box traceability. We
prove its security  (in the standard model) under the
 same   assumption as $\mathpzc{Goyal}$-$\mathpzc{2}$.  Decryption keys and
ciphertexts consist of a constant number of group elements and their
length is thus linear in the security parameter $\lambda$ (instead
of quadratic as in $\mathpzc{Goyal}$-$\mathpzc{2}$). Encryption and
decryption take $O(\lambda^3)$-time (w.r.t. $O(\lambda^4)$ in
$\mathpzc{Goyal}$-$\mathpzc{2}$) with only two pairing computations
as for the latter (against more than $160$ in
$\mathpzc{Goyal}$-$\mathpzc{2}$).
\\ \indent While presented in the selective-ID security model (where
adversaries must choose the identity that will be their prey at the
outset of the game) for simplicity, our scheme is easily adaptable
  to the adaptive-ID model of \cite{BF2}. In contrast,
one of the security properties ({\it i.e.}, the infeasibility for
users to frame innocent PKGs) was only established in the
selective-ID setting for known schemes in the black-box model ({\it
i.e.}, $\mathpzc{Goyal}$-$\mathpzc{2}$ and its fully black-box
extension \cite{GLSW}). Among such schemes, ours thus appears to be
the first one that can be tweaked so as to achieve   adaptive-ID
security against dishonest
users.  \\
\indent Our scheme performs almost as well as
$\mathpzc{Goyal}$-$\mathpzc{1}$ (the main overhead being a long
master public key {\it \`a la} Waters \cite{Wat05} to obtain the
adaptive-ID security). In comparison with the latter, that was only
analyzed in a white-box model of traceability, our system provides
several other advantages:
\begin{itemize}
\item[-]  Its security relies on a weaker assumption. So far, the
only fully practical A-IBE scheme
 was resting on
      assumptions      whose strength grows   with the number of adversarial queries, which can be as large as $ 2^{30}$ as
commonly assumed in the literature. Such assumptions are subject to
a limited   attack \cite{Cheon} that requires a careful adjustment
of group sizes (by as much as $50\%$ additional bits) to guarantee a
secure use of   schemes.
\item[-] It  remains secure when many users
want to run the key generation protocol in a concurrent fashion.
$\mathpzc{Goyal}$-$\mathpzc{1}$     has a key generation protocol
involving zero-knowledge proofs. As its security reductions require
to rewind adversaries at each key generation query, security is only
guaranteed when the PKG interacts with users sequentially. In
inherently concurrent environments like the Internet,   key
generation protocols should remain secure when  executed by many
users willing to register at the same time. By minimizing the number
of rewinds in
  reductions, we   ensure that our scheme
remains secure  in a concurrent setting. In these regards, the key
generation protocol of $\mathpzc{Goyal}$-$\mathpzc{2}$ makes use of
oblivious  transfers (OT)  in sub-protocols. It thus supports
concurrency whenever the underlying OT protocol does. As already
mentioned  however, our
  scheme features a much better efficiency than $\mathpzc{Goyal}$-$\mathpzc{2}$.
\item[-] In a white-box model of traceability, it can be made  secure against
dishonest PKGs equipped with a decryption oracle\footnote{We believe
that the $\mathpzc{Goyal}$-$\mathpzc{1}$ system can also be modified
so as to obtain this property. }. In the following, we nevertheless
focus on the (arguably more interesting) weak black-box traceability
aspect.
\end{itemize}

As an extension to the proceedings version of this paper
\cite{LV09}, we also show how to apply the idea of our weak
black-box tracing mechanism to Gentry's IBE. The resulting A-IBE
system is obtained by bringing a simple modification to the key
 generation protocol of $\mathpzc{Goyal}$-$\mathpzc{1}$ so as to perfectly
 hide the user's key family from the PKG's view while preserving the efficiency of the whole
 scheme. Since the resulting system inherits the efficiency of Gentry's IBE and the $\mathpzc{Goyal}$-$\mathpzc{1}$ white-box A-IBE,
 it
   turns out to be the most efficient
 weakly black-box A-IBE construction to date. Its (adaptive-ID)
 security is moreover proved under a tight reduction (albeit under a strong
 assumption). \\
 \indent
Finally, since detecting misbehaving PKGs is an equally relevant
problem in IBE primitives and their generalizations, we show how the
underlying idea of previous schemes can be applied to one of the
most practical identity-based broadcast encryption (IBBE)
realizations \cite{BH08}. We also argue that the same technique
similarly applies in the context of attribute-based encryption
\cite{SW05,GPSW}.

\medskip
\noindent\textsc{Organization.} In the rest of the paper, section
\ref{model} recalls the A-IBE security model defined in
\cite{Goyal}. We first analyze the white-box version of our scheme
in section \ref{efficient-DBDH} and then describe a weak black-box
tracing mechanism in section \ref{weak-BB}. Sections \ref{Gentry}
and \ref{IBBE} describe and analyze the extensions of our method to
Gentry's IBE and the Boneh-Hamburg IBBE scheme, respectively.

\section{Background and Definitions}
\label{model}
\noindent \textsc{Syntactic definition and security model.} We
recall the  definition of A-IBE schemes and their security
properties as defined  in \cite{Goyal}.
\begin{definition}
An \emph{Accountable Authority Identity-Based Encryption scheme}
(A-IBE) is a tuple
($\mathbf{Setup},\mathbf{Keygen},\mathbf{Encrypt},\mathbf{Decrypt},\mathbf{Trace})$
of   efficient algorithms or protocols such that:
\begin{itemize}
\item $\mathbf{Setup}$    takes as input a security parameter and outputs    a master public
key
$\mathsf{mpk}$ and a matching \emph{master secret key} $\mathsf{msk}$.
\item $\mathbf{Keygen}^{(\mathrm{PKG},\mathsf{U})}$ is an interactive protocol between the public parameter generator
$\mathsf{PKG}$ and the user $\mathsf{U}$:
\begin{itemize}
\item[$\cdot$] the common input to $\mathrm{PKG}$ and $\mathsf{U}$ are:  the master public key $\mathsf{mpk}$ and an
identity $\mathsf{ID}$  for which the decryption key has to be
generated;
\item[$\cdot$] the private input to $\mathrm{PKG}$ is
the master secret key $\mathsf{msk}$.
\end{itemize}
Both parties may use a sequence of private
  coin tosses as additional inputs. The protocol ends with
$\mathsf{U}$ receiving a  decryption key  $d_\mathsf{ID}$ as his
private output.
\item $\mathbf{Encrypt}$  takes as input the master public key $\mathsf{mpk}$, an identity $\mathsf{ID}$ and a message $m$
 and outputs a \emph{ciphertext}.
\item $\mathbf{Decrypt}$   takes as input  the master public key $\mathsf{mpk}$, a decryption key $d_\mathsf{ID}$ and a ciphertext $C$ and outputs a message.
\item $\mathbf{Trace}$ given    the master public key $\mathsf{mpk}$, a decryption key $d_\mathsf{ID}$,
this algorithm  outputs a \emph{key family number} $n_F$ or the
special symbol $\perp$ if $d_\mathsf{ID}$ is ill-formed.
\end{itemize}
Correctness  requires that, for any outputs
$(\mathsf{mpk},\mathsf{msk})$ of $\mathbf{Setup}$, any plaintext $m$
and any identity $\mathsf{ID}$, whenever $d_{\ID} \leftarrow
\mathbf{Keygen}^{(\mathrm{PKG}(\mathsf{msk}),\mathsf{U})}(\mathsf{mpk},\mathsf{ID})$,
we have
$$\begin{array}{c}
 \mathbf{Trace}\big(\mathsf{mpk}, d_{\ID} \big) \neq
\perp, \\
\mathbf{Decrypt}\big(\mathsf{mpk},d_{\ID},\mathbf{Encrypt}(\mathsf{mpk},\mathsf{ID},m)\big)
= m.
\end{array}
$$
\end{definition}

The above definition  is for the white-box setting. In a black-box
model,
  $\mathbf{Trace}$ takes as input an identity $\ID$, the
corresponding user's well-formed private key $d_{\ID}$ and a
decryption box $\mathbb{D}$ that successfully opens a non-negligible
fraction $\varepsilon$ of ciphertexts encrypted under $\ID$. The
output of $\mathbf{Trace}$ is either ``PKG'' or ``User'' depending
on which party is found guilty for having crafted $\mathbb{D}$.
\\ \indent  Goyal
formalized three security properties for A-IBE schemes. The first
one is the standard notion of privacy \cite{BF2} for IBE systems. As
for the
  other ones, the \textrm{FindKey} game captures the
intractability for the PKG to create a decryption key of the same
family as the one obtained by the user during the key generation
protocol. Finally, the \textrm{ComputeNewKey} game models the
infeasibility for users to generate a   key $d_\mathsf{ID}^{(2)}$
outside the family of the legally obtained one
$d_\mathsf{ID}^{(1)}$.

\begin{definition}\label{sec-def-A-IBE} An A-IBE scheme is  deemed secure if
all probabilistic polynomial time (PPT) adversaries have negligible
advantage in the following   games.
\end{definition}
\begin{enumerate}
\item \textbf{The \textrm{IND-ID-CCA} game.} For  any PPT algorithm $\A$, the model considers the following game, where $\lambda \in \N$ is a security parameter:
\begin{center}
\begin{tabular}{l}
\fbox{$\mathbf{Game}_{\A}^{\mathrm{IND\textrm{-}ID\textrm{-}CCA}}(\lambda)$}\\
$(\mathsf{mpk},\mathsf{msk}) \leftarrow \mathbf{Setup}(\lambda)$ \\
$(m_0,m_1,\mathsf{ID^\star},s) \leftarrow \A^{\mathsf{Dec},\mathsf{KG}}(\mathsf{find},\mathsf{mpk})$ \\
\phantom{$(m_0,m_1,\mathsf{ID}) \sample$} $\left\vert
        \begin{array}{l}
          \mathsf{Dec}:  (C,\mathsf{ID}) \\ \quad \dashrightarrow
          \mathbf{Decrypt}\big(\mathsf{mpk},\mathsf{msk},\mathsf{ID},C\big) ; \\
          \mathsf{KG}: \mathsf{ID} \dashrightarrow \mathbf{Keygen}^{(\mathrm{PKG}(\mathsf{msk}),\A)}(\mathsf{mpk},\mathsf{ID}) \\
\hspace{1.3cm} \texttt{//} \quad \mathsf{ID} \neq \mathsf{ID}^\star \\

\end{array}\right.$ \\
$d^\star \sample \{0,1\}$ \\
$C^\star \leftarrow \mathbf{Encrypt}(\mathsf{mpk},\mathsf{ID}^\star,m_{d^\star})$ \\
$d \leftarrow \A^{\mathsf{Dec},\mathsf{KG}}(\mathsf{guess},s,C^\star)$ \\
\phantom{$b \sample$} $\left\vert
        \begin{array}{l}
          \mathsf{Dec}:  (C,\mathsf{ID})  \dashrightarrow
          \mathbf{Decrypt}\big(\mathsf{mpk},\mathsf{msk},\ID,C \big) ; \\
\hspace{2cm} \texttt{//} \quad  (C,\ID) \neq (C^\star,\ID^\star)\\
          \mathsf{KG}: \mathsf{ID} \dashrightarrow \mathbf{Keygen}^{(\mathrm{PKG}(\mathsf{msk}),\A)}(\mathsf{mpk},\mathsf{ID}) \\
\hspace{1.3cm} \texttt{//} \quad \mathsf{ID} \neq \mathsf{ID}^\star \\
\end{array}\right.$ \\
      \texttt{return} $1$ \texttt{if} $d=d^{\star}$  and $0$ \texttt{otherwise}.
    \end{tabular}
  \end{center}
$\mathcal{A}$'s  advantage  is measured by
    $\mathbf{Adv}_{\mathcal{A}}^{\mathrm{CCA}}(\lambda) =
  | \Pr[\mathbf{Game}_{\mathcal{A}}^{\mathrm{CCA}}=1] - {1}/{2} |.$
\end{enumerate}

The   weaker   definition   of   chosen-plaintext  security
(IND-ID-CPA)  is formalized   in  the   same   way   in   \cite{BF2}
but $\A$ is not granted access to a decryption oracle.
\begin{enumerate}
\item[2.] \textbf{The \textrm{FindKey} game.} Let $\A$ be a PPT algorithm. We consider the following game, where $\lambda \in \N$ is a security parameter:
\begin{center}
\begin{tabular}{l}
\fbox{$\mathbf{Game}_{\A}^{\mathrm{FindKey}}(\lambda)$}\\
$(\mathsf{mpk},\ID,s_1) \leftarrow \A(\mathsf{setup},\lambda)$ \\

$(d_\mathsf{ID}^{(1)},s_2) \leftarrow \mathbf{Keygen}^{(\A(s_1),\cdot)}(\mathsf{mpk},\mathsf{ID})$ \\

$d_\mathsf{ID}^{(2)} \leftarrow \A(\mathsf{findkey},s_1,s_2)$ \\

\texttt{return} $1$ \texttt{if} $\mathbf{Trace}(\mathsf{mpk},d_\mathsf{ID}^{(1)}) =  \mathbf{Trace}(\mathsf{mpk},d_\mathsf{ID}^{(2)})$\\
\phantom{{\texttt{return}}} $0$ \texttt{otherwise}.
\end{tabular}
\end{center}
$\mathcal{A}$'s advantage is now defined
  as $\mathbf{Adv}_{\mathcal{A}}^{\mathrm{FindKey}}(\lambda) =
 \Pr[\mathbf{Game}_{\mathcal{A}}^{\mathrm{FindKey}}=1].$
\end{enumerate}
Here, the adversary $\A$ acts as a cheating PKG and the challenger
emulates the honest user. Both parties engage in a key generation
protocol where the challenger obtains a private key for an identity
$\ID$ chosen by $\A$. The latter aims at  producing a private key
corresponding to $\ID$ and belonging to the same family   as the key
obtained by the challenger in   the key generation protocol. Such a
successful dishonest
   PKG
  could disclose user  keys without being caught. \\
\indent Note that, at the beginning of the experiment, $\A$
generates $\mathsf{mpk}$ without revealing the
master key $\mathsf{msk}$ and the challenger runs a sanity check on $\mathsf{mpk}$. \\
 \indent As noted in
\cite{Goyal}, it makes sense to provide $\A$ with a decryption
oracle that undoes ciphertexts using $d_\mathsf{ID}^{(1)}$ (and
could possibly leak information on the latter's family) between
steps 2 and 3 of the game. We call  this enhanced notion FindKey-CCA
(as opposed to the weaker one which we call FindKey-CPA). \\
\indent Finally, in the black-box model, instead of outputting a new
key $d_{\ID}^{(2)}$, the dishonest PKG comes up with a decryption
box $\mathbb{D}$  that correctly decrypts ciphertexts intended for
$\ID$ with non-negligible  probability $\varepsilon$ and wins if the
tracing algorithm   returns ``User'' when run on
$d_\mathsf{ID}^{(1)}$ and with oracle access to $\mathbb{D}$.
\begin{enumerate}
\item[3.] \textbf{The \textrm{ComputeNewKey} game.} For a PPT algorithm $\A$, the model finally considers the following game:
\begin{center}
\begin{tabular}{l}
\fbox{$\mathbf{Game}_{\A}^{\mathrm{ComputeNewKey}}(\lambda)$}\\
$(\mathsf{mpk},\mathsf{msk}) \leftarrow \mathbf{Setup}(\lambda)$ \\
$(d_\mathsf{ID^\star}^{(1)},d_\mathsf{ID^\star}^{(2)},\mathsf{ID^\star}) \leftarrow \A^{\mathsf{KG}}(\mathsf{mpk})$ \\
\phantom{$(d_\mathsf{ID^\star}^{(1)},d_\mathsf{ID^\star}^{(2)},\mathsf{ID^\star})
\sample$} $\left\vert
        \begin{array}{l}
          \mathsf{KG}: \mathsf{ID} \dashrightarrow \mathbf{Keygen}^{(\mathrm{PKG}(\mathsf{msk}),\A)}(\mathsf{mpk},\mathsf{ID}) \\

\end{array}\right.$ \\

\texttt{return} $1$ \texttt{if}
$\mathbf{Trace}(\mathsf{mpk},d_\mathsf{ID^\star}^{(1)}) \neq \perp$
\texttt{and} \\ \phantom{\texttt{return} $0$ if }
$\mathbf{Trace}(\mathsf{mpk},d_\mathsf{ID^\star}^{(2)}) \notin \{\perp,\mathbf{Trace}(\mathsf{mpk},d_\mathsf{ID^\star}^{(1)})\}$ \\
\phantom{{\texttt{return}}} $0$ \texttt{otherwise}.
\end{tabular}
\end{center}
 $\mathcal{A}$'s  advantage  is
    $\mathbf{Adv}_{\mathcal{A}}^{\mathrm{ComputeNewKey}}(\lambda) =
 \Pr[\mathbf{Game}_{\mathcal{A}}^{\mathrm{ComputeNewKey}}=1].$
\end{enumerate}
The ComputeNewKey game involves an adversary interacting with a PKG
in executions of the key generation protocol and obtaining  private
keys associated with {\it distinct} identities of her choosing. The
adversary is declared successful if, for some identity that may have
been queried for key generation, she is able to find {\it two}
private keys from {\it distinct} families. Such a pair   would allow
her to
trick a judge into wrongly believing in a misbehavior of the  PKG. \\
\indent In the black-box scenario, the output of the dishonest user
consist of a key $d_{\ID^\star}^{(1)}$ and a pirate decryption box
$\mathbb{D}$ that yields the correct answer with probability
$\varepsilon$ when provided with a ciphertext encrypted for
$\ID^\star$. In this case, the adversary wins if the output of
$\mathbf{Trace}^{\mathbb{D}}(\mathsf{mpk},d_{\ID^\star}^{(1)})$ is
``PKG''. \\
 \indent In \cite{CHK03}, Canetti, Halevi and Katz suggested
relaxed notions of IND-ID-CCA  and IND-ID-CPA  security where the
adversary has to choose the target identity $\ID^\star$ ahead of
time (even before seeing the master public key $\mathsf{mpk}$). This
relaxed model, called ``selective-ID'' model  (or IND-sID-CCA and
IND-sID-CPA for short), can be naturally extended to the
ComputeNewKey  notion.

\medskip\noindent \textsc{Bilinear Maps and Complexity Assumptions.} We
    use   prime order groups   $(\G,\G_T)$
endowed with an efficiently computable map  $e: \G \times \G
\rightarrow \G_T$ such that:
\begin{enumerate}
\item[1.]   $e(g^a,h^b)=e(g,h)^{ab}$ for any $(g,h)\in \G\times \G$ and $a,b\in \mathbb{Z}$;
\item[2.]   $e(g,h)\neq 1_{\G_T}$ whenever $g,h\neq 1_{\G}$.
\end{enumerate}
In such {\it bilinear groups}, we assume the hardness of the   (now
classical) Decision Bilinear Diffie-Hellman problem that has been
widely used in the recent years.

\begin{definition} \label{DBDH-def}
Let $(\G,\G_T)$ be bilinear   groups of prime  order $p$  and
 $g \in {\mathbb G}$.
The    {\bf Decision  Bilinear Diffie-Hellman Problem} (DBDH) is to
distinguish the distributions of tuples $(g^a,g^b,g^c,e(g,g)^{abc})
$ and $ (g^a,g^b,g^c,e(g,g)^{z}) $ for random values $a,b,c,z
\sample \Z_p^*$. The advantage of a distinguisher $\B$
   is measured by
\begin{multline*}
\mathbf{Adv}^{\mathrm{DBDH}}_{\G,\G_T}(\lambda)=\big|\PR[
a,b,c\sample \Z_p^* :  \B (g^a,g^b,g^c,e(g,g)^{abc})=1]
\\- \PR[ a,b,c,z \sample \Z_p^* :  \B (g^a,g^b,g^c,e(g,g)^{z})=1]
\big|.
\end{multline*}
\end{definition}
For convenience, we   use an equivalent formulation -- called {\it
 modified} DBDH -- of the   problem which is to distinguish
$e(g,g)^{ab/c}$ from random given $(g^a,g^b,g^c)$.

\section{The Basic Scheme}
\label{efficient-DBDH}
The scheme mixes ideas from the ``commutative-blinding'' \cite{BB2}
and ``exponent-inversion'' \cite{SK}   frameworks. Private keys have
the same shape as in commutative-blinding-based schemes
\cite{BB2,BB3,Wat05,BW06}. At the same time, their first element is
a product of two terms, the first one of which is inspired from
Gentry's IBE \cite{Ge06}. \\
\indent According to a   technique   applied in \cite{Goyal},
private keys contain  a   family number $t$ that cannot be tampered
with while remaining  hidden from the PKG. This family number $t$ is
determined by combining two random values $t_0$ and $t_1$
respectively chosen by the user and the PKG in the key generation
protocol. The latter begins with the user  sending a commitment $R$
to $t_0$. Upon receiving $R$, the PKG   turns it into a commitment
to $t_0+t_1$ and uses the modified commitment  to generate a
``blinded'' private key $d_{\ID}'$. The user obtains his final
  key $d_{\ID}$  by ``unblinding'' $d_{\ID}'$ thanks to the
randomness that was used to
compute $R$.  \\
\indent A difference with $\mathpzc{Goyal}$-$\mathpzc{1}$ is that
the key family number is perfectly hidden to the PKG  and the
FindKey-CPA security is unconditional. In the key generation
protocol, the user's first message  is a perfectly hiding commitment
that comes along with a witness-indistinguishable (WI) proof of
knowledge of its opening. In   $\mathpzc{Goyal}$-$\mathpzc{1}$,
users rather send a deterministic (and thus non-statistically
hiding) commitment and knowledge of the underlying value must be
proven in zero-knowledge because a proof of knowledge of a discrete
logarithm must be simulated (by rewinding the adversary) in the
proof of FindKey-CPA security. In the present scheme,  the latter
does not rely on a specific assumption and we do not need to
simulate knowing the solution of a particular problem instance.
Therefore, we can dispense with perfectly ZK proofs and settle for a
more efficient 3-move WI proof (such as Okamoto's variant
\cite{Ok92} of Schnorr \cite{Sch})
  whereas 4 rounds are needed
  using   zero-knowledge proofs of knowledge.

\subsection{Description}

\begin{description}
\item[Setup:]
given $\lambda \in \mathbb{N}$, the PKG selects bilinear groups
$(\G,\G_T)$ of prime  order $p>2^\lambda$  with a random generator
$g \sample \G$. It chooses $h,Y,Z \sample \G$ and $x  \sample
\Z_p^*$ at random. It defines its master key as $\mathsf{msk}:=x$
and the
master public key is chosen as $\mathsf{mpk}:=(X=g^x,Y,Z,h)$.    \\
\vspace{-0.3 cm}.
\item[Keygen$^{(\mathrm{PKG},\mathsf{U})~}$:] to obtain a private key for his identity $\ID$, a user $\U$ interacts with the PKG in the following
  key generation protocol. \\
\vspace{-0.3 cm}
\begin{itemize}
\item[1.] The user $\U$ draws $t_0, \theta \sample \Z_p^*$,
provides the PKG with a  commitment $R=h^{t_0} \cdot X^\theta$  and
also gives
  an interactive   witness
indistinguishable proof of knowledge of the pair  $(t_0,\theta)$, which he retains for later use. \\
\vspace{-0.3 cm}
\item[2.] The PKG   outputs $\bot$ if the proof of knowledge fails to verify. Otherwise, it picks $r',t_1 \sample \Z_p^*$ and returns
\begin{eqnarray} \label{priv-key-1}
d_{\ID}'=(d_1',d_2',d_3')=\Big( ( Y \cdot R   \cdot h^{t_1})^{1/x}
\cdot (g^\ID \cdot Z)^{r'},~X^{r'},~t_1 \Big).
\end{eqnarray}
\item[3.]   $\U$ picks $r'' \sample \Z_p^*$ and  computes
 $d_{\ID}=({d_1'}/g^{\theta} \cdot (g^{\ID} \cdot Z)^{r''},~d_2' \cdot X^{r''} ,~{d_3'}+t_0)$
which should equal
\begin{eqnarray} \label{priv-key-2}
d_{\ID}=({d_1},{d_2},{d_3})=\Big( ( Y \cdot h^{t_0+t_1})^{1/x} \cdot
(g^\ID \cdot Z)^{r},~X^{r},~t_0+t_1  \Big)
\end{eqnarray}
where $r=r'+r''$. Then, $\U$ checks whether $d_{\ID}$  satisfies the
relation
\begin{eqnarray} \label{valid-priv-1}
e(d_1,X)&=& e(Y,g) \cdot e(h,g)^{d_3} \cdot e(g^\ID \cdot Z,d_2).
\end{eqnarray}
If so, he sets his private key as $d_{\ID}$ and the latter belongs
to the family of decryption keys identified by $n_F=d_3=t_0+t_1$. He outputs $\bot$ otherwise. \\
\vspace{-0.3 cm}
\end{itemize}
\item[Encrypt:] to encrypt   $m \in \G_T$ given $\mathsf{mpk}$ and $\ID$, choose $s \sample \Z_p^*$ and compute \\
\vspace{-0.2 cm}
$$ C= \big( C_1,C_2,C_3,C_4  \big)= \Big(  X^{s}    ,~ (g^\ID \cdot Z)^{s},~e(g,h)^s   ,~ m \cdot e(g,Y)^s    \Big). $$
\item[Decrypt:] given $C= \big(C_1,C_2,C_3,C_4  \big) $ and $d_{\ID}=({d_1},{d_2},{d_3})$,
 compute
 \begin{eqnarray} \label{decrypt}
   m=C_4 \cdot \Big( \frac{e(C_1,d_1)}{e(C_2,d_2) \cdot C_3^{d_3}}
   \Big)^{-1}
 \end{eqnarray}
 \item[Trace:] given a purported private key $d_{\ID}=({d_1},{d_2},{d_3})$ and an identity $\ID$, check the validity of $d_{\ID}$ w.r.t. $\ID$
 using relation  (\ref{valid-priv-1}). If
 valid, $d_{\ID}$ is declared as a member of the family identified
 by $n_F=d_3$.
\end{description}
The correctness of the scheme   follows from the fact that
well-formed private keys always satisfy relation
(\ref{valid-priv-1}). By raising both members of
(\ref{valid-priv-1}) to the power $s\in \Z_p^*$, we see that the
quotient of pairings
 in (\ref{decrypt}) actually equals $e(g,Y)^s$. \\
\indent The scheme features about the same efficiency as classical
  IBE schemes derived from the commutative-blinding
framework \cite{BB2}. Encryption demands no pairing calculation
since $e(g,h)$ and $e(g,Y)$ can both be cached as part of the system
parameters. Decryption requires to compute a quotient of two
pairings  which is significantly faster than two
independent pairing evaluations when optimized in the same way as modular multi-exponentiations. \\
\indent In comparison with the most efficient  standard model
scheme based on the same assumption (which is currently  the first
scheme of \cite{BB2}), the only overhead is a slightly longer
ciphertext and an extra exponentiation in $\G_T$ at both ends.

\subsection{Security} \label{sec-first-scheme}
  \noindent \textsc{Selective-ID Security.}  We first
prove the IND-sID-CPA security under the    modified  DBDH
assumption (mDBDH).
\begin{theorem} \label{IND-ID-CPA-proof}
The scheme is IND-sID-CPA under the mDBDH assumption.
\end{theorem}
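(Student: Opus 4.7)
The plan is to reduce mDBDH to IND-sID-CPA in the familiar Boneh-Boyen style, embedding the challenge $(g^a,g^b,g^c,T)$ into the challenge ciphertext. After the adversary $\A$ commits to its target identity $\ID^\star$, the simulator $\B$ samples $z,\eta\sample\Z_p^*$ and programs the master public key as $X:=g^c$, $Y:=g^b$, $h:=X^\eta$ and $Z:=g^{-\ID^\star}\cdot X^z$, so that the implicit master key is $x=c$. Because $g^{\ID^\star}\cdot Z=X^z$ and $h=X^\eta$, taking the encryption randomness to be $s:=a/c$ yields $X^s=g^a$, $(g^{\ID^\star}\cdot Z)^s=(g^a)^z$ and $e(g,h)^s=e(g,g^a)^\eta$, all computable from the challenge, while $e(g,Y)^s$ coincides with the mDBDH target $e(g,g)^{ab/c}$. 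Accordingly, $\B$ returns
$$C^\star = \bigl(g^a,\ (g^a)^z,\ e(g,g^a)^\eta,\ m_{d^\star}\cdot T\bigr),$$
which is a genuine encryption of $m_{d^\star}$ when $T=e(g,g)^{ab/c}$ and is uniform, hence independent of $d^\star$, when $T$ is random.

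The core technical task is answering the adaptive $\mathsf{KG}$-queries on identities $\ID\neq\ID^\star$ using only $(g^b,g^c)$. Upon receiving the user's message $R=h^{t_0}X^\theta$ with its WI proof of knowledge, $\B$ verifies the proof, samples $t_1,\tilde r\sample\Z_p^*$ and implicitly defines $r':=\tilde r-(b+\log_g R)/\bigl(c(\ID-\ID^\star)\bigr)$. The factorisation $g^\ID\cdot Z=g^{\ID-\ID^\star}\cdot X^z$ arranges for the unknown quantity $(Y\cdot R)^{1/c}=g^{(b+\log_g R)/c}$ to cancel against the $g^{(\ID-\ID^\star)r'}$-component of $(g^\ID\cdot Z)^{r'}$, and a short algebraic rearrangement yields
$$ d_1' = g^{\eta t_1+(\ID-\ID^\star)\tilde r}\cdot X^{z\tilde r}\cdot (g^b)^{-z/(\ID-\ID^\star)}\cdot R^{-z/(\ID-\ID^\star)}, \qquad d_2' = X^{\tilde r}\cdot (g^b)^{-1/(\ID-\ID^\star)}\cdot R^{-1/(\ID-\ID^\star)}, \qquad d_3' = t_1. $$
Each factor is expressible from $(g^b,g^c,\eta,z,t_1,\tilde r,R)$ without $\B$ ever needing to know $\log_g R$ or to extract $(t_0,\theta)$; in particular, the simulation is straight-line and never rewinds $\A$, which is consistent with the paper's stated goal of concurrency-friendly key generation. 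Since $\tilde r$ is uniform, so is $r'$, and a direct inspection of (\ref{priv-key-1}) shows that the simulated response is distributed exactly as an honest one.

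Combining the pieces, any successful IND-sID-CPA attacker converts into an mDBDH distinguisher of essentially the same advantage:
$$\mathbf{Adv}^{\text{sID-CPA}}_{\A}(\lambda) \leq \mathbf{Adv}^{\text{mDBDH}}_{\G,\G_T}(\lambda).$$
The only genuinely delicate step, in my view, is the algebraic design of $d_1',d_2'$: one must simultaneously arrange for (i) the $(b+\log_g R)/c$ term to cancel at the level of exponents, and (ii) the resulting group element to be a computable monomial in $g^b,g^c,R$ and the simulator's own randomness, even though the user may embed an arbitrary $\log_g R$ into $R$. Once this cancellation is set up, checking consistency with (\ref{priv-key-1}), the uniformity of the induced $(r',t_1)$, and the correctness of the challenge ciphertext embedding is routine.
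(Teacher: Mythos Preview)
Your reduction is correct, but it uses a different embedding from the paper's. You set $Y:=g^b$ and $h:=X^\eta$, so that $e(g,Y)^s=e(g,g)^{ab/c}$ and the mDBDH target $T$ goes straight into $C_4^\star=m_{d^\star}\cdot T$; when $T$ is random this trivially hides $d^\star$. The paper instead sets $h:=g^b$ and $Y:=X^\gamma\cdot h^{-t^\star}$, places $T$ in $C_3^\star$, and computes $C_4^\star$ via the decryption formula using a genuine private key $(g^\gamma X^{\alpha r^\star},X^{r^\star},t^\star)$ for $\ID^\star$; hiding in the random case then comes from the information-theoretic secrecy of $t^\star$. Both handle non-target key queries by the same Boneh--Boyen cancellation (your $W$ is the paper's $Y\cdot R\cdot h^{t_1}$), and both are straight-line.

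What the paper's choice buys is that the simulator actually \emph{possesses} a well-formed private key for $\ID^\star$ with a specific family number $t^\star$. This is exactly what is exploited later: the remark immediately after the theorem (hash-proof-system style CCA), and especially the proof of Lemma~\ref{ctr-plus} for weak black-box tracing, reuse this same setup, where the simulator must both answer a key-generation query on $\ID^\star$ and build tracing ciphertexts via a key with $d_3=t^\star$. Your embedding, with $Y=g^b$, leaves $Y^{1/x}=g^{b/c}$ unknown to the simulator, so it cannot produce a private key for $\ID^\star$; that is harmless for Theorem~\ref{IND-ID-CPA-proof} itself but would force you to redo the embedding for the subsequent results rather than reuse it.
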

\begin{proof}
We show how a simulator $\B$ can interact with a selective-ID
adversary $\A$ to solve a mDBDH instance $(T_a=g^a,T_b=g^b,T_c=g^c,T
\check e(g,g)^{ab/c}  )$. At the outset of the game, $\A$ announces
the target identity $\ID^\star$. To prepare $\mathsf{mpk}$, $\B$
chooses $\alpha, \gamma, t^* \sample \Z_p^*$ and sets $X=T_c=g^c$,
$h=T_b=g^b$, $Y=X^{\gamma} \cdot h^{-t^*}$,  and $Z=g^{-\ID^\star}
\cdot X^\alpha$. The adversary's view is simulated as follows.
\begin{description}
\item[Queries:] at any time, $\A$ may trigger an execution of the
  key generation protocol for an identity $\ID\neq \ID^\star$
of her choosing. She then supplies an element $R=h^{t_0} \cdot
X^\theta$ along with a WI proof of knowledge of $(t_0,\theta)$. The
simulator $\B$ verifies the proof  but does not need to rewind the
adversary as it can answer the query without knowing $(t_0,\theta)$.
To do so, it picks  $t_1 \sample \Z_p^*$ at random  and defines $W=Y
\cdot R \cdot h^{t_1}$, $d_3'=t_1$. Elements $d_1'$ and $d_2'$ are
generated as
 \begin{eqnarray} \label{PPK}
 (d_1',d_2')=\Big( (g^{\ID} \cdot
        Z)^{r'} \cdot W^{-\frac{\alpha}{\ID-\ID^\star}} ,~
        X^{r'} \cdot W^{-\frac{1}{\ID-\ID^\star}} \Big)
   \end{eqnarray}
using a random   $r' \sample \Z_p^*$. If we set
$\tilde{r}'=r'-\frac{w}{ c (\ID-\ID^\star)}$, where $w=\log_g(W)$,
we observe
      that $(d_1',d_2')$ has the correct distribution since
       \begin{eqnarray*}
         W^{1/c} \cdot (g^\ID \cdot
        Z)^{\tilde{r}'}    & = &    W^{1/c} \cdot
        (g^{\ID-\ID^\star} \cdot X^\alpha)^{\tilde{r}'}  \\ & =&  W^{1/c} \cdot
        ( g^{\ID-\ID^\star} \cdot X^{\alpha})^{r'} \cdot  (g^{\ID-\ID^\star})^{-\frac{w}{c (\ID-\ID^\star)}
        } \cdot X^{-\frac{w \alpha}{c(\ID-\ID^\star)}} \\
        & = & ( g^{\ID} \cdot Z)^{r'} \cdot
        W^{- \frac{\alpha}{\ID-\ID^\star}}
      \end{eqnarray*}
      and $X^{\tilde{r}'} = X^{r'} \cdot
      (g^c)^{-\frac{w}{c(\ID-\ID^\star)}}=X^{r'} \cdot
      W^{-\frac{1}{\ID-\ID^\star}}$. Finally, the ``partial private key'' $(d_1',d_2',d_3')$ is returned to $\A$.  Note that
      the above calculation can be carried out without knowing $w=\log_g(W)$ or
      the representation $(t_0,\theta)$ of $R$ w.r.t. to $(h,X)$ and $\B$  does {\it not}   need to
      rewind $\A$.

\item[Challenge:] when   the first stage is over, $\A$ outputs
  $m_0,m_1 \in \G_T$. At this point, $\B$ picks $r^\star \sample \Z_p^*$ and defines a private key
  $(d_1,d_2,d_3)=(g^\gamma \cdot X^{\alpha r^\star}, X^{r^\star},t^* )$ for the identity $\ID^*$. It flips a fair
coin $d^\star \sample \{0,1\}$ and encrypts $m_{d^\star}$ as
\begin{eqnarray*}
C_1^\star=T_a=g^a \qquad C_2^\star=T_a^{\alpha} \qquad C_3^\star=T
\qquad C_4^\star=m_{d^\star} \cdot
\frac{e(C_1^\star,d_1)}{e(C_2^\star,d_2) \cdot {C_3^\star}^{d_3}}.
\end{eqnarray*}
We   see that $(d_1,d_2,d_3)$ is a valid   key for $\ID^\star$.
Since $g^{\ID^\star} \cdot Z=X^{\alpha}=T_c^{\alpha}$ and $h=g^b$,
$C^\star=(C_1^\star,C_2^\star,C_3^\star,C_4^\star)$ is a valid
encryption of $m_{d^\star}$ (with the   exponent $s=a/c$) if
$T=e(g,g)^{ab/c}$. If $T$ is random, we have $T=e(g,h)^{s'}$ for
some random $s' \in \Z_p^*$ and thus $C_4^\star=m_{d^\star} \cdot
e(Y,g)^{s} \cdot e(g,h)^{(s-s')t^*}$, which means that $m_{d^\star}$
is perfectly hidden since $t^\star$ is independent of $\A$'s view.
\end{description}
As usual, $\B$
  outputs $1$ (meaning that $T=e(g,g)^{ab/c}$) if $\A$
successfully guesses $d'=d^\star$ and $0$ otherwise. \qed
\end{proof}
In the above proof,  the simulator does not   rewind the adversary
at any time. The scheme thus remains IND-sID-CPA in concurrent
environments, where a batch of users may want to simultaneously run
 the key generation
protocol. \\
\indent Also, the simulator knows a valid private key for each
identity. This allows using hash proof systems \cite{CrSh98,CrSh02}
as in \cite{Ge06,KV08} to secure the scheme against
chosen-ciphertext attacks. The   advantage of this approach, as
shown in appendices \ref{cca} and \ref{proof-find-cca}, is
to provide FindKey-CCA security in a white-box setting. \\
\indent Unlike the  $\mathpzc{Goyal}$-$\mathpzc{1}$ scheme, the
basic system provides unconditional FindKey-CPA security: after an
execution of the key generation protocol, even an all powerful PKG
does not have any information on the component $d_3$ that is
eventually part of the private key obtained by the new user.
\begin{theorem} \label{findkey}
In the information theoretic sense, no adversary has an advantage in
the $\mathrm{FindKey}\textrm{-}\mathrm{CPA}$ game.
\end{theorem}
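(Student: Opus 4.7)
The plan is to show that, after the key generation protocol, the family number $n_F = d_3 = t_0 + t_1$ of the user's private key is (up to negligible statistical distance) uniformly distributed in $\Z_p$ conditioned on the adversary's entire view. Since $\mathbf{Trace}$ simply returns $d_3$ on any valid key, this immediately bounds any adversary's winning probability by roughly $1/p$.

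The adversary's view during the protocol contains only two objects that could conceivably depend on $t_0$: the commitment $R = h^{t_0} \cdot X^\theta$ sent in step 1, and the transcript of the interactive WI proof of knowledge of $(t_0,\theta)$. For $R$, I would observe that it is a Pedersen-style commitment in the prime-order group $\G$: since both $h$ and $X$ are generators and $\theta$ is drawn uniformly from $\Z_p^*$, the distribution of $R$ is (statistically close to) uniform over $\G$ independently of $t_0$, and for every candidate $t_0' \in \Z_p$ there is a unique $\theta' \in \Z_p$ with $R = h^{t_0'} \cdot X^{\theta'}$; in particular, $R$ information-theoretically hides $t_0$. For the proof transcript, I would invoke the \emph{perfect} witness-indistinguishability of Okamoto's $\Sigma$-protocol for representation proofs: for any two openings $(t_0,\theta)$ and $(t_0',\theta')$ of the same $R$, the distributions of the prover's messages coincide, so the transcript carries no additional information about which witness is in use. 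Combining the two, the PKG's view is statistically independent of $t_0$.

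Because the adversary picks $t_1$ as a (possibly randomized) function of this view, $d_3 = t_0 + t_1$ is uniform in $\Z_p$ from its perspective, and for any key $d_{\ID}^{(2)}$ the adversary outputs one gets $\mathbf{Trace}(\mathsf{mpk},d_{\ID}^{(2)}) = d_3$ with probability at most $1/(p-1)$, which is negligible in $\lambda$. The main subtlety -- and the only reason the bound is unconditional -- is that the proof of knowledge must be \emph{perfectly} (not merely computationally) witness indistinguishable; this is precisely why the designers chose Okamoto's variant of Schnorr, as foreshadowed in the discussion preceding the scheme description, since its WI property holds without any hardness assumption and does not require rewinding in the reduction.
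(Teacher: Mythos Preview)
Your proposal is correct and follows essentially the same argument as the paper's own proof: both rely on the perfect hiding property of the Pedersen commitment $R=h^{t_0}\cdot X^{\theta}$ together with the perfect witness indistinguishability of Okamoto's representation proof to conclude that $t_0$ (and hence $d_3=t_0+t_1$) is information-theoretically hidden from the PKG. Your write-up is somewhat more explicit---in particular you are careful about the $\Z_p^*$ versus $\Z_p$ sampling and spell out the resulting $1/(p-1)$ bound---but the structure of the argument is identical.
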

\begin{proof}
The proof  directly follows from the perfect  hiding property of
\mbox{Pedersen}'s commitment \cite{Ped91} and the perfect witness
indistinguishability of  the  protocol \cite{Ok92} for proving
knowledge of a discrete logarithm representation. Since the
commitment $R=h^{t_0} \cdot X^{\theta}$ and the proof of knowledge
of $(t_0,\theta)$ perfectly hide $t_0$ to the PKG, all elements of
$\Z_p^*$ are equally likely values of $d_3=t_0+t_1$ as for the last
part of the user's eventual private key. \qed
\end{proof}
The original version of the paper \cite{LV09} describes  a  hybrid
variant of the scheme that provides white-box FindKey-CCA security
using authenticated symmetric encryption in the fashion of
\cite{KD04,SG04,HK07} so as to reject all invalid ciphertexts with
high probability. In this version, we only consider schemes with the
weak black-box traceability property.
\begin{theorem} \label{newkey}
In the selective-ID $\mathrm{ComputeNewKey}$ game, any PPT adversary
has negligible advantage assuming that the Diffie-Hellman assumption
holds.
\end{theorem}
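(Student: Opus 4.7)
The plan is to reduce to a computational Diffie-Hellman-type problem: from a challenge $(g^a, g^b)$, extract $g^{b/a}$. First we would build a simulator $\B$ that embeds $(g^a, g^b)$ in $\mathsf{mpk}$ exactly as the IND-sID-CPA reduction does: sample $\gamma, \alpha, t^* \sample \Z_p^*$ and set $X = g^a$, $h = g^b$, $Y = X^\gamma \cdot h^{-t^*}$, $Z = g^{-\ID^\star} \cdot X^\alpha$. Thus $x = a$ remains unknown to $\B$, while the specific shapes of $Y$ and $Z$ will allow both the simulation of key generation queries and the extraction of $h^{1/x} = g^{b/a}$ from the two forged keys at the end of the game.

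For key generation queries on $\ID \neq \ID^\star$, we would reuse the Boneh-Boyen-style cancellation~(\ref{PPK}) from the proof of Theorem~\ref{IND-ID-CPA-proof}: pick $t_1 \sample \Z_p^*$, form $W = Y \cdot R \cdot h^{t_1}$, and return the partial key via the same formulas---no rewinding needed. The delicate case is a query on $\ID^\star$ itself, where $g^{\ID^\star} \cdot Z = X^\alpha$ collapses and the partitioning trick fails. Since the game requires queried identities to be distinct, this can happen at most once, so we can afford a single rewinding of the WI proof of knowledge attached to $R$ to extract its opening $(t_0, \theta)$. Then $\B$ sets $t_1 := t^* - t_0 \bmod p$, which forces $Y \cdot R \cdot h^{t_1} = X^{\gamma + \theta}$ and makes $(Y R h^{t_1})^{1/x} = g^{\gamma + \theta}$ computable without $a$; the remainder of $(d_1', d_2', t_1)$ is produced exactly as the honest PKG would, and the induced family number $d_3 = t_0 + t_1 = t^*$ is uniformly distributed in $\Z_p^*$ because $t^*$ is sampled uniformly at setup.

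Given $\A$'s output $(d_1^{(i)}, d_2^{(i)}, d_3^{(i)})_{i=1,2}$ for $\ID^\star$ with $d_3^{(1)} \neq d_3^{(2)}$, both keys must satisfy~(\ref{valid-priv-1}). Dividing the two validity equations and using $g^{\ID^\star} \cdot Z = X^\alpha$ yields, after a direct manipulation,
$$
\big(d_1^{(1)}/d_1^{(2)}\big) \cdot \big(d_2^{(1)}/d_2^{(2)}\big)^{-\alpha} \;=\; h^{(d_3^{(1)} - d_3^{(2)})/x}.
$$
Raising to the $1/(d_3^{(1)} - d_3^{(2)})$ power---well-defined since the exponent is nonzero modulo $p$---then extracts $h^{1/x} = g^{b/a}$, which is $\B$'s answer to the Diffie-Hellman challenge.

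The hardest step will be the key generation query on $\ID^\star$ itself: the Boneh-Boyen partitioning used in all other queries is unavailable there, so the simulator must rewind the WI proof of knowledge to recover $t_0$ and cancel the unknown $h$-exponent via the choice $t_1 = t^* - t_0$. This crucially exploits the two-round structure of the key generation protocol---namely, that $t_1$ is PKG-chosen \emph{after} $R$ has been seen---so that $\B$ can force $d_3$ to a uniformly distributed value of its liking. Since only one rewinding is required across the entire reduction, the concurrency benefit highlighted earlier in the paper is preserved.
\qed
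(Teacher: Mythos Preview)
Your proof is correct and follows essentially the same approach as the paper's: the same embedding $X=g^a$, $h=g^b$, $Y=X^{\gamma}h^{-t^*}$, $Z=g^{-\ID^\star}X^{\alpha}$ (the paper writes $t_1'$ for your $t^*$), the same handling of $\ID\neq\ID^\star$ via relation~(\ref{PPK}), the same single rewind on $\ID^\star$ to extract $(t_0,\theta)$ and set $t_1=t^*-t_0$, and the same extraction formula for $h^{1/x}$ (your expression $(d_1^{(1)}/d_1^{(2)})\cdot(d_2^{(1)}/d_2^{(2)})^{-\alpha}$ is identical to the paper's quotient after regrouping). Your closing remark on concurrency also mirrors the paper's observation following the proof.
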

\begin{proof}
For simplicity, we prove the result using an equivalent formulation
of the Diffie-Hellman problem which is to find $h^{1/x}$   given $(g,h,X=g^x)$. \\
\indent At the outset of the game, $\A$ declares the identity
$\ID^\star$ for which she aims at   finding two   private keys
$d_{\ID^\star}^{(1)}$, $d_{\ID^\star}^{(2)}$ comprising distinct
values of $d_3=t$. Then, the simulator $\B$ prepares   the PKG's
public key as follows. Elements $h$ and $X$ are  taken from the
modified Diffie-Hellman instance $(g,h,X)$. As in the proof of
theorem \ref{IND-ID-CPA-proof}, $\B$   defines $Z=g^{-\ID^\star}
\cdot X^{\alpha}$ for a randomly chosen $\alpha \sample \Z_p^*$. To
define $Y$, it chooses   random values $\gamma,t_1' \sample \Z_p^*$
and
sets $Y=X^{\gamma} \cdot h^{-t_1'}$.

\begin{description}
\item[Queries:] in this game, $\A$ is allowed to query  executions of the key
generation protocol w.r.t. any identity, including $\ID^\star$. The only requirement is that queried identities be distinct.
\begin{itemize}
\item[-] For an identity $\ID \neq \ID^\star$, $\B$ can proceed {\it exactly}
as suggested by relation (\ref{PPK}) in the proof of theorem
\ref{IND-ID-CPA-proof} and does not need to rewind $\A$.
\item[-] When $\ID = \ID^\star$, $\B$ conducts the following steps.
When $\A$ supplies a group element $R=h^{t_0} \cdot X^\theta$ along
with a WI proof of knowledge of $(t_0,\theta)$, $\B$ uses the
knowledge extractor of the proof of knowledge that allows extracting
  a representation $(t_0,\theta)$ of $R$ by rewinding $\A$. Next,
  $\B$ computes $t_1=t_1'-t_0$
picks $r \sample \Z_p^*$ and returns
\begin{eqnarray} \label{PPK-ID}
(d_1',d_2',d_3' )=\big( g^{\gamma + \theta} \cdot (g^{\ID^\star}
\cdot Z)^{ r}, ~X^{r},~  t_1    \big).
\end{eqnarray} To see that
the above tuple has the appropriate shape, we note that
\begin{eqnarray*}
(Y \cdot R  \cdot h^{t_1})^{1/x}  = (Y \cdot h^{t_0+t_1} \cdot
X^\theta)^{1/x}   =    (Y \cdot h^{t_1'} \cdot X^\theta )^{1/x}  =
  g^{\gamma +  \theta}.
\end{eqnarray*}
\end{itemize}
\item[Output:] upon its termination, $\A$ is expected to come up with
 distinct valid private keys
$d_{\ID^\star}^{(1)}=(d_1^{(1)},d_2^{(1)},d_3^{(1)})$ and
$d_{\ID^\star}^{(2)}=(d_1^{(2)},d_2^{(2)},d_3^{(2)})$, such that
$t=d_3^{(1)} \neq d_3^{(2)}=t'$, for the identity $\ID^\star$. Given
that we must have
\begin{eqnarray*}
& & d_1^{(1)}= (Y \cdot h^{t})^{1/x} \cdot X^{\alpha r}  \qquad
\quad  ~~
d_2^{(1)}=X^{r}  \\
& & d_1^{(2)}= (Y \cdot h^{t'})^{1/x} \cdot X^{\alpha r'}  \qquad
\quad  d_2^{(2)}=X^{r'}
\end{eqnarray*}
for some values $r,r'   \in \Z_p$, $\B$ can extract  $ h^{1/x} =
\Big( \frac{ d_1^{(1)} / {d_2^{(1)}}^{\alpha}  } {
d_1^{(2)}/{d_2^{(2)}}^\alpha } \Big)^{\frac{1}{t-t'}} .  $
\qed
\end{description}
\end{proof}
  We  note that, in the above proof, the simulator does not have
  to rewind all executions of the key generation protocol but only
  one, when the adversary asks for a private key corresponding to the   target identity
$\ID^\star$ (recall that all queries involve distinct identities).
Given that the number of rewinds is constant, the proof still goes
through when the simulator is presented with many concurrent key
generation queries. If other executions of the protocol (that
necessarily involve identities $\ID\neq \ID^\star$) are nested
within the one
 being rewinded when dealing with   $\ID^\star$, the simulator simply runs them
as an honest verifier would in the proof of knowledge and calculates
the PKG's output as per relation (\ref{PPK}) in the proof of theorem
\ref{IND-ID-CPA-proof}. Thus, the initial rewind does not trigger
any other one and the simulation still takes polynomial time in a
concurrent
setting.

\medskip\noindent \textsc{Adaptive-ID Security.} The scheme can obviously be
made IND-ID-CPA if Waters' ``hash function'' $F(\ID)=
u'\prod_{j=1}^n u_i^{i_j}$ -- where $\ID=i_1\ldots i_n \in
\{0,1\}^n$ and $(u',u_1,\ldots,u_n) \in \G^{n+1}$ is part of
$\mathsf{mpk}$ -- supersedes the Boneh-Boyen identity hashing
$F(\ID)=g^{\ID} \cdot Z$. The number theoretic hash function $F$ is
chosen so as to equal $F(\ID)=g^{J_1(\ID)} \cdot X^{J_2(\ID)}$ for
integer-valued functions $J_1,J_2$ that are computable by the
simulator. The security proof relies on the fact that $J_1$ is small
in absolute value and cancels with non-negligible probability
proportional to
$1/q(n+1)$, where $q$ is the number of key generation queries. \\
\indent When extending the proof of theorem \ref{newkey} to the
adaptive setting, an adversary with advantage $\varepsilon$ allows
solving CDH with probability $\varepsilon /8q^2 (n+1)$. The reason
is that the simulator  has to guess beforehand which key generation
query will involve the target identity $\ID^\star$. If $\ID^\star$
is expected to appear in the $j^{\textrm{th}}$ query, when the
latter is made, $\B$ rewinds $\A$ to extract $(t_0,\theta)$ and uses
the special value $t_1'$ to answer the query as per
$(\ref{PPK-ID})$. With probability $1/q$, $\B$ is fortunate when
choosing $j \sample \{1,\ldots,q\}$ at the beginning and, again,
$J_1(\ID^\star)$ happens to cancel with probability $1 /8q (n+1)$
for the target identity.

\section{Weak Black-Box Traceability} \label{bb-tracing}
\label{weak-BB} Theorem \ref{newkey} showed the infeasibility for
users to compute another key from a different family given their
private key. In these regards, a decryption key implements a
``$1$-copyrighted function'' -- in the terminology of
\cite{NSS99,KY02} -- for the matching identity. Using this property
and the perfect white-box FindKey-CPA security, we describe a
black-box tracing mechanism that protects the user from a dishonest
PKG as long as the latter is
withheld access to a decryption oracle. \\
\indent The tracing strategy is close to the one used by Kiayias and
Yung \cite{KY02} in $2$-user traitor tracing schemes, where the
tracer determines which one out of two subscribers produced a pirate
decoder. In our setting, one rather has to decide whether an
$\varepsilon$-useful decryption device stems from the PKG or the
user himself.
\begin{description}
 \item[Trace$^{\mathbb{D}}(\mathsf{mpk},d_{\ID},\varepsilon)$:] given a well-formed private key $d_{\ID}=({d_1},{d_2},{d_3})$ belonging to a user of identity $\ID$
 and oracle access to a decoder $\mathbb{D}$ that decrypts ciphertexts encrypted for $\ID$ with
 probability $\varepsilon$, conduct the following steps.
 \begin{itemize}
 \item[a.]  Initialize a counter $ctr \leftarrow 0$ and repeat the next steps $L=16\lambda / \varepsilon $ times:
 \\ \vspace{-0.3 cm}
 \begin{itemize}
\item[1.] Choose distinct exponents  $s,s' \sample \Z_p^*$ at random, compute $C_1=X^s$, $C_2=(g^\ID
\cdot Z)^s$ and $C_{3}=e(g,h)^{s'}$.
\item[2.] Calculate $ C_4=m \cdot    e(C_1,d_1) / \big(e(C_2,d_2)  \cdot
C_{3}^{d_3} \big)
   $ for a randomly chosen message $m \in \G_T$.
\item[3.] Feed the decryption device $\mathbb{D}$ with
$(C_1,C_2,C_{3},C_4)$. If $\mathbb{D}$ outputs
$m' \in \G_T$ such that $m' = m$, increment $ctr$. \\
\vspace{-0.3 cm}
 \end{itemize}
 \item[b.] If $ctr=0$, incriminate the PKG. Otherwise,  incriminate the user. \\
 \vspace{-0.3 cm}
 \end{itemize}
\end{description}
The soundness of this algorithm is proved using a similar technique
to \cite{ADMNPS}. To ensure the independence of iterations,  we
assume (as in \cite{ADMNPS}) that pirate devices are stateless, or
resettable, and do not retain information from prior queries: each
decryption query is answered as if it were the first one and, in
particular, the pirate device cannot self-destruct.
\begin{theorem} \label{black-box-newkey}
Under the mDBDH assumption, dishonest users have negligible chance
to produce a decryption device $\mathbb{D}$  that makes the tracing
algorithm incriminate the PKG in the selective-ID ComputeNewKey
game.
\end{theorem}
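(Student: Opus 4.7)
The plan is to reduce the adversary's winning probability in the selective-ID black-box $\mathrm{ComputeNewKey}$ game to the mDBDH assumption via a hybrid argument over the $L=16\lambda/\varepsilon$ iterations of the tracing algorithm. Recall that tracing incriminates the PKG precisely when $\mathbb{D}$ fails to correctly decrypt every one of the $L$ malformed ciphertexts; I would therefore consider hybrid experiments $H_0,\ldots,H_L$ where, in $H_j$, the first $j$ tracing ciphertexts are replaced by genuinely well-formed fresh encryptions of random messages (so that $C_3=e(g,h)^s$ is consistent with $C_1=X^s$) while the remaining $L-j$ ciphertexts are generated as in the actual tracing algorithm, with an independent exponent $s'$ in $C_3$. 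Writing $q_j$ for the probability that $\mathbb{D}$ fails on all $L$ ciphertexts in $H_j$, the adversary's success probability is exactly $q_0$.

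The setup of the mDBDH reduction would mirror that of Theorem~\ref{IND-ID-CPA-proof}: given a challenge $(g^a,g^b,g^c,T)$, I set $X=g^c$, $h=g^b$, $Y=X^{\gamma}h^{-t^\ast}$ and $Z=g^{-\ID^\star}X^\alpha$ for freshly chosen $\gamma,t^\ast,\alpha\in \Z_p^*$. Key generation queries for identities $\ID\neq \ID^\star$ are answered without rewinding, exactly as in relation~(\ref{PPK}). For the at-most-one query on $\ID=\ID^\star$ that the $\mathrm{ComputeNewKey}$ game permits, I would apply the knowledge extractor of the witness-indistinguishable proof to recover $(t_0,\theta)$ and reply with $t_1=t^\ast-t_0$, so that the resulting private key has the special form $(g^{\gamma+\theta}\cdot X^{\alpha r'},X^{r'},t_1)$ used in the proof of Theorem~\ref{newkey}. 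Since all queried identities are distinct this rewinding is triggered at most once, so the simulation remains polynomial even with concurrent key-generation queries.

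The key step is then to bound $|q_j-q_{j+1}|$ by the mDBDH advantage for every $j$. To this end I would plant the challenge into the $(j+1)$-th tracing ciphertext by taking $C_1=g^a$, $C_2=g^{\alpha a}$, $C_3=T$ and $C_4=m\cdot e(C_1,d_1)/(e(C_2,d_2)\cdot C_3^{d_3})$, using the key $d_{\ID^\star}^{(1)}$ returned by $\A$ (which, by assumption, satisfies~(\ref{valid-priv-1})). When $T=e(g,g)^{ab/c}$ this ciphertext is a well-formed encryption with implicit exponent $s=a/c$, while for uniformly random $T$ it has exactly the distribution of a malformed tracing ciphertext with an independent $s'$. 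The earlier positions $i\leq j$ are populated with honestly generated well-formed encryptions computed directly from $\mathsf{mpk}$, and the later positions $i>j+1$ with fresh malformed tracing ciphertexts computed from $d_{\ID^\star}^{(1)}$. Outputting $1$ exactly when $\mathbb{D}$ fails on all $L$ ciphertexts then yields an mDBDH distinguisher of advantage $|q_j-q_{j+1}|$.

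Finally, in the extreme hybrid $H_L$ every tracing ciphertext is an independent fresh well-formed encryption of a uniformly random message for $\ID^\star$, so the $\varepsilon$-usefulness of $\mathbb{D}$ together with its statelessness/resettability yield $q_L\leq (1-\varepsilon)^L\leq e^{-16\lambda}$. A telescoping argument then gives $q_0\leq e^{-16\lambda}+L\cdot \mathbf{Adv}^{\mathrm{mDBDH}}_{\G,\G_T}(\lambda)$, which is negligible. The main obstacle I foresee lies in the hybrid step itself: the $C_4$ component of every malformed ciphertext depends on the user-supplied key $d_{\ID^\star}^{(1)}$, and one has to check carefully that populating the surrounding positions with keys computed from $d_{\ID^\star}^{(1)}$ (together with the challenge embedding at the $(j+1)$-th position) truly makes the views of $H_j$ and $H_{j+1}$ identical up to the $T$-component, so that the mDBDH distinguisher's advantage really captures $|q_j-q_{j+1}|$. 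Once this is verified, the telescoping is routine.
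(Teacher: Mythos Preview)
Your argument is correct but takes a genuinely different route from the paper. The paper isolates a single-iteration lemma (Lemma~\ref{ctr-plus}): one mDBDH reduction shows that the probability $p_1$ that any fixed iteration increments $ctr$ satisfies $p_1\geq\varepsilon-\mathbf{Adv}^{\mathrm{mDBDH}}(\lambda)$; then, using the statelessness of $\mathbb{D}$, the $L$ iterations are i.i.d.\ Bernoulli trials with mean $p_1\geq\varepsilon/2$, and a Chernoff bound with $\omega=1/2$ yields $\PR[ctr=0]<e^{-\lambda}$. Your hybrid instead walks across all $L$ positions, replacing the concentration inequality by the elementary estimate $(1-\varepsilon)^L\leq e^{-16\lambda}$ in the terminal hybrid together with a telescope. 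This trades the single Chernoff step for $L$ separate mDBDH reductions, so your final bound carries the additive term $L\cdot\mathbf{Adv}^{\mathrm{mDBDH}}(\lambda)=(16\lambda/\varepsilon)\cdot\mathbf{Adv}^{\mathrm{mDBDH}}(\lambda)$, a looser reduction than the paper's (though still negligible since $\varepsilon$ is non-negligible). On the other hand your approach is more elementary and makes the dependence on the mDBDH advantage fully explicit, while the paper's decomposition singles out the per-iteration quantity $p_1$ in a reusable way. The obstacle you flag is not a real problem: since $d_{\ID^\star}^{(1)}$ satisfies relation~(\ref{valid-priv-1}), the quotient $e(C_1,d_1)/\bigl(e(C_2,d_2)\cdot C_3^{d_3}\bigr)$ collapses to $e(g,Y)^s$ whenever $(C_1,C_2,C_3)$ is well-formed, regardless of which valid key is used, so the challenge position interpolates exactly between $H_j$ and $H_{j+1}$ according to whether $T$ is real or random.
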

\begin{proof}
The tracing algorithm points to  the PKG if it ends up with $ctr=0$.
The variable $ctr$ can be seen as the sum of $L=16\lambda /
\varepsilon$ independent random variables $X_i \in \{0,1\}$ having
the same expected value $p_1$. We have $\mu=\mathbf{E}[ctr]=Lp_1$.
The Chernoff bound tells us that, for any real number $\omega$ such
that $0\leq \omega \leq 1$,  $\PR[ ctr < (1 -\omega) \mu ]<
\exp(-\mu \omega^2/2)$. Under the mDBDH assumption, we certainly
have $\mathbf{Adv}^{\textrm{mDBDH}}(\lambda) \leq \varepsilon/2$
(since $\varepsilon/2$ is presumably non-negligible). Lemma
\ref{ctr-plus} shows that   $p_1 \geq \varepsilon -
\mathbf{Adv}^{\textrm{mDBDH}}(\lambda)$, which implies
\begin{eqnarray} \label{chernoff}
 \mu=Lp_1 \geq L(\varepsilon - \mathbf{Adv}^{\textrm{mDBDH}}(\lambda)
) \geq \frac{L \varepsilon}{2} = 8 \lambda  .
\end{eqnarray} With
$\omega=1/2$, the Chernoff bound guarantees  that $$  \PR[ctr < 1 ]
< \PR[ctr < 4 \lambda ] = \PR [ctr < \mu/2]< \exp(-\mu/8)
=\exp(-\lambda).
$$ \qed
\end{proof}
\begin{lemma} \label{ctr-plus}
In the selective-ID ComputeNewKey game, if $\mathbb{D}$ correctly
opens well-formed ciphertexts with probability $\varepsilon$, the
probability that an iteration of the tracing algorithm increases
$ctr$ is at least $p_1 \geq \varepsilon -
\mathbf{Adv}^{\mathrm{mDBDH}}(\lambda)$.
\end{lemma}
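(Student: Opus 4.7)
I plan to reduce the lemma directly to mDBDH by showing that, from $\mathbb{D}$'s vantage point, a tracing-style query $(C_1,C_2,C_3,C_4)$ with independent $s,s'$ is computationally indistinguishable from an honest encryption in which $s' = s$. Since $\mathbb{D}$ outputs the correct plaintext on an honest encryption with probability $\varepsilon$ by assumption, any gap $\varepsilon - p_1$ immediately yields an mDBDH distinguisher, giving the bound $p_1 \geq \varepsilon - \mathbf{Adv}^{\mathrm{mDBDH}}(\lambda)$.

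Concretely, given an mDBDH instance $(T_a, T_b, T_c, T)$, the distinguisher $\B$ plays the challenger in the selective-ID ComputeNewKey game with $\A$, setting up $\mathsf{mpk}$ exactly as in the proof of Theorem \ref{newkey}: $X = T_c$, $h = T_b$, $Z = g^{-\ID^\star} \cdot X^{\alpha}$ and $Y = X^{\gamma} \cdot h^{-t_1'}$, so that $\B$ can answer every key-generation query (using a single rewind on the query for $\ID^\star$) without ever knowing $x = \log_g X$. Eventually $\A$ outputs its decoder $\mathbb{D}$ and the legitimate user's key $d_{\ID^\star}^{(1)} = (d_1,d_2,d_3)$, at which point $\B$ embeds the challenge into a single test ciphertext by picking $m \sample \G_T$ and setting
\[
C_1 = T_a, \qquad C_2 = T_a^{\alpha}, \qquad C_3 = T, \qquad C_4 = m \cdot \frac{e(C_1, d_1)}{e(C_2, d_2) \cdot C_3^{d_3}},
\]
finally outputting $1$ iff $\mathbb{D}(C_1,C_2,C_3,C_4) = m$.

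To see that this works, note that $g^{\ID^\star} \cdot Z = X^{\alpha}$ and $h = g^b$, so the implicit encryption randomness is $s = a/c$, which always gives $(C_1, C_2) = (X^s, (g^{\ID^\star} \cdot Z)^s)$. When $T = e(g,g)^{ab/c}$, we have $C_3 = e(g,h)^s$ and the tuple is a valid encryption of $m$, so $\mathbb{D}$ returns $m$ with probability exactly $\varepsilon$; when $T$ is uniform, $s' = \log_{e(g,h)}T$ is uniform and independent of $s$, so the tuple is distributed exactly as a tracing iteration (modulo the negligible event $s' = s$), and $\mathbb{D}$ returns $m$ with probability $p_1$. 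The distinguisher's advantage is therefore $|\varepsilon - p_1|$, which is bounded by $\mathbf{Adv}^{\mathrm{mDBDH}}(\lambda)$, yielding the claimed bound. The point requiring care is that $C_4$ must be computed from the user's actual key $d_{\ID^\star}^{(1)}$ rather than a simulator-chosen key, so that the tracing-iteration distribution is reproduced faithfully when $T$ is random; this causes no issue since $\B$ holds $d_{\ID^\star}^{(1)}$ by the end of the ComputeNewKey simulation, and invoking the key-generation simulator of Theorem \ref{newkey} is what allows the whole reduction to proceed without $\mathsf{msk}$.
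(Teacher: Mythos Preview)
Your proposal is correct and follows essentially the same route as the paper: embed the mDBDH challenge into a single test ciphertext after simulating the ComputeNewKey game via the setup and key-generation strategy of Theorem~\ref{newkey}, then compare $\mathbb{D}$'s success probability in the real ($T=e(g,g)^{ab/c}$) versus random case. Your explicit insistence on computing $C_4$ from the adversary-supplied key $d_{\ID^\star}^{(1)}$ rather than from a simulator-chosen key with $d_3=t^\star$ is a small clarification relative to the paper's write-up, but the underlying argument is the same.
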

\begin{proof}
We consider two games called Game$_0$ and Game$_1$ where the
adversary $\A$ is faced with a ComputeNewKey challenger $\B$ and
produces a decryption device $\mathbb{D}$ which is provided with
ciphertexts during a tracing stage. In Game$_0$, $\mathbb{D}$ is
given a properly formed encryption of some plaintext $m$ whereas it
is given a ciphertext $C$ where $C_{3}$ has been changed in
Game$_1$. In either case, we call $p_i$ (with $i\in \{0,1\}$) the
probability that $\mathbb{D}$ returns the  plaintext $m$ chosen by $\B$. \\
\indent In the beginning of Game$_0$,
 $\A$ chooses a target identity
$\ID^\star$ and  $\B$ defines the system parameters as $X=g^c$,
$h=g^b$, $Y=X^{\gamma} \cdot h^{-t^\star}$ and $Z=g^{-\ID^\star}
\cdot X^\alpha$ for random $\alpha,\gamma,t^\star \sample \Z_p^*$.
Then, $\A$ starts making key generation queries that are treated
using the same technique as in the proof of theorem \ref{newkey}.
Again,  $\B$ only has to  rewind the WI proof when the query
pertains to   $\ID^\star$.  \\ \indent At the end of the game, $\A$
outputs a decryption box $\mathbb{D}$ that correctly decrypts a
fraction $\varepsilon$ of ciphertexts. Then, $\B$ constructs a
ciphertext $C$ as
$$C_1=g^a, \qquad C_2=(g^a)^\alpha , \qquad C_{3 }=T , \qquad C_4=m \cdot   \frac{ e(C_1,d_1) }{
 e(C_2,d_2)  \cdot   C_{3 }^{t^\star} }       $$ where $T \in \G_T$.

In Game$_0$, $\B$ sets $T=e(g,g)^{ab/c}$
so that we have $C_{3 }=e(g,h )^{a/c}$ and $C$ is a valid ciphertext
  (for the encryption exponent $s=a/c$)  that $\mathbb{D}$ correctly decrypts with probability $\varepsilon$. In this case, $\mathbb{D}$
 thus outputs $m'=m \in \G_T$   with probability   $p_0=\varepsilon$.    In Game$_1$,  $T$ is chosen as a random element of
 $\G_T$ and
$C=(C_1,C_2,C_{3 }, C_4)$ has the distribution of a ciphertext
produced by the tracing stage and $\mathbb{D}$ must output a
plaintext $m' = m$ with probability $p_1$. It is clear that
$|p_0-p_1| \leq \mathbf{Adv}^{\textrm{mDBDH}}(\lambda) $ and we thus
have $p_1 \geq \varepsilon -
\mathbf{Adv}^{\textrm{mDBDH}}(\lambda)$. \qed
\end{proof}
\noindent The proofs of theorem \ref{black-box-newkey} and lemma
\ref{ctr-plus}   extend  to the adaptive-ID setting using the same
arguments as in the last paragraph of section \ref{efficient-DBDH}.
     As   mentioned in the remark at the end of section
\ref{sec-first-scheme} in section \ref{efficient-DBDH}, proving
adaptive-ID white-box security against dishonest users incurs a
quadratic degradation factor in the number of adversarial queries.
When transposing the proof of lemma \ref{ctr-plus} to the
adaptive-ID model, we are faced with the same  quadratic degradation
in $q$ and the bound on $p_1$ becomes $p_1 \geq \varepsilon - 8
\cdot q^2 (n+1) \cdot \mathbf{Adv}^{\mathrm{mDBDH}}(\lambda)$. The
proof of theorem \ref{black-box-newkey} goes through as long as
$\varepsilon \geq 16 \cdot q^2 \cdot (n+1) \cdot
\mathbf{Adv}^{\mathrm{mDBDH}}(\lambda)$ (so that $p_1 \geq
\varepsilon/2$). Since $q$ is polynomial, this is asymptotically the
case since $q^2 \cdot (n+1) \cdot
\mathbf{Adv}^{\mathrm{mDBDH}}(\lambda)$ remains
negligible under the mDBDH assumption. \\
\indent The system   turns out to be the first scheme that is
amenable for weak black-box traceability against dishonest users in
the adaptive-ID sense. Due to their reliance on attribute-based
encryption techniques (for which only selective-ID adversaries were
dealt with so far), earlier (weak) black-box  A-IBE proposals
\cite{Goyal,GLSW} are only known to provide selective-ID security
against dishonest users.
 \\
\indent As for the security against dishonest PKGs, we  observed
that, in the FindKey-CPA game, the
 last part $d_3^{(1)}=t$ of the user's private key is perfectly hidden to the malicious PKG after the key generation protocol. Then,
a pirate decoder $\mathbb{D}$   made  by the PKG has negligible
chance of decrypting ciphertexts where $C_3$ is random
  in the same way as the user would. When the user comes
across $\mathbb{D}$ and takes it to the court, the latter runs the
tracing algorithm using $\mathbb{D}$ and the user's well-formed
  key
$d_{\ID}^{(1)}=(d_{1}^{(1)},d_{2}^{(1)},d_{3}^{(1)})$ for which
$d_{3}^{(1)}$ is independent of $\mathbb{D}$.

\begin{lemma} \label{findkey-black-box} In the FindKey-CPA game, one iteration of the tracing
algorithm increases $ctr$ with probability at most $1/p$.
\end{lemma}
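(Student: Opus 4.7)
The plan is to show that the tracer's malformed ciphertext (where $C_3$ uses an independent exponent $s'$ rather than $s$) turns into a one-time-pad encryption of $m$ whose pad is determined by the hidden component $d_3$ of the user's private key. Since Theorem~\ref{findkey} already establishes that $d_3 = t_0 + t_1$ is perfectly hidden from the PKG's view, the decoder $\mathbb{D}$ it fabricated cannot recover $m$ except by guessing.

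The first step is an algebraic simplification of $C_4$. Raising relation~(\ref{valid-priv-1}) to the power $s$ and substituting $C_1 = X^s$, $C_2 = (g^\ID \cdot Z)^s$, I expect to obtain
\[
\frac{e(C_1,d_1)}{e(C_2,d_2) \cdot C_3^{d_3}} \;=\; e(g,Y)^s \cdot e(g,h)^{(s-s')d_3},
\]
so that by the definition used by the tracer,
\[
C_4 \;=\; m \cdot e(g,Y)^s \cdot e(g,h)^{(s-s')d_3}.
\]
The factor $e(g,Y)^s$ depends only on publicly computable quantities (and on $x$, which the PKG knows), whereas the factor $e(g,h)^{(s-s')d_3}$ acts as a blinding pad.

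The second step is to analyse this pad. Since the tracer enforces $s \neq s'$ and both lie in $\Z_p^*$, the exponent $(s-s') \bmod p$ is nonzero. By Theorem~\ref{findkey}, the conditional distribution of $d_3$ given the entire transcript of the key generation protocol (and hence given everything the PKG used to build $\mathbb{D}$) is uniform over a set of size $p-1$ in $\Z_p$. Multiplying by the nonzero constant $(s-s')$ permutes this set, so the pad $e(g,h)^{(s-s')d_3}$ is uniformly distributed over $p-1$ distinct elements of $\G_T$, and in particular is independent of the tracer's choice of $m$.

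For the final step I fix an arbitrary choice of $(C_1,C_2,C_3)$ and any randomness used by $\mathbb{D}$, so that the decoder's output is a deterministic function $m^\star$ of $C_4$. Given $C_4 = \gamma$, the equation $C_4 = m \cdot e(g,Y)^s \cdot e(g,h)^{(s-s')d_3}$ shows that $m$ is in bijection with $d_3$; hence conditioned on $(C_1,C_2,C_3,C_4)$, the message $m$ remains uniform over at least $p-1$ values of $\G_T$, none of which the decoder can distinguish. Consequently $\Pr[m^\star = m] \leq 1/(p-1) \leq 1/p$ (up to a negligible additive slack coming from the missing element $t_1$ in the support of $d_3$), and averaging over the tracer's random choices yields the claimed bound. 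The only delicate point is keeping the accounting tight between the supports of $t_0,t_1 \in \Z_p^*$ and the group $\G_T$ of order $p$; I expect this to be a cosmetic rather than substantive obstacle, handled by a one-line remark that the statistical defect is at most $1/p$.
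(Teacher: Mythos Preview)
Your proposal is correct and follows essentially the same route as the paper: derive $C_4 = m \cdot e(g,Y)^s \cdot e(g,h)^{(s-s')d_3}$, invoke the perfect hiding of $d_3$ from Theorem~\ref{findkey}, and conclude that $m$ is information-theoretically masked from $\mathbb{D}$. The paper's own proof is a terser version of exactly this argument.

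One small correction: you write $1/(p-1) \leq 1/p$, which is of course backwards. You are right that the honest bound is $1/(p-1)$ (since $t_0 \in \Z_p^*$ leaves $d_3$ uniform over only $p-1$ values given the PKG's view), and the paper is slightly sloppy in stating $1/p$; but the inequality should read $1/(p-1) = 1/p + 1/(p(p-1))$, so the stated bound holds only up to the negligible additive term you already flagged.
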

\begin{proof}  In an
iteration of the tracing stage, $\mathbb{D}$ is given
$C=(C_1,C_2,C_3,C_4)$ such that $C_1=X^s$, $C_2=(g^{\ID} \cdot
Z)^s$, $C_3=e(g,h)^{s'}$ and   $C_4=m \cdot e(g,Y)^s \cdot
e(g,h)^{(s-s')t}$ for distinct  $s,s' \sample \Z_p^*$. Since
$\mathbb{D}$ has no information on $d_3^{(1)}=t$, for any plaintext
$m \in \G_T$, there is a value $d_3^{(1)}$ that explains $C_4$ and
it comes that $\mathbb{D}$ returns the one chosen by the tracer with
   probability   $1/p$. \qed
\end{proof}
We note that a pirate device $\mathbb{D}$ generated by the dishonest
PKG is able to recognize invalid ciphertexts in the tracing stage
(as it may contain the master secret $x$). However, as long as
$\mathbb{D}$
  is assumed stateless, it cannot shutdown or self-destruct when detecting a tracing attempt.  Moreover, with all but negligible probability,
  it will never
be able to decrypt such invalid ciphertexts in the same way as the
owner of $d_{\ID}^{(1)}$ would.
\begin{theorem}
In the black-box FindKey-CPA game, a dishonest PKG has negligible
advantage.
\end{theorem}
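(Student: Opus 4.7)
The plan is to reduce the theorem to Lemma \ref{findkey-black-box} via a straightforward union bound. Recall that in the black-box FindKey-CPA game, the dishonest PKG wins if, given oracle access to the pirate decoder $\mathbb{D}$ it produces, the tracing algorithm returns ``User'' rather than ``PKG''. By construction of $\mathbf{Trace}$, this occurs exactly when $ctr > 0$ at the end of the $L = 16\lambda/\varepsilon$ iterations.

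First I would spell out the information-theoretic independence underpinning the lemma. By Theorem \ref{findkey}, the user's family number $d_3^{(1)} = t_0 + t_1$ is perfectly hidden from the PKG's view after the key generation protocol, because $R = h^{t_0} \cdot X^\theta$ is a Pedersen commitment (statistically hiding) and the accompanying proof of knowledge is perfectly witness-indistinguishable. Since the PKG has no decryption oracle in the CPA setting, its entire view (including the internal coins used to craft $\mathbb{D}$) is independent of $t$, so $\mathbb{D}$ can be viewed as a function chosen before $t$ is sampled uniformly in $\Z_p^*$.

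Next I would apply Lemma \ref{findkey-black-box}: conditioned on the PKG's view, a single tracing iteration feeds $\mathbb{D}$ a ciphertext whose $C_4$ component is a one-time-pad masking of the tracer's message $m$ by $e(g,h)^{(s-s')t}$. Since $s \neq s'$ and $t$ is uniform and independent of $\mathbb{D}$, for each candidate output $m' \in \G_T$ there is exactly one value of $t$ making $(C_1,C_2,C_3,C_4)$ a valid encryption of $m'$, so $\mathbb{D}$ returns the specific $m$ chosen by the tracer with probability at most $1/p$. A union bound over the $L$ iterations yields
\[
\Pr[\text{Trace incriminates the User}] \;=\; \Pr[ctr > 0] \;\leq\; \frac{L}{p} \;=\; \frac{16\lambda}{\varepsilon \, p},
\]
which is negligible in $\lambda$ since $p > 2^\lambda$ and $\varepsilon$ is non-negligible.

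The main subtlety, rather than an obstacle, is justifying that the $L$ iterations can be treated as independent Bernoulli trials with success probability $1/p$: this requires the standing assumption from the tracing section that $\mathbb{D}$ is stateless/resettable, so it cannot correlate answers across queries or self-destruct upon detecting a tracing attempt. Given that assumption, the bound follows immediately. I would close by noting that, exactly as for Theorem \ref{black-box-newkey}, the argument transfers to the adaptive-ID setting without change, since the information-theoretic hiding of $t$ used here does not depend on any computational reduction to mDBDH.
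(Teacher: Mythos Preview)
Your proposal is correct and follows essentially the same route as the paper: invoke Lemma~\ref{findkey-black-box} for the per-iteration bound of $1/p$ and apply a union bound over the $L=16\lambda/\varepsilon$ iterations to get $\Pr[ctr>0]\le L/p = 16\lambda/(\varepsilon p)$, which is negligible since $p>2^{\lambda}$. One small remark: a union bound does not require independence of the iterations, so your discussion of ``independent Bernoulli trials'' is more than you actually need here (independence via statelessness is what the paper uses for the Chernoff argument in Theorem~\ref{black-box-newkey}, not for this theorem).
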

\begin{proof}
The dishonest PKG is not detected if it outputs a decryption box for
which the tracing   ends with a non-zero value of   $ctr$. From
lemma \ref{findkey-black-box}, it easily comes that $\PR[ctr \neq 0]
= \PR[ctr \geq 1] \leq L/p =16 \lambda / (\varepsilon p) \leq 16
\lambda / (2^\lambda \varepsilon) $. \qed
\end{proof}

\indent To secure the scheme against chosen-ciphertext attacks and
preserve the weak black-box property, we can use the
Canetti-Halevi-Katz \cite{CHK} technique or its optimizations
\cite{BK05,BMW} that do not affect the tracing algorithm.

\section{Extension to Gentry's IBE} \label{Gentry}

In this section, we show how to apply the weak black-box tracing
mechanism of
 section \ref{bb-tracing} to   Gentry's IBE. The resulting A-IBE
 system  is obtained by bringing a simple modification to the key
 generation protocol of Goyal's first scheme \cite{Goyal} so as to
 perfectly hide the user's key family from the PKG's view while preserving the efficiency of the whole scheme. \\
 \indent The advantage of this scheme is to directly provide adaptive-ID
 security against dishonest users and under   reductions that are are just as tight as in Gentry's
 system. This benefit comes at the expense of sacrificing the concurrent
security of the key generation protocol (as security proofs require
to rewind  at each key generation query) and relying
 on a  somewhat strong  assumption.
\begin{definition}[\cite{Ge06}] \label{ADBDHE-def}
In bilinear groups $(\G,\G_T)$,  the    {\bf $q$-Decision Augmented
Bilinear Diffie-Hellman Exponent Problem} ($q$-ADBDHE) is to
distinguish the distribution
$\big(g,g^{\alpha},\ldots,g^{(\alpha^q)},h,h^{(\alpha^{q+2})},e(g,h)^{(\alpha^{q+1})}\big)
$ from the distribution
$\big(g,g^{\alpha},\ldots,g^{(\alpha^q)},h,h^{(\alpha^{q+2})},e(g,h)^{\beta}\big)
$, where   $\alpha,\beta \sample \Z_p^*$ and $h\sample \G^* $. The
advantage
$\mathbf{Adv}^{q\textrm{-}\mathrm{ADBDHE}}_{\G,\G_T}(\lambda)$ of a
distinguisher $\B$
   is  defined as in definition \ref{DBDH-def}
\end{definition}

In the description hereafter, the encryption and decryption
algorithms are exactly as in   \cite{Ge06}. Since the key generation
protocol  perfectly conceals the user's key family, we can  apply
the same weak black-box tracing mechanism as in section
\ref{bb-tracing}. The resulting system turns out to be the most
efficient adaptive-ID secure weakly black-box A-IBE to date.

\begin{description}
\item[Setup:]
given a security parameter $\lambda \in \mathbb{N}$, the PKG chooses
bilinear groups $(\G,\G_T)$ of  order $p>2^\lambda$  with a
generator $g \sample \G$. It picks $h,g \sample \G$ and $\alpha
\sample \Z_p^*$ at random. It defines the master key as
$\mathsf{msk}:=\alpha$ and the master public key is defined to be
 $\mathsf{mpk}:=(g,g_1=g^{\alpha},h).$ \\ \vspace{-0.3 cm}
\item[Keygen$^{(\mathrm{PKG},\mathsf{U})~}$:] the user $\U$ and the PKG interact in the following
  protocol. \\
\vspace{-0.3 cm}
\begin{itemize}
\item[1.]  $\U$ picks $t_0, \theta \sample \Z_p^*$ and sends a  commitment $R=g^{-t_0} \cdot (g_1 \cdot g^{-\ID})^\theta$ to the PKG. He
also   gives
  an  interactive   witness
indistinguishable proof of knowledge of the pair  $(t_0,\theta)$. \\
\vspace{-0.3 cm}
\item[2.] The PKG   outputs $\bot$ if the proof of knowledge is invalid. Otherwise, it picks $t_1 \sample \Z_p^*$ and returns
\begin{eqnarray} \label{GE-priv-key-1}
d_{\ID}'=(d',t_{\ID}')=\Big( ( h \cdot R   \cdot
g^{-t_1})^{1/(\alpha -\ID)} ,~t_1 \Big).
\end{eqnarray}
\item[3.]   $\U$  computes
 $d_{\ID}=({d'}/g^{\theta} ,~t_{\ID}'+t_0)$
which should equal
\begin{eqnarray} \label{GE-priv-key-2}
d_{\ID}=({d},t_{\ID})=\Big( ( h \cdot
g^{-(t_0+t_1)})^{1/(\alpha-\ID)} ,~t_0+t_1 \Big).
\end{eqnarray}
  Then, $\U$ checks whether $d_{\ID}$  satisfies the
relation
\begin{eqnarray} \label{GE-valid-priv-1}
e(d ,g_1 \cdot g^{-\ID})&=& e(h,g) \cdot e(g,g)^{-t_{\ID}} .
\end{eqnarray}
If so, he sets his private key as $d_{\ID}$, which   belongs
to the key family   identified by $n_F=t_{\ID}=t_0+t_1$. He outputs $\bot$ otherwise. \\
\vspace{-0.3 cm}
\end{itemize}
\item[Encrypt:] to encrypt   $m \in \G_T$ given $\mathsf{mpk}$ and $\ID$, choose $s \sample \Z_p^*$ and compute \\
\vspace{-0.1 cm}
$$ C= \big( C_1,C_2,C_3  \big)= \Big(  \big(g_1  \cdot g^{- \ID} \big)^s     ,~ e(g,g)^s   ,~ m \cdot e(g,h)^s    \Big). $$
\item[Decrypt:] given $C= \big(C_1,C_2,C_3   \big) $ and $d_{\ID}=({d},t_{\ID})$,
 compute
 \begin{eqnarray*} \label{decrypt}
   m=C_3 \cdot \Big(  e(C_1,d)   \cdot C_2^{t_{\ID}}
   \Big)^{-1}
 \end{eqnarray*}
 \item[Trace$^{\mathbb{D}}(\mathsf{mpk},d_{\ID},\varepsilon)$:] given a valid private key $d_{\ID}=(d,t_{\ID})$
  belonging to   user   $\ID$
 and   a $\varepsilon$-useful  pirate decoder $\mathbb{D}$, conduct the following
 steps. \\ \vspace{-0.3 cm}
 \begin{itemize}
 \item[a.]  Set $ctr \leftarrow 0$ and repeat the next steps $L=16\lambda / \varepsilon $ times:
 \\ \vspace{-0.3 cm}
 \begin{itemize}
\item[1.] Choose    $s,s' \sample \Z_p^*$ such that $s \neq s'$  and  set $C_1=(g_1 \cdot g^{-\ID})^s$  and $C_{2}=e(g,h)^{s'}$.
\item[2.] Compute $ C_3=m \cdot   e(C_1,d)   \cdot C_2^{t_{\ID}}
   $ for a random  message $m \in \G_T$.
\item[3.] Feed the decryption device $\mathbb{D}$ with
$(C_1,C_2,C_{3})$. If $\mathbb{D}$ outputs
$m' \in \G_T$ such that $m' = m$, increment $ctr$. \\
\vspace{-0.3 cm}
 \end{itemize}
 \item[b.] If $ctr=0$, incriminate the PKG. Otherwise,  incriminate the user. \\
 \vspace{-0.3 cm}
 \end{itemize}
\end{description}

The IND-ID-CPA security of the scheme can be simply reduced to that
of Gentry's IBE as shown in the proof of the next theorem.
\begin{theorem} \label{proof-version-Gentry}
Any IND-ID-CPA adversary against the above A-IBE implies an
IND-ID-CPA attacker against Gentry's IBE.
\end{theorem}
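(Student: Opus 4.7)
The plan is to give a direct black-box reduction: given an IND-ID-CPA adversary $\A$ against the A-IBE scheme, we build an attacker $\B$ against Gentry's IBE that has essentially the same advantage. The setup distributions match verbatim, so $\B$ simply obtains $(g,g_1,h)$ from its Gentry challenger and forwards it to $\A$ as $\mathsf{mpk}$. The challenge phase is equally trivial: $\B$ passes the pair $(m_0,m_1)$ and the target $\ID^\star$ output by $\A$ to its own challenger, receives $C^\star=(C_1^\star,C_2^\star,C_3^\star)$, and relays it to $\A$; since the Encrypt algorithm is syntactically identical to Gentry's, $C^\star$ is distributed exactly as an A-IBE encryption of $m_{d^\star}$ under $\ID^\star$. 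Finally, $\B$ echoes $\A$'s bit.

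The only nontrivial part is simulating the interactive $\mathbf{Keygen}$ protocol on identities $\ID\neq\ID^\star$. When $\A$, playing the role of the user, sends $R=g^{-t_0}\cdot(g_1\cdot g^{-\ID})^\theta$ together with a WI proof of knowledge of $(t_0,\theta)$, $\B$ runs the knowledge extractor (by rewinding $\A$ on the $3$-move protocol) to recover $(t_0,\theta)$. Then $\B$ queries Gentry's key-extraction oracle on $\ID$ to obtain $(d^\star_G,t^\star_G)$ with $d^\star_G=(h\cdot g^{-t^\star_G})^{1/(\alpha-\ID)}$. Using $g_1\cdot g^{-\ID}=g^{\alpha-\ID}$, one checks that
\begin{equation*}
(h\cdot R\cdot g^{-t_1})^{1/(\alpha-\ID)} = (h\cdot g^{-(t_0+t_1)})^{1/(\alpha-\ID)}\cdot g^{\theta},
\end{equation*}
so by setting $t_1:=t^\star_G-t_0$ and returning $d_{\ID}':=(d^\star_G\cdot g^{\theta},\, t_1)$, the value that $\A$ recovers after unblinding is exactly $(d^\star_G, t^\star_G)$, a perfectly distributed Gentry private key for $\ID$ (since $t^\star_G$, and hence $t_1$, is uniform in $\Z_p^\ast$). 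In particular the validity check (\ref{GE-valid-priv-1}) passes.

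Putting the pieces together, $\A$'s view in the simulation is statistically identical to its view in the real IND-ID-CPA game (up to the standard negligible soundness error of the knowledge extractor), so $\B$'s advantage matches $\A$'s. The main (and only) obstacle is the rewinding used to extract $(t_0,\theta)$: each key generation query triggers one rewind of $\A$, so the reduction is efficient only when key generation executions are handled sequentially. This is in line with the remark preceding the theorem that, unlike the scheme of Section \ref{efficient-DBDH}, concurrent security of the key generation protocol is sacrificed here in exchange for the tight adaptive-ID reduction inherited from Gentry's IBE. \qed
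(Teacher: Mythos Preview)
Your proof is correct and follows essentially the same approach as the paper: forward the Gentry master public key and challenge ciphertext unchanged, and simulate each key-generation query by rewinding $\A$ to extract $(t_0,\theta)$, querying the Gentry oracle for $(d,t_{\ID})$, and returning $(d\cdot g^{\theta},\,t_{\ID}-t_0)$. Your write-up is in fact slightly more explicit than the paper's, verifying the algebra behind the blinded key and noting the concurrency caveat.
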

\begin{proof}
Let us assume an IND-ID-CPA adversary $\A$ in the game described by
definition \ref{sec-def-A-IBE}. We show that $\A$ gives rise to an
IND-ID-CPA adversary $\B$ against Gentry's IBE. \\
\indent Our adversary $\B$ receives a master public key
$\mathsf{mpk}=(g,g_1,h)$ from her challenger. When  $\A$ makes a key
generation request for an identity $\ID$, $\B$ queries her own
challenger to extract a private key $d_{\ID}=(d,t_{\ID})=\big(
(h\cdot g^{-t_{\ID}})^{1/(\alpha-\ID)},t_{\ID} \big)$ and starts
executing the key generation protocol with in interaction with $\A$.
The latter first supplies a commitment $R=g^{-t_0} \cdot (g_1 \cdot
g^{-\ID})^{\theta}$ and an interactive WI proof of knowledge of the
pair $(t_0,\theta)$. Using the knowledge extractor of the proof of
knowledge, $\B$ extracts $(t_0,\theta)$ by rewinding $\A$   and
returns $d_{\ID}=(d',t_{\ID}')$, where  $t_{\ID}'=t_{\ID}-t_0$ and
$d'=d_{\ID} \cdot g^{\theta}$. \\
\indent In the challenge phase, $\A$ chooses a target identity
$\ID^\star$ and messages $(m_0,m_1)$, which  $\B$ forwards to her
own challenger. The latter provides $\B$ with a challenge ciphertext
$(C_1,C_2,C_3)$ which is relayed to $\A$. After a second series of
key generation queries, $\A$ outputs a bit $d\in \{0,1\}$, which is
also $\B$'s output. It is easy to see that, if $\A$ is successful,
so is $\B$. \qed
\end{proof}
We now turn to prove the weak black-box traceability property.

\begin{lemma}
In the Adaptive-ID ComputeNewKey game and for a $\varepsilon$-useful
device $\mathbb{D}$, the probability that an iteration of the
tracing algorithm increases $ctr$ is at least $p_1 \geq \varepsilon
- \mathbf{Adv}^{q\textrm{-}\mathrm{ADBDHE}}_{\G,\G_T}(\lambda) $,
where $q$ is the number of key generation queries.
\end{lemma}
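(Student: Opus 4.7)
The plan is to mirror the proof of Lemma \ref{ctr-plus}, replacing the mDBDH game hop by one based on $q$-ADBDHE in the style of Gentry's IBE security proof. I introduce two games against the $\varepsilon$-useful device $\mathbb{D}$ output by $\A$ in the ComputeNewKey game. In Game$_0$, $\mathbb{D}$ is handed a well-formed encryption of a uniformly random message $m$ under $\ID^\star$, so its $\varepsilon$-usefulness immediately gives $p_0 = \varepsilon$. In Game$_1$, $\mathbb{D}$ is handed a ciphertext distributed exactly as in a tracing iteration, i.e.\ with $C_2 = e(g,h)^{s'}$ for a freshly sampled $s' \neq s$; here the probability that $\mathbb{D}$ returns the message $m$ chosen by the challenger is precisely $p_1$. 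The lemma reduces to the claim $|p_0 - p_1| \leq \mathbf{Adv}^{q\textrm{-}\mathrm{ADBDHE}}_{\G,\G_T}(\lambda)$.

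To establish the gap, I build a $q$-ADBDHE distinguisher $\B$ that, on input $(g, g^{\alpha}, \ldots, g^{\alpha^q}, h', h'^{\alpha^{q+2}}, T)$, simulates the ComputeNewKey game for $\A$ following Gentry's template. $\B$ picks a random polynomial $f$ of degree $q$ and programs $\mathsf{mpk} = (g,\ g_1 = g^{\alpha},\ h = g^{f(\alpha)})$, both computable from the instance. For any key generation query on an identity $\ID_i$ (which, in the ComputeNewKey game, may include $\ID^\star$ itself), $\B$ aims to issue a key satisfying $t_{\ID_i} = f(\ID_i)$; this amounts to supplying $d_{\ID_i} = g^{F_{\ID_i}(\alpha)}$ with $F_{\ID_i}(X) = (f(X) - f(\ID_i))/(X - \ID_i)$ of degree $q-1$, which can be evaluated in the exponent using the $g^{\alpha^j}$ terms. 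To embed $d_{\ID_i}$ inside the interactive protocol, $\B$ first runs the WI proof-of-knowledge extractor on $\A$'s message $R$ (one rewinding per query), recovering $(t_0,\theta)$, and then replies with $t_1 = f(\ID_i) - t_0$ and $d' = d_{\ID_i} \cdot g^{\theta}$.

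When $\A$ halts and outputs a valid key $d_{\ID^\star}^{(1)}$ together with $\mathbb{D}$, $\B$ samples $m \sample \G_T$ and forms a challenge ciphertext for $\ID^\star$ exactly as in Gentry's semantic-security reduction: $C_1$ is derived from $h'^{\alpha^{q+2}}$ and the $g^{\alpha^j}$'s so that its implicit encryption exponent is the cofactor produced by polynomial division by $(X - \ID^\star)$, and $C_2$ is derived from $T$. The component $C_3$ is then set to $m \cdot e(C_1, d_{\ID^\star}^{(1)}) \cdot C_2^{t_{\ID^\star}^{(1)}}$ using the key that $\B$ itself helped to produce, so $\B$ knows $t_{\ID^\star}^{(1)} = f(\ID^\star)$. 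When $T = e(g,h')^{\alpha^{q+1}}$ the tuple is a perfectly distributed valid encryption of $m$ (Game$_0$); when $T$ is uniform, $C_2$ becomes independent of the exponent of $C_1$, matching the tracing distribution (Game$_1$) up to the negligible event $s = s'$. $\B$ feeds this ciphertext to $\mathbb{D}$ and outputs $1$ iff $\mathbb{D}$ returns $m$; its distinguishing advantage is exactly $|p_0 - p_1|$, which yields the claimed bound.

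The main obstacle I anticipate is the bookkeeping that shows Gentry's polynomial embedding remains sound when $\A$ is adaptive and may query $\ID^\star$ during key generation: since $f$ is fixed at setup and specifies $t_{\ID}$ uniformly at random for every $\ID$, the key handed to $\A$ for $\ID^\star$ and the challenge ciphertext built from $T$ are jointly consistent with a single hidden $\alpha$, and no conflict arises. The other delicate point is that each key generation query costs one rewinding of $\A$, so the reduction is polynomial-time only in the sequential setting --- precisely the concurrent-insecurity caveat already acknowledged for this scheme.
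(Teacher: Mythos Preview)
Your proof follows the same route as the paper's: a two-game hop bridged by a $q$-ADBDHE distinguisher that programs $h = g^{f(\alpha)}$, answers every key-generation query via $t_{\ID} = f(\ID)$ after extracting $(t_0,\theta)$ by rewinding, and embeds $T$ into the $C_2$ component of the challenge ciphertext built from $h'^{\alpha^{q+2}}$. One small correction: drop the claim that ``$\B$ knows $t_{\ID^\star}^{(1)} = f(\ID^\star)$ because it helped produce the key'' --- $\A$ need not have queried $\ID^\star$ at all, and even if it did it may output a different valid key; the point (as in the paper) is simply that $\A$'s output $d_{\ID^\star}^{(1)} = (d^\star, t_{\ID^\star}^\star)$ is handed to $\B$ in the clear, so $\B$ can compute $C_3 = m \cdot e(C_1, d^\star) \cdot C_2^{t_{\ID^\star}^\star}$ directly from it.
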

\begin{proof} The proof is very similar to the proof of IND-ID-CPA
security in \cite{Ge06}. For the sake of contradiction, let us
assume that, in an iteration of the tracing procedure, the
probability $p_1$ that $\mathbb{D}$ returns the message chosen by
the tracer is significantly smaller than $\varepsilon$. Then, we
can construct a distinguisher $\B$ for the $q$-ADBDHE assumption. \\
\indent The distinguisher $\B$ takes as input a tuple
$(g,g^{\alpha},\ldots,g^{(\alpha^q)},h,h^{(\alpha^{q+2})},T)$ and
aims at deciding if $T=e(g,h)^{(\alpha^{q+1})}$. It generates the
master public key in such a way that $h=g^{f(\alpha)}$, for some
random  polynomial $f(X) \in \Z_p[X]$ of degree $q$. At each key
generation query, $\B$ first computes a valid private key
$d_{\ID}=(d,t_{\ID})$ for the identity $\ID$, by setting
$t_{\ID}=f(\ID)$ as in the proof of theorem 1 in \cite{Ge06}. Then,
in the interactive key generation protocol, $\A$ sends a commitment
$R=g^{-t_0} \cdot (g_1 \cdot g^{-\ID})^{\theta}$ and proves
knowledge of the pair $(t_0,\theta)$, which $\B$ extracts by
rewinding $\A$ as in the proof of theorem
\ref{proof-version-Gentry}. As in the latter, $\B$ replies with a
well-distributed pair $d_{\ID}'=(d',t_{\ID}')$, where
$t_{\ID}'=t_{\ID}-t_0$ and $d'=d  \cdot g^{\theta}$. \\
\indent The game ends with $\A$ outputting an identity $\ID^*$, a
private key $d_{\ID^\star}=(d^\star,t_{\ID^\star}^\star)$ and a
$\varepsilon$-useful device. In the tracing stage, $\B$   first
expands the monic polynomial
$F(X)=(X^{q+2}-{\ID^\star}^{q+2})/(X-\ID^\star)=X^{q+1}+F_{q} X^q +
\cdots + F_1 X +F_0$. Then, $\B$ chooses a   plaintext $m\sample
\G_T$ and computes $C=(C_1,C_2,C_3)$ as
\begin{eqnarray*}
C_1=\frac{h^{(\alpha^{q+2})}}{h^{({\ID^\star}^{q+2})}}  \qquad C_2=
T \cdot e\big(h,\prod_{j=0}^q (g^{(\alpha_j)F_j}) \big) \qquad C_3=m
\cdot e(C_1,d^\star) \cdot C_2^{t_{\ID^\star}^\star}.
\end{eqnarray*}
If   $\mathbb{D}$ returns the correct plaintext $m$, the
distinguisher $\B$ halts and return $1$. As in \cite{Ge06},
$(C_1,C_2,C_3)$ is a well-formed ciphertext with the encryption
exponent $s=\log_g(h)F(\alpha)$ if $T=e(g,h)^{(\alpha^{q+1})}$. In
this case, $\B$ returns $1$ with probability $\varepsilon$ since
$\mathbb{D}$ is a $\varepsilon$-useful device. By assumption, the
probability that $\B$ returns $1$ when $T$ is random is
significantly smaller than $\varepsilon$. Therefore, $\B$ has
non-negligible advantage as a distinguisher against the $q$-ADBDHE
assumption. \qed
\end{proof}

\begin{theorem} \label{newkeyGentry}
In the adaptive-ID $\mathrm{ComputeNewKey}$ game, any PPT adversary
has negligible advantage assuming that the ADBDHE assumption holds.
\end{theorem}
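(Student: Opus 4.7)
The plan is to mimic the structure of the proof of Theorem \ref{black-box-newkey}, replacing the appeal to the mDBDH-based Lemma \ref{ctr-plus} by the ADBDHE-based lemma immediately preceding this theorem. The tracing algorithm incriminates the PKG exactly when, after $L = 16\lambda/\varepsilon$ independent iterations, the counter $ctr$ remains at $0$; so it suffices to upper bound $\Pr[ctr = 0]$ by a negligible function of $\lambda$.

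First, I would observe that because $\mathbb{D}$ is assumed stateless (or resettable), the $L$ iterations of the tracing loop produce i.i.d.\ Bernoulli trials $X_1,\dots,X_L$, each increasing $ctr$ with the same probability $p_1$. Setting $\mu = \mathbf{E}[ctr] = Lp_1$, the preceding lemma gives
\begin{equation*}
p_1 \;\geq\; \varepsilon \;-\; \mathbf{Adv}^{q\text{-}\mathrm{ADBDHE}}_{\G,\G_T}(\lambda).
\end{equation*}
Under the $q$-ADBDHE assumption, $\mathbf{Adv}^{q\text{-}\mathrm{ADBDHE}}_{\G,\G_T}(\lambda)$ is negligible, hence for sufficiently large $\lambda$ we may assume it is at most $\varepsilon/2$ (which is non-negligible whenever the adversary wins with non-negligible probability). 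Consequently $\mu \geq L\varepsilon/2 = 8\lambda$.

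Next, I would apply the Chernoff bound with $\omega = 1/2$ to get
\begin{equation*}
\Pr[ctr < 1] \;\leq\; \Pr[ctr < \mu/2] \;<\; \exp(-\mu/8) \;\leq\; \exp(-\lambda),
\end{equation*}
which is negligible. Thus the tracing algorithm incriminates the PKG with at most negligible probability, establishing the claim.

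The only subtlety, and the place I would double check, is the reduction to adaptive-ID soundness: in the preceding lemma the simulator must rewind the adversary at every key-generation query to extract the pair $(t_0,\theta)$ from the WI proof, so the concurrent-security argument used for the basic DBDH scheme does \emph{not} apply here. However, because the lemma reduces to a static assumption ($q$-ADBDHE) rather than one depending on the challenge identity, and because Gentry's key-generation simulation handles all identities uniformly (the target $\ID^\star$ need not be guessed in advance), the reduction remains tight up to the rewinding factor inherent in the proof of knowledge, and this is the expected cost already acknowledged in the theorem's preamble. No additional guessing of the target identity in the style of Waters-style hashing is required, so the bound $p_1 \geq \varepsilon - \mathbf{Adv}^{q\text{-}\mathrm{ADBDHE}}$ inherited from the lemma suffices directly, and the Chernoff conclusion above closes the proof.
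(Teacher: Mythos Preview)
Your proposal is correct and follows essentially the same approach as the paper, which simply notes that the proof is ``completely analogous to that of theorem~\ref{black-box-newkey}'': invoke the preceding lemma to bound $p_1$ below by $\varepsilon - \mathbf{Adv}^{q\textrm{-}\mathrm{ADBDHE}}_{\G,\G_T}(\lambda)$, then apply the Chernoff bound with $\omega=1/2$ exactly as in (\ref{chernoff}). Your final paragraph on rewinding and adaptive-ID tightness is additional commentary about the lemma rather than the theorem, but it does not affect correctness.
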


\begin{proof}
The proof  is completely analogous to that of theorem
\ref{black-box-newkey}. \qed
\end{proof}
The weak black-box  security against dishonest PKGs follows from the
information theoretic secrecy of the user's private key element
$t_{\ID}$ upon termination of the key generation protocol.

\begin{theorem} \label{findkeyGentry}
In the information theoretic sense, no adversary has an advantage in
the $\mathrm{FindKey}\textrm{-}\mathrm{CPA}$ game.
\end{theorem}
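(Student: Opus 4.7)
The plan is to mirror the proof of Theorem~\ref{findkey} almost verbatim, exploiting the fact that the user's first message in the Gentry-based key generation protocol is again a perfectly hiding commitment equipped with a perfectly witness-indistinguishable proof of knowledge. The adversary (acting as a malicious PKG) sees only the commitment $R = g^{-t_0} \cdot (g_1 \cdot g^{-\ID})^{\theta}$ and the transcript of the Okamoto-style three-move proof of knowledge of the representation $(t_0,\theta)$ of $R$ with respect to the bases $(g^{-1}, g_1 \cdot g^{-\ID})$. I would then argue that both of these objects leak no information about $t_0$.

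First I would pin down the commitment's hiding property. Since $\G$ has prime order $p$, any non-identity element of $\G$ is a generator. As part of the sanity check on $\mathsf{mpk}$ prescribed by the model, the challenger verifies that $g_1 \neq g^{\ID}$ (so that $g_1 \cdot g^{-\ID} \neq 1_\G$); this rules out the only degenerate case and guarantees that $(g^{-1}, g_1 \cdot g^{-\ID})$ is a valid pair of Pedersen bases. Because $\theta$ is drawn uniformly from $\Z_p^*$, the element $R$ is then uniformly distributed in $\G$ independently of $t_0$, so $R$ perfectly hides $t_0$. Next, I would invoke the perfect witness indistinguishability of Okamoto's protocol~\cite{Ok92} to conclude that the proof transcript is distributed identically for any representation $(t_0,\theta)$ consistent with $R$.

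Combining these two facts, I would conclude that the joint view of the adversary after step~2 of the protocol contains no information about $t_0$: conditional on the adversary's view, every value of $t_0 \in \Z_p^*$ remains equally likely. Since the PKG chooses $t_1$ purely as a function of this view, the family number $t_\ID = t_0 + t_1$ that ends up in the user's final private key is uniformly distributed over the $p-1$ nonzero values $\{t_1 + 1, \ldots, t_1 + (p-1)\}\bmod p$. Hence, whatever second key $d_\ID^{(2)}$ the adversary outputs, its family number (read off by $\mathbf{Trace}$ as the second component $t_{\ID}$) coincides with the user's $t_\ID^{(1)}$ with probability at most $1/(p-1)$, which is negligible in $\lambda$; in fact the advantage is zero in the pure information-theoretic sense once one observes that $t_\ID^{(1)}$ remains uniformly distributed in the adversary's posterior.

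The only point of real care is the sanity check on $\mathsf{mpk}$: without checking $g_1 \neq g^{\ID}$ the Pedersen argument collapses, so I would make this check explicit in the reduction. Beyond that, everything else is essentially a transcription of the argument given for Theorem~\ref{findkey}, since the algebraic form of $R$ and the WI proof of knowledge are structurally identical.
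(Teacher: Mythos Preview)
Your proposal is correct and follows essentially the same approach as the paper, which does not spell out a proof for this theorem but simply remarks (just before the statement) that the claim follows from the information-theoretic secrecy of $t_{\ID}$ after key generation, exactly as in Theorem~\ref{findkey}. Your write-up is in fact more careful than the paper's own treatment: the explicit sanity check $g_1 \neq g^{\ID}$ ensuring that the Pedersen base $g_1 \cdot g^{-\ID}$ is a generator is a genuine point the paper glosses over, and it is worth keeping.
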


To secure the scheme against chosen-ciphertext attacks, we cannot
use hash proof systems as suggested in \cite{Ge06,KV08}. This
technique  would indeed cause the decryption algorithm to reject all
invalid ciphertexts with high probability, which would not be
compatible with our weak black-box tracing mechanism. \\
\indent Fortunately,  CCA2-security can be acquired by applying the
Canetti-Halevi-Katz transformation to a two-receiver variant of the
Gentry-Waters identity-based broadcast encryption (IBBE) scheme
\cite{GW09}: one of the two receivers' identities is set to be the
verification key of a strongly unforgeable one-time signature and
the matching private key is used to sign the whole ciphertext. \\
\indent Our tracing algorithm can be combined with the latter
approach since, in the Gentry-Waters IBBE \cite{GW09}, private keys
have the same shape as in Gentry's IBE and one of the ciphertext
components lives in the group $\G_T$. As already mentioned, the CHK
technique does not affect  traceability as, upon decryption,
ill-formed ciphertexts only get rejected when the one-time signature
verification fails. The computational/bandwidth cost of the
resulting system exceeds that of the above A-IBE construction only
by a small factor.

\section{Extension to Identity-Based Broadcast Encryption}
\label{IBBE}

As already stressed in \cite{Goyal,GLSW}, reducing the required
amount of trust in PKGs is an equally important problem in IBE
schemes and their extensions such as attributed-based encryption or
identity-based broadcast encryption (IBBE). \\
\indent In this section, we thus show how the underlying idea of
previous schemes can be applied to one of the most efficient IBBE
realizations to date.

\subsection{The Boneh-Hamburg IBBE} \label{BH-IBBE}

An  identity-based broadcast encryption scheme, as formalized in
\cite{AKN07}, can be seen as an IBE where ciphertexts can be
decrypted by more than one receiver. Syntactically, it consists of
four algorithms:
\begin{itemize}
\item \textbf{Setup:} given a security parameter and a bound $N$ on the number of receivers per ciphertext, this algorithm
 outputs a master   key pair
 $(\mathsf{mpk},\mathsf{msk})$.
\item \textbf{KeyGen:} is used by the PKG to derive a private key $K_\ID$ for an identity
$\ID$.
\item \textbf{Encrypt:} takes as input a plaintext $m$, a master public key $\mathsf{mpk}$ and a set $S=\{\ID_1,\ldots,\ID_n\}$ of receivers' identities,
where $n\leq N$. It outputs a ciphertext $C$.
\item \textbf{Decrypt:} takes as input the master public key $\mathsf{mpk}$,  a ciphertext $C$, a set of receivers $S=\{\ID_1,\ldots,\ID_n\}$ and a private key $d_{\ID}$ corresponding
to some identity $\ID \in S$. It outputs a plaintext $m$ or $\bot$.
\end{itemize}

In \cite{BH08}, Boneh and Hamburg showed how to turn the
Boneh-Boyen-Goh hierarchical IBE \cite{BBG05} into an efficient IBBE
system with constant-size ciphertexts  and linear-size private keys
in the bound $N$ on the number of receivers per ciphertext. Their
construction was shown to derive from a more general primitive
termed ``spatial encryption''. Its security (in the selective-ID
sense) was established under the following assumption introduced in
\cite{BBG05}.

\begin{definition}  \label{DBDH-def} Let
$(\G,\G_T)$ be bilinear   groups of   order $p$  and $g \in {\mathbb
G}$. The    {\bf $\ell$-Decision  Bilinear Diffie-Hellman Exponent}
($\ell$-DBDHE) problem is, given
$\big(g,g^{\alpha},g^{(\alpha^2)},\ldots,g^{(\alpha^\ell)},g^{(\alpha^{\ell+2})},\ldots,g^{(\alpha^{2\ell})},h,T
\big)\in \G^{2\ell+1} \times \G_T$ for random   $\alpha \sample
\Z_p^*$ and $h \sample \G$, to decide whether
$T=e(g,h)^{(\alpha^{\ell+1})}$. The advantage
$\mathbf{Adv}^{\mathrm{\ell\textrm{-}DBDHE}}_{\G,\G_T}(\lambda)$ of
a distinguisher $\B$ is defined in the usual way.
\end{definition}
In the following, we use the same  notations as in \cite{BH08} and,
for any vector $\mathbf{a}=(a_0,\ldots,a_N) \in \Z_p^{N+1}$,
$g^{\mathbf{a}}$ stands for the vector $(g^{a_0},\ldots,g^{a_N}) \in
\G^{N+1} $.   The description of the Boneh-Hamburg IBBE scheme is as
follows.

\begin{description}
\item[Setup$(\lambda,N)$:] given a security parameter $\lambda \in
\mathbb{N}$ an the maximal number of receivers $N \in \mathbb{N}$
per ciphertext,  choose bilinear groups $(\G,\G_T)$ of prime order
$p>2^\lambda$ and a generator $g \sample \G$. Choose $z \sample \G$
as well a $(N+1)$-vector $\mathbf{h}=(h_0,h_1,\ldots,h_N) \sample
\G^{N+1}$ of random generators so that $h_i=g^{a_i}$  for
$i=0,\ldots,N$ with a randomly chosen $\mathbf{a}=(a_0,\ldots,a_N)
\sample \Z_p^{N+1}$. Finally, pick $\alpha \sample \Z_p^*$, $g_2
\sample \G$ and compute $g_1=g^{\alpha}$. The master public key is
$\mathsf{mpk}=(g,g_1=g^{\alpha},g_2,z,\mathbf{h}=g^{\mathbf{a}})$
while the master secret key is $\mathsf{msk}=(\mathbf{a},\alpha )$.
\\ \vspace{-0.3 cm}

\item[Keygen$(\mathsf{msk},\ID)$:] to generate a private key for an identity $\ID$, choose a random $r  \sample \Z_p^*$ and compute
\begin{eqnarray*}
K_{\ID} &=& (K_{1},K_{2},T_{0},\ldots,T_{N-1}) \\
&=& \big(g_2^{\alpha} \cdot z^{r},~g^{r},~h_1^{r} \cdot h_0^{-\ID
\cdot r}, ~h_2^{r} \cdot h_1^{-\ID \cdot r},\ldots,~h_N^{r} \cdot
h_{N-1}^{-\ID \cdot r} \big)
\end{eqnarray*}
for which the ``delegation component''  $(T_{0},\ldots,T_{N-1} ) \in
\G^N$ can be expressed as  $g^{r \cdot M_1^{t} \cdot \mathbf{a}}$,
for some matrix
 $M_1 \in \Z_p^{(N+1) \times N}$, which will be defined below. \\
 \vspace{-0.3 cm}
\item[Encrypt$(\mathsf{mpk},S,m)$:] to encrypt $m \in \G_T$
for  the receiver set  $S=\{\ID_1,\ldots,\ID_n\}$, where $n\leq N$, \\
\vspace{-0.3 cm}
\begin{itemize}
\item[1.] Expand the polynomial
\begin{eqnarray} \label{polyn}
P(X)=\prod_{i \in S} (X-\ID_i) = \rho_{n} X^n + \rho_{n-1} X^{n-1} +
\cdots + \rho_1 X + \rho_0.
\end{eqnarray}
\item[2.] Pick $s\sample \Z_p^*$ and compute
\begin{eqnarray*}
C=( C_0,C_1,C_2)=\Bigl(  m \cdot e(g_1,g_2)^s,~g^s,~\big(z \cdot
h_0^{\rho_0} \cdot h_1^{\rho_1} \cdots h_n^{\rho_n}\big)^s \Bigr).
\end{eqnarray*}
\end{itemize}
\item[Decrypt$(\mathsf{mpk},K_{\ID},C,S)$:] parse $S$ as $\{\ID_1,\ldots,\ID_n\}$,  $C$ as
$(C_0,C_1,C_2)$ and $K_{\ID}$ as
$$K_{\ID} = (K_{1},K_{2},T_{0},\ldots,T_{N-1}) \in \G^{N+2}. $$
\begin{itemize}
\item[1.]  Expand the polynomial
$$ P_{\ID}(X)=\prod_{\ID_j \in S \backslash \{\ID  \} } (X-\ID_j) = y_{n-1}^{(\ID)} X^{n-1} + y_{n-2}^{(\ID)} X^{n-2} + \cdots + y_1^{(\ID)} X + y_0^{(\ID)} $$
and use its coefficients to compute
\begin{eqnarray} \label{BH-deleg-1}
(D_{\ID},d_{\ID})&=& \bigl(K_{1} \cdot T_0^{y_0^{(\ID)}} \cdot
T_1^{y_1^{(\ID)}} \cdots T_{n-1}^{y_{n-1}^{(\ID)}},~K_{2 } \bigr)\\
\label{BH-deleg-2} &=& \Bigl( g_2^{\alpha} \cdot \big(z \cdot
h_0^{\rho_0} \cdot h_1^{\rho_1} \cdots h_{n}^{\rho_n}
\big)^{r},~g^{r} \Bigr)
\end{eqnarray}
where $\rho_0,\ldots,\rho_n$ are the coefficients of $P(X)$
(calculated as per (\ref{polyn})). \\ \vspace{-0.3 cm}
\item[2.] Recover the plaintext as
\begin{eqnarray} \label{BH-decrypt-eq-1}
m  =   C_0 \cdot e\big(C_1,   D_{\ID}  \big)^{-1} \cdot e\big(C_2, \
d_{\ID}  \big).
\end{eqnarray}
\end{itemize}
\end{description}
\bigskip
 To see why step 1 of the decryption algorithm works, one observes
that, for any polynomials $(X-\ID)$ and $P_{\ID}(X)= y_{n-1}^{(\ID)}
X^{n-1} + y_{n-2}^{(\ID)} X^{n-2} + \cdots + y_1^{(\ID)} X +
y_0^{(\ID)} $, the coefficients of $P(X)=(X-\ID)P_{\ID}(X)=\rho_n
X^n + \cdots + \rho_1 X + \rho_0$ are given by
\begin{eqnarray*}
\mathbf{\rho}=
\begin{pmatrix}
\rho_0 \\ \rho_1 \\ \rho_2 \\ \vdots \\ \rho_n
\end{pmatrix}= M_1 \cdot \mathbf{y}
=
\begin{pmatrix} -\ID & & & & & \\
1 & -\ID &   & & & \\
&  1 & -\ID &   & & \\
& &  \ddots & \ddots  &    & \\
& &   &    & 1   & -\ID   \\
& &  &  &       & 1
\end{pmatrix} \cdot
\begin{pmatrix}
y_0^{(\ID)} \\ y_1^{(\ID)} \\   \vdots \\ y_{n-1}^{(\ID)}
\end{pmatrix},
\end{eqnarray*}
where $M_1 \in \Z_p^{(n+1) \times n}$. Since the latter matrix is
such that
\begin{eqnarray*}
M_1^t \cdot \mathbf{a}|_{n+1} = M_1^t  \cdot
\begin{pmatrix}
a_0 \\ a_1 \\   \vdots \\ a_n
\end{pmatrix}=\begin{pmatrix}
a_1-\ID \cdot a_0 \\ a_2 -\ID \cdot a_1 \\   \vdots \\ a_n-\ID \cdot
a_{n-1}
\end{pmatrix},
\end{eqnarray*}
 for each     private key $K_{\ID}$, the first $n$ delegation
components satisfy
\begin{eqnarray*}
(T_{0},\ldots,T_{n-1})&=& \big(h_1^{r} \cdot h_0^{-\ID \cdot r},
~h_2^{r} \cdot h_1^{-\ID \cdot r},\ldots,~h_n^{r} \cdot
h_{n-1}^{-\ID  \cdot r} \big)    =  g^{r M_1^t \cdot \mathbf{a}}.
\end{eqnarray*}
Therefore, since $\rho  =   M_1 \cdot \mathbf{y} $, we have
\begin{eqnarray*}
(z \cdot \prod_{k=0}^n h_k^{\rho_k})^{r } &=&  z^{r} \cdot g^{r
\cdot \mathbf{\rho}^t \cdot   \mathbf{a}}
  = z^{r} \cdot  g^{r \mathbf{y}^t \cdot M_1^t \cdot \mathbf{a}}
   =  z^{r} \cdot  T_0^{y_{0}^{(\ID)}} \cdots T_{n-1}^{y_{n-1}^{(\ID)}}
\end{eqnarray*}
which explains the transition between relations (\ref{BH-deleg-1})
and (\ref{BH-deleg-2}). To explain the second step of the decryption
algorithm, we note that, for each $\ID \in S$, the pair
$(D_{\ID},d_{\ID})$ satisfies
\begin{eqnarray} \label{BH-rel-priv}
e(D_{\ID},g)=e(g_1,g_2)  \cdot e(z \cdot h_0^{\rho_0} \cdot
h_1^{\rho_1} \cdots h_n^{\rho_n} ,d_{\ID})
\end{eqnarray}
By raising both members of (\ref{BH-rel-priv}) to the power $s   \in
\Z_p^*$, where $s$ is the random encryption exponent, we see why $m$
can be recovered as per
(\ref{BH-decrypt-eq-1}).\\
\indent The security of this scheme was proved \cite{BH08} under the
$(N+1)$-DBDHE assumption  in the selective-ID model. In the context
of IBBE schemes, the IND-sID-CPA model was formalized in
\cite{AKN07}. It requires the adversary to choose upfront ({\it
i.e.}, before seeing $\mathsf{mpk}$) the set
$S^\star=\{\ID_1^\star,\ldots,\ID_{n^\star}^\star\}$ of identities
under which the challenge ciphertext $C^\star$ will be generated.
The adversary is then allowed to query private keys for identities
$\ID_i \not\in S^\star$ and eventually aims at guessing which one
out of two messages of her choice was encrypted in the generation of
$C^\star$.

\subsection{A weak Black-Box Accountable Authority IBBE}

The idea of the scheme in section \ref{efficient-DBDH}  applies to
construct an IBBE scheme with short ciphertexts and accountable
authorities. The syntax of accountable authority IBBE (A-IBBE)
schemes extends that of IBBE systems in the same way as the A-IBE
primitive extends IBE. The resulting  construction goes as follows.

\begin{description}
\item[Setup$(\lambda,N)$:]  is as in the Boneh-Hamburg IBBE but the algorithm chooses an additional random group element $g_3$.
 The master public key thus consists of
$\mathsf{mpk}=(g,g_1=g^{\alpha},g_2,g_3,z,\mathbf{h}=g^{\mathbf{a}})$
while the master secret   is $\mathsf{msk}=(\mathbf{a},\alpha )$.
\\ \vspace{-0.3 cm}

\item[Keygen$^{(\mathrm{PKG},\mathsf{U})~}$:]  the two parties conduct the following interactive steps.
\\
\vspace{-0.3 cm}
\begin{itemize}
\item[1.]  $\U$ picks $t_0, \theta \sample \Z_p^*$ and sends a  commitment $R=g_2^{t_0} \cdot g^\theta$ to the PKG and provides
  an  interactive   WI  proof of knowledge of    $(t_0,\theta)$. \\
\vspace{-0.3 cm}
\item[2.] The PKG   outputs $\bot$ if the proof of knowledge is invalid. Otherwise, it picks $r,t_1 \sample \Z_p^*$ and returns
\begin{eqnarray*} \label{IBBE-priv-key-1}
K_{\ID}' &=& (K_{1}',K_{2}',T_{0}',\ldots,T_{N-1}',t_{\ID}') \\
&=& \big((g_2^{t_1} \cdot R \cdot g_3)^{\alpha} \cdot
z^{r},~g^{r},~h_1^{r} \cdot h_0^{-\ID \cdot r}, ~h_2^{r} \cdot
h_1^{-\ID \cdot r},\ldots,~h_N^{r} \cdot h_{N-1}^{-\ID \cdot r},~t_1
\big)
\end{eqnarray*}

\item[3.]   $\U$ picks $r' \sample \Z_p^*$ and computes
 $K_{\ID}=(K_{1} ,K_{2},T_{0},\ldots,T_{N-1},t_{\ID})$, where   $K_1=(K_1' /g_1^{\theta}) \cdot z^{r'}$, $K_2=K_2' \cdot
 g^{r'}$,
 $T_i=T_i' \cdot (h_{i+1} \cdot h_i^{-\ID})^{r'}$
 for indices $i=0,\ldots,N-1$   and
 $t_{\ID}'+t_0$, so that
\begin{eqnarray*} \label{IBBE-priv-key-2}
K_{\ID}&=& (K_{1} ,K_{2},T_{0},\ldots,T_{N-1},t_{\ID})\\ &=&\Big( (
g_2^{t_0+t_1} \cdot g_3 )^{\alpha} \cdot z^{r''}  ,
~g^{r''},~h_1^{r''} \cdot h_0^{-\ID \cdot r''},  \ldots,~h_N^{r''}
\cdot h_{N-1}^{-\ID \cdot r''}    ,~t_0+t_1 \Big),
\end{eqnarray*}
 where $r''=r+r'$.  Then, $\U$ checks whether $d_{\ID}$  satisfies the
relation
\begin{eqnarray*} \label{IBBE-valid-priv-1}
e(K_1,g)  =   e(g_1,g_2)^{t_{\ID}} \cdot e(g_1,g_3) \cdot e(z,K_2),
\end{eqnarray*}
and  $e(g,T_i)  =  e(K_2,h_{i+1} \cdot h_i^{-\ID} )$  for  each  $i
\in \{0,\ldots,N-1\}$. \\ \vspace{-0.2 cm}
\end{itemize}
\item[Encrypt$(\mathsf{mpk},S,m)$:] to encrypt $m \in \G_T$
for  the receiver set   $S=\{\ID_1,\ldots,\ID_n\}$, where $n \leq N$,  \\
\vspace{-0.2 cm}
\begin{itemize}
\item[1.] Expand $P(X) \in \Z_p[X]$ as
\begin{eqnarray*}
P(X)=\prod_{i \in S} (X-\ID_i) = \rho_{n} X^n + \rho_{n-1} X^{n-1} +
\cdots + \rho_1 X + \rho_0.
\end{eqnarray*}
\item[2.] Choose $s\sample \Z_p^*$ and compute
\begin{eqnarray*}
C &=& ( C_0,C_1,C_2,C_3) \\ &=& \Bigl(  m \cdot
e(g_1,g_3)^s,~g^s,~\big(z \cdot h_0^{\rho_0} \cdot h_1^{\rho_1}
\cdots h_n^{\rho_n}\big)^s,~e(g_1,g_2)^s \Bigr). \\
\end{eqnarray*}
\end{itemize}
\item[Decrypt$(\mathsf{mpk},K_{\ID},C,S)$:] parse $C$ as
$( C_0,C_1,C_2,C_3)$ and $K_{\ID}$ as
$$K_{\ID} = (K_{1},K_{2},T_{0},\ldots,T_{N-1},t_{\ID}) \in \G^{N+2} \times \Z_p. $$
\begin{itemize}
\item[1.] Expand $P_{\ID}(X) \in \Z_p[X]$ as
$$ P_{\ID}(X)=\prod_{\ID_j \in S \backslash \{\ID  \} } (X-\ID_j) = y_{n-1}^{(\ID)} X^{n-1} + y_{n-2}^{(\ID)} X^{n-2} + \cdots + y_1^{(\ID)} X + y_0^{(\ID)} $$
and  compute the decryption key
\begin{eqnarray*}
(D_{\ID},d_{\ID},t_{\ID})&=& \bigl(K_{1} \cdot T_0^{y_0^{(\ID)}}
\cdot
T_1^{y_1^{(\ID)}} \cdots T_{n-1}^{y_{n-1}^{(\ID)}},~K_{2 },~t_{\ID} \bigr)\\
\label{deleg-2} &=& \Bigl( (g_2^{t_{\ID}} \cdot g_3)^{\alpha} \cdot
\big(z \cdot h_0^{\rho_0} \cdot h_1^{\rho_1} \cdots h_{n}^{\rho_n}
\big)^{r},~g^{r},~t_{\ID} \Bigr).
\end{eqnarray*}
\item[2.] Recover the plaintext as
\begin{eqnarray*} \label{IBBE-decrypt-eq-1}
m  =   C_0 \cdot e\big(C_1,   D_{\ID}  \big)^{-1} \cdot e\big(C_2, \
d_{\ID}  \big) \cdot C_3^{t_{\ID}}.
\end{eqnarray*}
\end{itemize}
\item[Trace$^{\mathbb{D}}(\mathsf{mpk},K_{\ID},\varepsilon)$:] given a valid private key $K_{\ID}$
 for the identity  $\ID$
 and   a $\varepsilon$-useful    decoder $\mathbb{D}$, the tracing
 algorithm proceeds in a similar fashion to previous schemes, by
 feeding $\mathbb{D}$ with ciphertexts $C=(C_0,C_1,C_2,C_3)$ and the receiver set $S$. In the generation of
 $C$, $C_1$ and $C_2$ are calculated as   specified by the
 encryption algorithm. On the other hand,
 $C_3$ is chosen as  a random element of $\G_T$ and $C_0$ is obtained by applying the decryption algorithm to $S$ and $(C_1,C_2,C_3)$.
\end{description}
Correctness is implied the fact that the decryption key
$(D_{\ID},d_{\ID},t_{\ID})$ satisfies the relation
$e(D_{\ID},g)=e(g_1,g_2)^{t_{\ID}} \cdot e(g_1,g_3) \cdot e(z \cdot
\prod_{i=0}^{n} h_i^{\rho_i} ,d_{\ID})$ and raising both members to
the power $s$ as in previous schemes. \\
\indent To avoid repeating the work of Boneh and Hamburg, we prove
the security properties of the above A-IBBE system by reducing them
to the IND-sID-CPA security of the underlying IBBE.

\begin{theorem} \label{proof-version-IBBE}
The A-IBBE scheme is secure under the $(N+1)$-DBDHE assumption. More
precisely, any IND-sID-CPA adversary against it implies an equally
successful IND-sID-CPA attacker against the Boneh-Hamburg IBBE.
\end{theorem}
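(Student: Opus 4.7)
The plan is to reduce IND-sID-CPA security of the A-IBBE scheme to the IND-sID-CPA security of the underlying Boneh-Hamburg IBBE by a tight, simulation-based argument. Given an A-IBBE adversary $\A$, I would construct a Boneh-Hamburg IBBE adversary $\B$ that simply relays $\A$'s target set, challenge messages, and final guess, while re-packaging the master public key, the interactive key-generation responses, and the challenge ciphertext so that $\A$'s view coincides perfectly with a real A-IBBE game.

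The central trick is in how $\B$ lays out the A-IBBE master public key from the Boneh-Hamburg one $(g,g_1,\bar g_2,z,\mathbf{h})$ (write $\bar g_2$ for the Boneh-Hamburg generator to avoid a name clash). I would set $g_3 := \bar g_2$ and $g_2 := g^{\omega}$ for a freshly chosen $\omega \sample \Z_p^*$, and publish $\mathsf{mpk}=(g,g_1,g_2,g_3,z,\mathbf{h})$ to $\A$. Both swapped components are uniformly distributed group elements, so the distribution of $\mathsf{mpk}$ is unchanged. With this layout, $g_2^{\alpha}=g_1^{\omega}$ is computable by $\B$ alone, while $g_3^{\alpha}=\bar g_2^{\alpha}$ is precisely what the Boneh-Hamburg key-generation oracle delivers inside an honest Boneh-Hamburg secret key.

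For a key query on an identity $\ID\notin S^\star$, $\B$ runs the A-IBBE protocol by first extracting the opening $(t_0,\theta)$ of the commitment $R=g_2^{t_0}g^{\theta}$ by rewinding the WI proof of knowledge, exactly as in earlier proofs of this paper. It then queries its Boneh-Hamburg oracle on $\ID$ to obtain $K_{\ID}^{\mathrm{BH}}=(\bar g_2^{\alpha}z^{\hat r},\,g^{\hat r},\,h_1^{\hat r}h_0^{-\ID\hat r},\ldots,h_N^{\hat r}h_{N-1}^{-\ID\hat r})$, picks $t_1 \sample \Z_p^*$, and returns the blinded key whose first component is $K_1'=g_1^{\omega(t_0+t_1)+\theta}\cdot \bar g_2^{\alpha}z^{\hat r}$, which a direct computation shows to equal $(g_2^{t_1}Rg_3)^{\alpha}z^{\hat r}$; the remaining group components are copied verbatim from $K_{\ID}^{\mathrm{BH}}$ and the last coordinate is $t_1$. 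In the challenge phase, $\B$ forwards $(m_0,m_1)$ to its own challenger, receives $C^{\mathrm{BH}}=(m_{d}\cdot e(g_1,\bar g_2)^s,\,g^s,\,(z\prod_{k}h_k^{\rho_k})^s)$ with $\rho_k$ the coefficients of $\prod_{\ID_i\in S^\star}(X-\ID_i)$, and hands $\A$ the ciphertext $C^\star=(C_0^{\mathrm{BH}},\,C_1^{\mathrm{BH}},\,C_2^{\mathrm{BH}},\,e(g_1,C_1^{\mathrm{BH}})^{\omega})$. Since $C_0^{\mathrm{BH}}=m_d\cdot e(g_1,g_3)^s$ and $e(g_1,C_1^{\mathrm{BH}})^{\omega}=e(g_1,g_2)^s$, the tuple $C^\star$ is an honestly distributed A-IBBE encryption of $m_d$, and $\B$ outputs $\A$'s guess verbatim.

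The main obstacle I expect is a careful bookkeeping check: one must verify that the simulated blinded key $(K_1',K_2',T_0',\ldots,T_{N-1}',t_1)$ has precisely the joint distribution of an honest PKG response, and in particular that re-using the Boneh-Hamburg oracle's internal randomness $\hat r$ as the PKG's ephemeral $r$ does not bias $\A$'s view (it does not, because $\hat r$ is already uniform in $\Z_p^*$ from $\A$'s perspective and is independent of every other simulated quantity). Once these distributional checks are in place, $\A$'s advantage transfers one-for-one to $\B$'s, and Boneh and Hamburg's $(N+1)$-DBDHE-based IND-sID-CPA security of their IBBE \cite{BH08} yields the claimed bound.
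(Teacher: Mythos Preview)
Your reduction is correct, and it takes a genuinely different route from the paper's. The paper keeps the Boneh--Hamburg generator as $g_2$ and embeds $g_3=g_2^{-t^\star}g^{\beta}$ for fresh $t^\star,\beta$; it then plays a \emph{real-or-random} game against the BH challenger, so that in the ``random'' branch the A-IBBE challenge component $C_0'$ contains an extra factor $e(g_1,g_2)^{s't^\star}$ which statistically hides the bit $d$ because $t^\star$ is never revealed. Your swap $g_3:=\bar g_2$, $g_2:=g^{\omega}$ sidesteps all of this: the simulation is perfect in both branches, the standard left-or-right game suffices, and no information-theoretic argument is needed. What the paper's embedding buys is reusability: the relation $g_2^{t^\star}g_3=g^{\beta}$ is exactly what is needed in Lemma~\ref{IBEE-newkey} to fabricate a key for the target identity $\ID^\star\in S^\star$ \emph{without} querying the BH oracle, which your embedding cannot do (since $g_3^{\alpha}=\bar g_2^{\alpha}$ is only available via oracle queries on $\ID\notin S^\star$). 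For Theorem~\ref{proof-version-IBBE} alone, however, all key queries satisfy $\ID\notin S^\star$, so your simpler reduction is entirely adequate and arguably cleaner.
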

\begin{proof}
We show that an IND-sID-CPA adversary $\A$ against the A-IBBE scheme
   gives rise to a ``real-or-random'' IND-sID-CPA adversary
$\B$ (\emph{i.e.}, in which the adversary $\A$ outputs a single
message $m$ and has to decide whether the challenge ciphertext
$C^\star$ encrypts $m$ or a random
message) against the Boneh-Hamburg  IBBE. Hence, the security of the latter implies the security of our scheme. \\
\indent When $\A$ chooses her set of target identities $S^\star =
\{\ID_1^\star,\dots,\ID_{n^\star}^\star \}$, with $n^\star \leq N$,
our adversary $\B$ forwards $S^\star$ to her own challenger and
receives a master public key
 $\mathsf{mpk}^{\mathrm{BH}}=(g,g_1=g^{\alpha},g_2,z,\mathbf{h}=g^{\mathbf{a}}).$
Then,  $\B$ picks   $t^*,\beta \sample \Z_p^*$, computes $g_3 =
g_2^{-t^*} g^{\beta}$ and provides $\A$ with $\mathsf{mpk}=(g,g_1
,g_2,g_3,z,\mathbf{h})$. \\
\indent At any time, $\A$ may request an execution of the key
generation protocol for an arbitrary identity $\ID \not \in
S^\star$. At the beginning of each such protocol, $\A$ sends a
commitment $R=g_2^{t_0} \cdot g^{\theta}$ and interactively proves
knowledge of $(t_0,\theta)$, which $\B$ extracts by rewinding $\A$.
Then, $\B$ chooses $t_1 \sample \Z_p^*$, sets $t=t_0+t_1$ and
queries her own IND-sID-CPA challenger   to obtain a private key
$$\tilde{K}_{\ID} =
(\tilde{K}_{1},\tilde{K}_{2},\tilde{T}_{0},\ldots,\tilde{T}_{N-1})=\big(g_2^{\alpha}
\cdot z^{r},~g^{r},~h_1^{r} \cdot h_0^{-\ID \cdot r}, ~h_2^{r} \cdot
h_1^{-\ID \cdot r},\ldots,~h_N^{r} \cdot h_{N-1}^{-\ID \cdot r}
\big)
$$ for the  identity $\ID$ chosen by $\A$. The latter is turned
into an A-IBBE private key and re-randomized  by setting
\begin{multline*}
 K_{\ID}  =
(K_{1},K_{2},T_{0},\ldots,T_{N-1})   =  \big(g_1^{\beta} \cdot
\tilde{K}_{1}^{(t-t^\star)} \cdot z^{r'} ,\\
\tilde{K}_{2}^{(t-t^\star)} \cdot g^{r'},
\tilde{T}_{0}^{(t-t^\star)} \cdot (h_1 \cdot h_0^{-\ID})^{r'},
\ldots,\tilde{T}_{n-1}^{(t-t^\star)}  \cdot (h_N \cdot
h_{N-1}^{-\ID})^{r'}  \big),
\end{multline*}
where  $r'\sample \Z_p^*$. The new key  $K_{\ID}$ is easily seen to
have the same distribution as those obtained in step 3 of the key
generation protocol. Finally, $\A$ obtains  the ``blinded key''
$K_{\ID}'=(K_{1}',K_{2}',T_{0}',\ldots,T_{N-1}')$, where $K_1'=K_1
\cdot g_1^{\theta}$. \\
\indent In the challenge phase, $\A$ chooses a pair of target
messages $(m_0,m_1)$. The adversary $\B$  chooses a random plaintext
$m^\star \sample \G_T$, which she sends to her own
``real-or-random'' challenger. The latter replies with a challenge
ciphertext
\begin{eqnarray*}
C^\star=( C_0,C_1,C_2)=\Bigl(  m  \cdot
e(g_1,g_2)^{s^\star},~g^{s^\star},~\big(z \cdot h_0^{\rho_0} \cdot
h_1^{\rho_1} \cdots h_{n^\star}^{\rho_{n^\star}}\big)^{s^\star}
\Bigr).
\end{eqnarray*}
for the receiver set $S^\star = \{\ID_1^*,\dots,\ID_{n^\star}^* \}$,
where $m $ is either $m^\star$ or  a random element of $\G_T$. The
adversary $\B$ picks a random   bit $d \sample \{0,1\}$ and computes
$C' = ( C_0',C_1,C_2,C_0/m^\star)$ where $C_0' = m_d \cdot
(C_0/m^\star)^{-t^*} \cdot e(g_1,C_1)^\beta$ and $C'$ is relayed to
$\A$ as a challenge ciphertext. After a second series of
 key generation queries, $\A$ outputs a bit $d'\in \{0,1\}$, and $\B$
outputs ``real'' if $d'=d$ and ``random'' otherwise. It is easy to
see that, if $C^\star$   encrypts   a random plaintext, then $
C_0/m^\star $ can be expressed as $ C_0/m^\star
=e(g_1,g_2)^{s^\star-s'}$, where $s^\star=\log_g(C_1)$ and for some
$s'\neq 0$. In this case, we obtain that $C_0'=m_d \cdot
e(g_1,g_3)^{s^\star} \cdot e(g_1,g_2)^{s't^\star}$ statistically
hides $m_d$ (and thus $\PR[d'=d]=1/2$) since $\A$ has no information
on $t^*$. In contrast, if $C^\star$ encrypts $m^\star$, then $C'$ is
a valid encryption of $m_d$ for the A-IBBE scheme, so that
$\PR[d'=d]=1/2+\mathbf{Adv}^{\mathrm{BH}\textrm{-}\mathrm{IND}\textrm{-}\mathrm{sID}\textrm{-}\mathrm{CPA}}_{\G,\G_T}(\lambda)$,
where the latter advantage function denotes the maximal
``real-or-random'' advantage of any IND-sID-CPA adversary against
the Boneh-Hamburg IBBE. It comes that $\B$'s advantage in the
real-or-random game is exactly
$\mathbf{Adv}^{\mathrm{BH}\textrm{-}\mathrm{IND}\textrm{-}\mathrm{sID}\textrm{-}\mathrm{CPA}}_{\G,\G_T}(\lambda)$.
\qed
\end{proof}

\begin{lemma} \label{IBEE-newkey}
In the selective-ID ComputeNewKey game and for a
$\varepsilon$-useful decryption device $\mathbb{D}$, the probability
that an iteration of the tracing procedure increases $ctr$ is at
least $p_1 \geq \varepsilon -
\mathbf{Adv}^{(N+1)\textrm{-}\mathrm{DBDHE}}_{\G,\G_T}(\lambda)
 $.
\end{lemma}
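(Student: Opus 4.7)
The approach parallels Lemma~\ref{ctr-plus}. I would define two games. In Game$_0$, once the ComputeNewKey adversary $\A$ terminates with a valid private key $d_{\ID^\star}^{(1)}$ and an $\varepsilon$-useful decoder $\mathbb{D}$, the challenger feeds $\mathbb{D}$ a well-formed A-IBBE ciphertext $C^\star=(C_0,C_1,C_2,C_3)$ for the receiver set $\{\ID^\star\}$. In Game$_1$, the challenger instead prepares a tracing-stage ciphertext: $C_1,C_2$ are computed as in the encryption algorithm from a random exponent $s$, $C_3$ is replaced by a uniformly random element of $\G_T$, and $C_0$ is adjusted so that a legitimate decryption with $d_{\ID^\star}^{(1)}$ returns a message $m$ chosen by the challenger. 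Let $p_i$ denote the probability that $\mathbb{D}$ outputs $m$ in Game$_i$. By $\varepsilon$-usefulness, $p_0 = \varepsilon$, and by construction $p_1$ is exactly the per-iteration increment probability in the statement.

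To bound $|p_0-p_1|$ by $\mathbf{Adv}^{(N+1)\textrm{-}\mathrm{DBDHE}}_{\G,\G_T}(\lambda)$, I would build a distinguisher $\B$ that, on input an $(N+1)$-DBDHE instance $(g,g^{\alpha},\ldots,g^{(\alpha^{N+1})},g^{(\alpha^{N+3})},\ldots,g^{(\alpha^{2(N+1)})},h,T)$, prepares the A-IBBE master public key by combining the embedding used in the Boneh--Hamburg IND-sID-CPA proof (which encodes the DBDHE exponents into $z$ and $\mathbf{h}$ so as to handle the selective target $\ID^\star$) with the $g_3 = g_2^{-t^\star}\cdot g^{\beta}$ blinding trick already exploited in Theorem~\ref{proof-version-IBBE}. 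Key generation queries are answered by extracting $(t_0,\theta)$ from $\A$'s WI proof of knowledge via rewinding. For identities $\ID\neq\ID^\star$ the simulator uses the Boneh--Hamburg selective-ID simulation directly; for $\ID^\star$ it additionally picks $t_1=t^\star-t_0$ so that $t_{\ID^\star}=t^\star$, making the private key computable without knowing the master secret $\alpha$ thanks to the cancellation $(g_2^{t^\star}\cdot g_3)^{\alpha}=g_1^{\beta}$.

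Once $\A$ outputs $d_{\ID^\star}^{(1)}$ and $\mathbb{D}$, $\B$ builds the challenge ciphertext: $(C_1,C_2)$ are produced from the DBDHE exponents exactly as in the Boneh--Hamburg reduction (these implicitly fix a random encryption exponent $s$), $C_3$ is set equal to $T$ up to factors that $\B$ can compute from the instance, and $C_0$ is obtained by inverting the decryption formula using $d_{\ID^\star}^{(1)}$ against a plaintext $m$ picked by $\B$. When $T=e(g,h)^{(\alpha^{N+1})}$ the resulting ciphertext has the distribution of Game$_0$; when $T$ is random, $C_3$ is uniformly distributed in $\G_T$ and the ciphertext has the distribution of Game$_1$. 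Hence $\B$'s distinguishing advantage equals $|p_0-p_1|$, which gives $p_1\geq \varepsilon-\mathbf{Adv}^{(N+1)\textrm{-}\mathrm{DBDHE}}_{\G,\G_T}(\lambda)$ after rearranging.

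The main obstacle is showing that a single reduction can simultaneously (i) answer every key generation query, including queries on $\ID^\star$ which are permitted in ComputeNewKey, (ii) embed the DBDHE challenge $T$ into $C_3$ while keeping $(C_1,C_2)$ correctly distributed for the receiver set $\{\ID^\star\}$, and (iii) use the adversary-supplied $d_{\ID^\star}^{(1)}$ to form $C_0$ consistently. Fortunately the $(t^\star,\beta)$ blinding of $g_3$ decouples the key-generation simulator from the challenge-ciphertext simulator, so the selective-ID simulation of Boneh--Hamburg and the blinding reduction of Theorem~\ref{proof-version-IBBE} can essentially be composed off the shelf.
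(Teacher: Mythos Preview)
Your proposal is correct in spirit, but it takes a different route from the paper. You build a \emph{direct} distinguisher against the $(N+1)$-DBDHE assumption by embedding the instance into $\mathsf{mpk}$ via the Boneh--Hamburg IND-sID-CPA simulation and then layering the $g_3=g_2^{-t^\star}g^{\beta}$ trick on top. The paper instead gives a \emph{modular} reduction: it builds a real-or-random IND-sID-CPA adversary $\B$ against the Boneh--Hamburg IBBE itself, using that scheme's challenger $\mathcal{C}^{\mathrm{BH}}$ as a black box. Concretely, $\B$ forwards $S^\star=\{\ID^\star\}$, receives $\mathsf{mpk}^{\mathrm{BH}}$, adjoins $g_3=g_2^{-t^\star}g^{\beta}$, answers $\ID\neq\ID^\star$ queries by obtaining $\tilde K_{\ID}$ from $\mathcal{C}^{\mathrm{BH}}$ and reshaping it (as in Theorem~\ref{proof-version-IBBE}), answers the $\ID^\star$ query via the $t_1=t^\star-t_0$ cancellation, and in the tracing phase sends a random $m^\star$ to $\mathcal{C}^{\mathrm{BH}}$, sets $C_3=C_0^\star/m^\star$, $C_1=C_1^\star$, $C_2=C_2^\star$, and computes $C_0$ from the adversary-supplied key via the decryption formula. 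The bound then follows from $\mathbf{Adv}^{\mathrm{BH}\textrm{-}\mathrm{IND}\textrm{-}\mathrm{sID}\textrm{-}\mathrm{CPA}}\leq\mathbf{Adv}^{(N+1)\textrm{-}\mathrm{DBDHE}}$.

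The paper's modular route buys brevity: it avoids reproducing the Boneh--Hamburg embedding of $\alpha$-powers into $z,\mathbf{h}$ and the delicate computation placing $T$ into $C_3$. Your direct route is exactly what the paper flags, in the paragraph following Theorem~\ref{newkeyIBBE}, as an interesting alternative that might moreover allow minimizing the number of rewinds; if you pursue it, the nontrivial step to flesh out is verifying that the Boneh--Hamburg parameter embedding is compatible with computing $K_{\ID^\star}$ via the $g_1^{\beta}$ shortcut and with expressing $e(g_1,g_2)^s$ as (a computable multiple of) $T$.
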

\begin{proof}  Let us assume that, at the end of the selective-ID ComputeNewKey game, the dishonest user $\A$ outputs a device $\mathbb{D}$
for which a given iteration of the tracing procedure increments
$ctr$ with  a probability $p_1$, which is significantly smaller than
$\varepsilon$. Then, we show how to obtain an IND-sID-CPA adversary $\B$ against the Boneh-Hamburg IBBE. \\
\indent The adversary $\B$ plays the IND-sID-CPA game against a
challenger $\mathcal{C}^{\mathrm{BH}}$ and plays $\A$'s challenger
in the selective-ID ComputeNewKey game. At the outset of the latter,
$\A$ chooses a target identity $\ID^*$ and $\B$ chooses her set of
target identities as  $S^\star=\{\ID^\star \}$. When seeing the
description of $S^\star$, the IBBE challenger
$\mathcal{C}^{\mathrm{BH}}$ generates a master public key
$\mathsf{mpk}^{\mathrm{BH}}=(g,g_1,g_2,z,\mathbf{h})$. Then, $\B$
chooses $t^\star,\beta \sample \Z_p^*$ and sets $g_3=g_2^{-t^\star}
\cdot g^{\beta}$. The master public key of the A-IBBE system is
defined as $\mathsf{mpk}=(g,g_1,g_2,g_3,z,\mathbf{h} )$ and given to
$\A$.  \\
\indent Then, $\A$ starts making a number of key generation queries.
For each key generation query involving an identity $\ID \neq
\ID^\star $, $\B$ proceeds by invoking her own challenger
$\mathcal{C}^{\mathrm{BH}}$, exactly as in the proof of theorem
\ref{proof-version-IBBE}.
   When $\A$ queries a private key $K_{\ID^\star}$ for the
target identity $\ID^\star$, $\B$ first rewinds the proof of
knowledge so as to extract the pair $(t_0,\theta)$ such that
$R=g_2^{t_0} \cdot g^{\theta}$ in the commitment. Then, it sets
$t_1=t^\star-t_0$ (in such a way that $t=t_0+t_1=t^\star$). In this
case, $\B$ can compute an A-IBBE private key $K_{\ID^\star}$ on her
own (without having to query  $\mathcal{C}^{\mathrm{BH}}$) as
$$(K_1,K_2,T_0,\ldots,T_{N-1},t_{\ID^\star})=\big( g_1^{\beta} \cdot z^r, ~g^r, ~(h_1 \cdot h_0^{-\ID^\star})^r, \ldots,~(h_{N } \cdot h_{N-1}^{-\ID^\star})^r,~t^\star \big), $$
which is well-formed since $g_2^{t^\star} \cdot g_3=g^{\beta}$.
Finally, $\B$ returns  the ``blinded key'' $K_{\ID^\star}'=\big(
g_1^{\theta} \cdot K_1, K_2, T_0, \ldots,T_{N-1},t_1 \big)  $  to
$\A$.\\
 \indent At the end of the game, $\A$ outputs a private key $K_{\ID^\star}$
and a $\varepsilon$-useful device for the identity $\ID^\star$. In
the tracing stage, $\B$ sends a random plaintext $m^\star \sample
\G_T$ to $\mathcal{C}^{\mathrm{BH}}$ who replies with a challenge $(
C_0^\star,C_1^\star,C_2^\star)$, where   $C_0^\star=m^\star \cdot
e(g_1,g_2)^{s^\star}$ and $C_1=g^{s^\star}$ if
$\mathcal{C}^{\mathrm{BH}}$ is playing the ``real'' game. On the
other  hand, if $\mathcal{C}^{\mathrm{BH}}$ decides to play the
``random'' game, $C_0^\star$ is random in $\G_T$. To construct a
ciphertext for the A-IBBE scheme, $\B$ sets $C_3=C_0^\star/m^\star$
(which equals $e(g_1,g_2)^{s^\star}$ in the ``real'' game and
$e(g_1,g_2)^{s'}$, with $s'\neq s^\star$ in the ``random'' game),
$C_1=C_1^\star$ and $C_2=C_2^\star$. To compute $C_0$, $\B$ chooses
$m \sample \G_T$ and calculates
\begin{eqnarray} \label{C0-eq}
C_0  =   m  \cdot e\big(C_1,   D_{\ID^\star}  \big)  \cdot
e\big(C_2, \ d_{\ID^\star}  \big)^{-1} \cdot C_3^{-t^\star},
\end{eqnarray}
where $(D_{\ID^\star},d_{\ID^\star},t^\star)$ is the decryption key
for the identity $\ID^\star$ and the receiver set $S^\star$, which
is obtained from $K_{\ID^\star}$. \\
\indent If   $\mathbb{D}$ returns the correct plaintext $m$, the
distinguisher $\B$ halts and outputs ``real'' (meaning that
$\mathcal{C}^{\mathrm{BH}}$ is playing the ``real'' game).
Otherwise, $\B$ outputs ``random''. In the former  case,
$(C_0,C_1,C_2,C_3)$ is a valid ciphertext for the receiver set
$S^\star =\{ \ID^\star \}$ and $\B$ returns $1$ with probability
$\varepsilon $ since $\mathbb{D}$ is a $\varepsilon$-useful device.
If $\mathcal{C}^{\mathrm{BH}}$ plays the random game, $\log_g(C_1)
\neq \log_{e(g_1,g_2)}(C_3)$ and $(C_0,C_1,C_2,C_3)$ has the
distribution of a ciphertext generated in iterations of the tracing
stage.  In this case, the probability that $\mathbb{D}$ returns the
plaintext $m$ is $p_1 $. By the definition of IND-sID-CPA security
of the IBBE scheme, we must have $\varepsilon-p_1  \leq
\mathbf{Adv}^{\mathrm{BH}\textrm{-}\mathrm{IND}\textrm{-}\mathrm{sID}\textrm{-}\mathrm{CPA}}_{\G,\G_T}(\lambda)
$. Since the result of \cite{BH08} implies that
$\mathbf{Adv}^{\mathrm{BH}\textrm{-}\mathrm{IND}\textrm{-}\mathrm{sID}\textrm{-}\mathrm{CPA}}_{\G,\G_T}(\lambda)
\leq \mathbf{Adv}^{(N+1)\textrm{-}\mathrm{DBDHE}}_{\G,\G_T}(\lambda)
$, the claimed result   follows.  \qed
\end{proof}

\begin{theorem} \label{newkeyIBBE}
In the selective-ID $\mathrm{ComputeNewKey}$ game, any PPT adversary
has negligible advantage assuming that the $(N+1)$-DBDHE assumption
holds.
\end{theorem}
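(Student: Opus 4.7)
The plan is to lift the proof of Theorem \ref{black-box-newkey} essentially verbatim to the IBBE setting, replacing the role of Lemma \ref{ctr-plus} (for the basic A-IBE scheme) by Lemma \ref{IBEE-newkey}. The adversary wins the black-box ComputeNewKey game only if she produces an $\varepsilon$-useful pirate device $\mathbb{D}$ such that the tracing algorithm terminates with $ctr=0$ and therefore incriminates the PKG. Our task is thus to upper-bound $\Pr[ctr=0]$ by a negligible function of $\lambda$.

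First, I would observe that since the decoder $\mathbb{D}$ is assumed stateless (or resettable), the $L=16\lambda/\varepsilon$ iterations of the tracing algorithm are mutually independent. Consequently, the counter $ctr$ is a sum of $L$ i.i.d.\ Bernoulli random variables, each with expectation $p_1$ equal to the probability that a single iteration increments $ctr$. Setting $\mu=\mathbf{E}[ctr]=Lp_1$, Lemma \ref{IBEE-newkey} yields
\begin{equation*}
p_1 \;\geq\; \varepsilon \;-\; \mathbf{Adv}^{(N+1)\textrm{-}\mathrm{DBDHE}}_{\G,\G_T}(\lambda).
\end{equation*}
Under the $(N+1)$-DBDHE assumption, the right-hand term is negligible, so we certainly have $\mathbf{Adv}^{(N+1)\textrm{-}\mathrm{DBDHE}}_{\G,\G_T}(\lambda) \leq \varepsilon/2$ for all sufficiently large $\lambda$ (recall that $\varepsilon/2$ is non-negligible by hypothesis, otherwise the decoder is not $\varepsilon$-useful). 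This gives $\mu \geq L\varepsilon/2 = 8\lambda$.

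Next, applying the standard multiplicative Chernoff bound $\Pr[ctr < (1-\omega)\mu] < \exp(-\mu\omega^2/2)$ with $\omega=1/2$, I obtain
\begin{equation*}
\Pr[ctr < 1] \;\leq\; \Pr[ctr < 4\lambda] \;\leq\; \Pr[ctr < \mu/2] \;<\; \exp(-\mu/8) \;\leq\; \exp(-\lambda),
\end{equation*}
which is negligible in $\lambda$. Hence the probability that the tracing algorithm wrongly incriminates the PKG is negligible, concluding the proof.

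There is essentially no obstacle here beyond what was already handled in Lemma \ref{IBEE-newkey}: that lemma already carries out the reduction from a distinguishing advantage between real tracing ciphertexts (with random $C_3$) and genuine encryptions down to the $(N+1)$-DBDHE assumption via the Boneh--Hamburg IND-sID-CPA reduction, which is the only place any non-routine work was needed. The present theorem is thus just the Chernoff-bound wrap-up, structurally identical to the proof of Theorem \ref{black-box-newkey}.
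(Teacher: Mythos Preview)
Your proposal is correct and matches the paper's approach exactly: the paper's own proof simply states that it is analogous to that of Theorem~\ref{black-box-newkey} and omits the details. You have supplied precisely those details, replacing Lemma~\ref{ctr-plus} by Lemma~\ref{IBEE-newkey} and rerunning the Chernoff-bound argument, which is what the paper intends.
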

\begin{proof}
Again, the proof is  similar to the one of theorem
\ref{black-box-newkey} and is  omitted. \qed
\end{proof}
As in previous schemes, as long as pirate devices are stateless, no
dishonest PKG can create one that gets the tracing procedure to
accuse the user and the result holds unconditionally.
\begin{theorem} \label{findkeyIBBE}
In the information theoretic sense, no adversary has an advantage in
the $\mathrm{FindKey}\textrm{-}\mathrm{CPA}$ game.
\end{theorem}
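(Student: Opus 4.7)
The plan is to mirror the argument used for Theorem \ref{findkey} in the basic scheme, exploiting the fact that the commitment and the proof of knowledge sent by the user in the very first round of the key generation protocol are both perfectly hiding/witness-indistinguishable. Concretely, the only message the PKG (i.e., the adversary $\A$) ever receives from the honest user that depends on $t_0$ is the pair $(R,\pi)$, where $R=g_2^{t_0}\cdot g^{\theta}$ is a Pedersen commitment in the basis $(g_2,g)$ and $\pi$ is an interactive witness-indistinguishable proof of knowledge of a representation $(t_0,\theta)$ of $R$ with respect to $(g_2,g)$.

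First I would observe that the commitment $R$ is perfectly hiding: for every candidate value $t_0'\in\Z_p^*$ there exists a unique $\theta'\in\Z_p$ such that $R=g_2^{t_0'}\cdot g^{\theta'}$, so $R$ by itself leaks no information about $t_0$. Next, I would invoke the perfect witness-indistinguishability of Okamoto's protocol \cite{Ok92} for proving knowledge of a discrete-logarithm representation: the transcript $\pi$ produced by the honest user has exactly the same distribution regardless of which witness $(t_0,\theta)$ the prover is using. Consequently, the joint view of the adversary after step 1 of the protocol is statistically independent of $t_0$.

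Since the PKG's choice of $t_1$ in step 2 is made from a view that contains no information about $t_0$, and since $t_0$ was drawn uniformly at random from $\Z_p^*$ by the user, the final key-family number $n_F = t_{\ID} = t_0 + t_1$ is uniformly distributed in $\Z_p$ conditioned on the adversary's entire view of the protocol. Therefore, any key $K_{\ID}^{(2)}$ that $\A$ outputs in the FindKey-CPA game satisfies $\mathbf{Trace}(\mathsf{mpk},K_{\ID}^{(2)}) = \mathbf{Trace}(\mathsf{mpk},K_{\ID}^{(1)})$ with probability at most $1/(p-1)$, which is negligible. I do not anticipate any real obstacle: the only delicate point is to argue carefully that no other message of the protocol (the verifier's challenge in $\pi$, or any side channel) can leak information about $t_0$, but since the user plays the prover role and the PKG plays the verifier role, this follows directly from the perfect WI property and the perfect hiding of the Pedersen commitment.
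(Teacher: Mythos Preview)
Your argument is correct and is exactly the one the paper intends: the paper omits the proof of Theorem~\ref{findkeyIBBE}, relying on the analogy with Theorem~\ref{findkey}, and your reasoning---perfect hiding of the Pedersen commitment $R=g_2^{t_0}\cdot g^{\theta}$ together with perfect witness indistinguishability of Okamoto's protocol, hence $t_{\ID}=t_0+t_1$ is uniform given the PKG's view---is precisely that argument. One small adjustment: the A-IBBE scheme defines only a black-box \textbf{Trace}, so in this FindKey-CPA game the adversary outputs a decoder $\mathbb{D}$ rather than a key $K_{\ID}^{(2)}$; your final sentence should therefore be recast along the lines of Lemma~\ref{findkey-black-box} (the box cannot output the tracer's plaintext $m$ because the correct answer depends on the information-theoretically hidden $t_{\ID}$), but this follows immediately from the uniformity you have already established.
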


We remark that it is possible to re-write the description of our
scheme of section  \ref{efficient-DBDH} in such a way that its
security properties can be reduced to the security of the first
Boneh-Boyen IBE \cite{BB2} (in the same way as we reduced the
security of our A-IBBE to the security of the underlying IBBE).
However, giving a proof from scratch allowed us to avoid rewinding
  as much as possible in section  \ref{efficient-DBDH}. It would be
  interesting to see if, in our A-IBBE, the  number of rewinds can
  also be minimized by giving direct proofs under the $(N+1)$-DBDHE
  assumption for theorem \ref{proof-version-IBBE} and lemma
  \ref{IBEE-newkey}.
\\
\indent It is also noteworthy  that other IBE-related primitives can
be made accountable using the same technique. Due to their algebraic
  similarities with the ``commutative blinding''
IBE family, the ``large-universe'' attribute-based encryption
schemes described in \cite{SW05,GPSW} can easily be tweaked to
support accountability  in the weak black-box  model.

\section{Conclusion}
We described the first A-IBE system allowing for weak black-box
traceability while retaining short ciphertexts and private keys. We
also suggested a white-box variant that dwells secure against
dishonest PKGs equipped with a decryption oracle. In the black-box
setting, it remains an open problem to  achieve the latter property
without significantly degrading the efficiency. \\
\indent In the setting of hierarchical IBE schemes, it would also be
desirable to see how the problem can be addressed. When a pirate
decoder is found to decrypt ciphertexts intended for  a node, one
should be able to determine which ancestor(s) of that node should be
blamed.


\appendix

\section{A Variant with White-Box FindKey-CCA security} \label{cca}
To achieve IND-sID-CCA2 security, we can hybridize the scheme using
an authenticated symmetric encryption
 scheme (as defined in appendix \ref{auth-sym}) as previously
considered    in \cite{SC07,KV08} in the context of identity-based
encryption. The obtained  variant is
  reminiscent of a   version of Gentry's IBE
described in \cite{KV08} and can be proved IND-sID-CCA2 secure in a
completely analogous way.

\begin{description}
\item[Setup:]
is the same as in section \ref{efficient-DBDH} except that the PKG
now chooses two elements $Y_A,Y_B \sample \G$ instead of a single
one $Y$. An authenticated symmetric encryption scheme
$(\mathsf{E},\mathsf{D})$ of keylength $\ell \in \mathbb{N}$, a
secure key derivation function $KDF :\G_T \rightarrow \{0,1\}^\ell$
and  a  target collision-resistant hash function $H  :\{0,1\}^*
\rightarrow \Z_p^*$  are also needed. The master key is set as
$\mathsf{msk}:=x$ and the global
public key is   $\mathsf{mpk}:=(X=g^x,h,Y_A,Y_B,Z,H,KDF,(\mathsf{E},\mathsf{D}))$.    \\
\vspace{-0.3 cm}.
\item[Keygen$^{(\mathrm{PKG},\mathsf{U})~}$:] to obtain a private key for his identity $\ID$, a user $\U$ interacts with the PKG as follows. \\
\vspace{-0.3 cm}
\begin{itemize}
\item[1.]   $\U$ sends $R=h^{t_0} \cdot X^\theta$ to the PKG and proves his knowledge of the underlying pair $(t_0,\theta) \sample (\Z_p^*)^2$ in
a witness indistinguishable fashion.
\item[2.] The PKG   outputs $\bot$ if the proof is incorrect. Otherwise, it picks random values $r_{A}',t_{A,1},r_B',t_B \sample \Z_p^*$ and returns
\begin{eqnarray*} \label{priv-key-1-cca}
d_{\ID,A}' &=& (d_{A,1}',d_{A,2}',d_{A,3}')  =  \Big( ( Y \cdot R
\cdot h^{t_{A,1}})^{1/x} \cdot (g^\ID \cdot
Z)^{r_{A}'},~X^{r_{A}'},~t_{A,1} \Big) \\
d_{\ID,B}' &=& (d_{B,1}',d_{B,2}',d_{B,3}')  =  \Big( ( Y_B  \cdot
h^{t_{B}})^{1/x} \cdot (g^\ID \cdot Z)^{r_{B}'},~X^{r_{B}'},~t_{B}
\Big)
\end{eqnarray*}
\item[3.]   $\U$   computes
 $d_{\ID,A}=({d_{A,1}'}/{g^\theta} \cdot (g^{\ID} \cdot Z)^{r_A''},~d_{A,2}' \cdot X^{r_A''},~{d_{A,3}'}+t_0)$
as well as $d_{\ID,B}=({d_{B,1}'} \cdot (g^{\ID} \cdot
Z)^{r_B''},~d_{B,2}' \cdot X^{r_B''},~{d_{B,3}})$, for randomly
chosen $r_A'',r_B'' \sample \Z_p^*$ so that
\begin{eqnarray*} \label{priv-key-2-cca}
d_{\ID,A}&=&({d_{A,1}},{d_{A,2}},{d_{A,3}})=\Big( ( Y_A \cdot
h^{t_A})^{1/x} \cdot (g^\ID \cdot Z)^{r_A},~X^{r_A},~t_A  \Big) \\
d_{\ID,B}&=&({d_{B,1}},{d_{B,2}},{d_{B,3}})=\Big( ( Y_B \cdot
h^{t_B})^{1/x} \cdot (g^\ID \cdot Z)^{r_B},~X^{r_B},~t_B  \Big)
\end{eqnarray*}
where $t_A=t_0+t_{A,1}$, $r_A=r_A'+r_A''$ and $r_B=r_B'+r_B''$.  He
checks whether $d_{\ID,A}$ and $d_{\ID,B}$ respectively satisfy
\begin{eqnarray} \label{valid-priv-1-cca}
e(d_{A,1},X)&=& e(Y_A,g) \cdot e(h,g)^{d_{A,3}} \cdot e(g^\ID \cdot
Z,d_{A,2}) \\   \label{valid-priv-2-cca} e(d_{B,1},X)&=& e(Y_B,g)
\cdot e(h,g)^{d_{B,3}} \cdot e(g^\ID \cdot Z,d_{B,2}).
\end{eqnarray}
If so, he sets his private key as $(d_{\ID,A},d_{\ID,B}) $ and the
latter belongs
to the family of decryption key identified by $n_F=d_{A,3}=t_A$.   \\
\vspace{-0.3 cm}
\end{itemize}
\item[Encrypt:] to encrypt   $m  $ given $\mathsf{mpk}$ and $\ID$, choose $s \sample \Z_p^*$ and compute \\
\vspace{-0.2 cm}
$$ C= \big( C_1,C_2,C_3,C_4  \big)= \Big(  X^{s}    ,~ (g^\ID \cdot Z)^{s},~e(g,h)^s   ,~ \mathsf{E}_K(m)    \Big)  $$
where $K=KDF(   e(g,Y_A)^s  \cdot e(g,Y_B)^{\kappa s})$ and
$\kappa=H(C_1,C_2,C_3)$. \\ \vspace{-0.3 cm}
\item[Decrypt:] given $C= \big(C_1,C_2,C_3,C_4  \big) $ and $d_{\ID}=(d_{\ID,A},d_{\ID,B})$,
 compute the plaintext $m =\mathsf{D}_{K}(C_4)$ (which may just be $\bot$ if $C_4$ is not a valid authenticated encryption)
 using the key
 \begin{eqnarray} \label{decrypt-cca}
   K= KDF \Big( \frac{e(C_1, d_{A,1} \cdot d_{B,1}^\kappa  )}{e(C_2,d_{A,2} \cdot d_{B,2}^\kappa  ) \cdot C_3^{d_{A,3} + \kappa
   d_{B,3}}} \Big)
 \end{eqnarray}
   with $\kappa=H(C_1,C_2,C_3)$. \\ \vspace{-0.3 cm}
 \item[Trace:] given an alleged private key $(d_{\ID,A},d_{\ID,B})$, with $d_{\ID,A}=(d_{A,1},d_{A,2},d_{A,3})$, for an identity $\ID$,
 check the validity of $d_{\ID}$ w.r.t. $\ID$
 using relations  (\ref{valid-priv-1-cca})-(\ref{valid-priv-2-cca}). If
 valid, the key is declared as a member of the family   $n_F=d_{3,A}=t_A$.
\end{description}
The proof of IND-sID-CCA security is omitted here as it is a
standard application of the
  technique  used in \cite{KV08}, which in turn borrows  ideas
from \cite{KD04,SG04,HK07}.

In the chosen-ciphertext scenario, the white-box FindKey security is
no longer unconditional but relies on the (weak) ciphertext
integrity property of the symmetric encryption scheme.
\begin{theorem} \label{find-cca}
The scheme is $\mathrm{FindKey}\textrm{-}\mathrm{CCA}$ secure
assuming the security of the key derivation function and the (weak)
ciphertext integrity of the symmetric encryption scheme. The
advantage of an adversary $\A$ making at most $q_d$ decryption
queries is bounded by
\begin{multline*}
\mathbf{Adv}_{\mathcal{A}}^{\mathrm{FindKey}\textrm{-}\mathrm{CCA}}(\lambda,\ell)
\leq     2  \cdot  q_d \cdot
\mathbf{Adv}^{\mathsf{CT\textrm{-}INT}}(\ell) \\   + 2 \cdot q_d
\cdot
 \mathbf{Adv}^{\mathsf{KDF}}(\lambda,\ell)      + \frac{ 2q_d^2 + q_d +1 }{
 p}. ~
\end{multline*}
\end{theorem}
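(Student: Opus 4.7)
The idea is a hybrid argument that strips the decryption oracle of any dependence on $t_A = t_0 + t_{A,1}$, after which one is back in the information-theoretic setting of Theorem~\ref{findkey} where Pedersen hiding and witness indistinguishability already give a negligible winning probability. The only new issue compared to FindKey-CPA is that the decryption oracle may leak $t_A$ through its use of the component $d_{A,3}=t_A$ inside equation~\eqref{decrypt-cca}.

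Call a decryption query $(C_1,C_2,C_3,C_4)$ \emph{well-formed} if a single exponent $s\in\Z_p^*$ satisfies $C_1=X^s$, $C_2=(g^{\ID}\cdot Z)^s$ and $C_3=e(g,h)^s$ simultaneously. A direct computation shows that on well-formed inputs the KDF argument in~\eqref{decrypt-cca} collapses to $e(g,Y_A)^s\cdot e(g,Y_B)^{\kappa s}$, which is independent of $(t_A,t_B,r_A,r_B)$; such queries therefore leak nothing about the family. For ill-formed inputs, writing $C_1=X^{s_1}$, $C_2=(g^{\ID}Z)^{s_2}$, $C_3=e(g,h)^{s_3}$ with not all $s_i$ equal, the KDF argument picks up a multiplicative factor $e(g,h)^{(s_1-s_3)(t_A+\kappa t_B)}\cdot e(g^{\ID}Z,X)^{(s_1-s_2)(r_A+\kappa r_B)}$, and at least one of $s_1-s_2$, $s_1-s_3$ is nonzero modulo $p$. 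Since $t_A$ is uniform in $\Z_p$ from $\A$'s standpoint (by Pedersen perfect hiding and WI, as in Theorem~\ref{findkey}) and the randomizers $r_A,r_B$ are masked by the user-chosen blinders $r_A''$, $r_B''$, this extra factor is uniformly distributed in $\G_T$.

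I would then run a hybrid over the $q_d$ decryption queries, treating each ill-formed query with two sub-hops. The first sub-hop replaces the KDF output by a freshly sampled uniform string in $\{0,1\}^\ell$; by the uniformity of the KDF argument established above and the security of $KDF$, this costs $\mathbf{Adv}^{\mathsf{KDF}}(\lambda,\ell)$ per query. The second sub-hop lets the oracle simply return $\perp$; since $C_4$ would now have to verify under a key that is uniform and independent of $\A$'s view, this costs $\mathbf{Adv}^{\mathsf{CT\textrm{-}INT}}(\ell)$ per query. After exhausting all $q_d$ queries the decryption oracle only answers well-formed ciphertexts, and its outputs become a deterministic function of public quantities; we are then back in the setting of Theorem~\ref{findkey} and the residual advantage is purely statistical.

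The main obstacle is that the secrets $(t_A,t_B,r_A,r_B)$ are reused across all queries, so one must ensure that each KDF argument remains uniform \emph{conditionally} on the KDF arguments of earlier queries. This reduces to a linear-algebra argument in $\Z_p$: the coefficient of $(t_A,r_A)$ in the $i$-th ill-formed query's argument is $\bigl(s_1^{(i)}-s_3^{(i)},\,s_1^{(i)}-s_2^{(i)}\bigr)$, which remains a fresh nonzero vector across distinct queries with overwhelming probability, and the ambient randomness of $(t_A,r_A,r_B)$ preserves marginal uniformity throughout the hybrid. The remaining statistical slack $(2q_d^2+q_d+1)/p$ accumulates from collisions on $\kappa=H(C_1,C_2,C_3)$ and on the KDF arguments themselves, from accidental matches of a random key with the real key, and from the probability of directly guessing $t_A$.
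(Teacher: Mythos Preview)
Your overall strategy---observe that well-formed queries leak nothing, argue that on ill-formed queries the KDF argument is (close to) uniform in $\G_T$, then invoke KDF security and ciphertext integrity to reject---matches the paper's approach. Two points, however, deserve correction.

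First, the conditional-uniformity step is not right as stated. You claim that the coefficient vector $\bigl(s_1^{(i)}-s_3^{(i)},\,s_1^{(i)}-s_2^{(i)}\bigr)$ ``remains a fresh nonzero vector across distinct queries with overwhelming probability.'' But the adversary chooses $(C_1,C_2,C_3)$ adaptively, and nothing stops her from reusing the same triple (hence the same coefficient vector and the same $\kappa$) in every query while varying only $C_4$. There is no probability over the queries. The paper's argument is different: it separates ill-formed ciphertexts into Type~I ($\log_X C_1\neq\log_{g^{\ID}Z}C_2$) and Type~II ($\log_X C_1=\log_{g^{\ID}Z}C_2\neq\log_{e(g,h)}C_3$). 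For Type~I the randomness comes from $r_A+\kappa r_B$ (which the user re-randomized and is hidden from $\A$); for Type~II it comes from $t_A$. In either case the relevant secret is a single element of $\Z_p$, and each \emph{rejected} query lets $\A$ rule out at most one candidate value; after $i$ rejections the conditional distribution is uniform on $p-i$ points, hence at statistical distance $\leq i/p\leq q_d/p$ from uniform. This is where the $q_d^2/p$ terms come from (one for each type, giving the factor~$2$), not from any linear-independence argument over the query vectors.

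Second, the attribution of the $(2q_d^2+q_d+1)/p$ slack is off: the proof uses neither collision resistance of $H$ nor collisions among KDF arguments. The $2q_d^2/p$ is the accumulated statistical distance from the rule-out argument above (once for Type~I, once for Type~II), and the remaining $(q_d+1)/p$ is simply the probability that $\A$ guesses $t_A$ correctly when at most $q_d$ candidates have been eliminated.
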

\begin{proof}
Given in appendix \ref{proof-find-cca}. \qed
\end{proof}

\section{Authenticated Symmetric Encryption} \label{auth-sym}

A  symmetric encryption scheme is specified by a pair
$(\mathsf{E},\mathsf{D})$, where $\textsf{E}$ is the encryption
algorithm and $\textsf{D}$ is the decryption procedure, and a key
space $\mathcal{K}(\ell)$ where $\ell \in \mathbb{N}$ is a security
parameter. The security of authenticated symmetric encryption is
defined by means of two games that capture the ciphertext
indistinguishability and ciphertext (one-time) integrity properties.

\begin{definition} A  symmetric encryption scheme is  secure in the sense of authenticated encryption if
any PPT adversary has negligible advantage in the following games.
\end{definition}
\begin{enumerate}
\item \textbf{The \textrm{IND-SYM} game.} For  any PPT algorithm $\A$, the model considers the following game, where $\ell \in \N$ is a security parameter:
\begin{center}
\begin{tabular}{l}
\fbox{$\mathbf{Game}_{\A}^{\mathsf{IND\textrm{-}SYM}}(\ell)$}\\
$K \sample \mathcal{K}(\ell)$ \\
$(m_0,m_1,s) \leftarrow \A(\mathsf{find},\ell)$ \\
$d^\star \sample \{0,1\}$ \\
$c^\star \leftarrow \mathsf{E}_K(m_{d^\star})$ \\
$d \leftarrow \A(\mathsf{guess},s,c^\star)$ \\
\texttt{return} $1$ \texttt{if} $d=d^{\star}$ \texttt{and} $0$
\texttt{otherwise}.
\end{tabular}
\end{center}
$\mathcal{A}$'s  advantage  is
    $\mathbf{Adv}_{\mathcal{A}}^{\mathsf{IND\textrm{-}SYM}}(\ell) =
  | \Pr[\mathbf{Game}_{\mathcal{A}}^{\mathsf{IND\textrm{-}SYM}}=1] - {1}/{2}
  |.$ \\ \vspace{-0.2 cm}

\item[2.] \textbf{The \textrm{CT-INT} game.} Let $\A$ be a PPT algorithm. We consider the following game, where $\ell \in \N$ is a security parameter:
\begin{center}
\begin{tabular}{l}
\fbox{$\mathbf{Game}_{\A}^{\mathsf{CT\textrm{-}INT}}(\ell)$}\\
$K \sample \mathcal{K}(\ell)$ \\
$(m,s) \leftarrow \A(\mathsf{find},\ell)$ \\
$c \leftarrow \mathsf{E}_K(m)$ \\
$c'  \leftarrow \A(\mathsf{create},\ell,c)$ \\
\texttt{return} $1$ \texttt{if} $c' \neq c$ \texttt{and} $\mathsf{D}_K(c') \neq \perp$ \\
\phantom{{\texttt{return}}} $0$ \texttt{otherwise}.
\end{tabular}
\end{center}
$\mathcal{A}$'s advantage is now defined
  as $\mathbf{Adv}_{\mathcal{A}}^{\mathsf{CT\textrm{-}INT}}(\ell) =
 \Pr[\mathbf{Game}_{\mathcal{A}}^{\mathsf{CT\textrm{-}INT}}=1].$
\end{enumerate}

The notion of weak ciphertext integrity is defined in the same way
but the adversary is not allowed to see an encryption $c$ under the
challenge key $K$.

\section{Proof of Theorem \ref{find-cca}} \label{proof-find-cca}

The proof proceeds   with a sequence of two games, in   which
$S_i$ denotes the event that the adversary $\A$ wins during Game$_i$ with $i \in \{0,1\}$. \\
\vspace{-0.3 cm}

\noindent \textbf{Game}$_0$: is the FindKey-CCA experiment. The
dishonest PKG $\A$ generates the master public key, chooses an
identity $\ID$ that she wishes to be challenged upon. She interacts
with the challenger in a key generation protocol, upon completion of
which the challenger $\B$ obtains a
  decryption key consisting of two triples $d_{\ID,A}^{(1)} =
({d_{A,1}}^{(1)},{d_{A,2}}^{(1)},{d_{A,3}}^{(1)})$, $d_{\ID,B}^{(1)}
= ({d_{B,1}}^{(1)},{d_{B,2}}^{(1)},{d_{B,3}}^{(1)})$ that should
pass the key sanity check (otherwise, $\B$ aborts). At this stage,
$\A$ knows $t_B^{(1)}=d_{B,3}^{(1)}$ but    has no information on
$d_{A,3}^{(1)}=t_A^{(1)}$ or on the values
$r_A=\log_X(d_{A,2}^{(1)})$ and $r_B=\log_X(d_{B,2}^{(1)})$ (by the
construction of the key generation protocol). In the next phase,
$\A$ starts making a number of decryption queries that the
challenger handles using $(d_{\ID,A}^{(1)},d_{\ID,B}^{(1)})$.
Namely, when queried on a ciphertext $C=(C_1,C_2,C_3,C_4)$, $\B$
calculates
$$ \psi= \frac{e\big(C_1, d_{A,1}^{(1)} \cdot {d_{B,1}^{(1)}}^\kappa  \big)}{e\big(C_2,d_{A,2}^{(1)} \cdot {d_{B,2}^{(1)}}^\kappa  \big) \cdot
C_3^{{d_{A,3}^{(1)} + \kappa
   d_{B,3}}^{(1)}}}, $$
where $\kappa=H(C_1,C_2,C_3)$, $K=KDF(\psi)$ and
$m=\mathsf{D}_K(C_4)$ which is returned to $\A$ (and may be $\bot$ if $C$ is declared invalid).\\
\indent  At the end of the game, $\A$ outputs a key
$(d_{\ID,A}^{(2)},d_{\ID,B}^{(2)})$ and wins if $d_{\ID,A}^{(2)}$
parses into $({d_{A,1}}^{(2)},{d_{A,2}}^{(2)},{d_{A,3}}^{(2)})$ such
that
${d_{A,3}}^{(1)}=t_A^{(1)}=t_A^{(2)}={d_{A,3}}^{(2)}$.\\
\indent We note that decryption queries on well-formed ciphertexts
do not reveal any information to $\A$ (since all well-formed keys
yield the same result). We will show that, provided all ill-formed
ciphertexts are rejected by $\B$, $\A$ still has negligible
information on $t_{A}^{(1)}$ in the end of the game. For
convenience, we distinguish two types of invalid ciphertexts: type I
ciphertexts $(C_1,C_2,C_3,C_4)$ are such that  $\log_X (C_1) \neq
\log_{F(\ID)}(C_2)$ (and can be told apart from valid ones by
checking if $e(C_1,F(\ID)) \neq e(X,C_2)$), where $F(\ID)=g^\ID
\cdot Z$,
 whereas type II ciphertexts are those for which $\log_X (C_1) = \log_{F(\ID)}(C_2) \neq \log_{e(g,h)}
 (C_3)$. \\
\vspace{-0.3 cm}

\noindent \textbf{Game}$_1$: is as Game$_0$ but $\B$ rejects all
type I invalid ciphertexts (that are publicly recognizable). Such a
malformed ciphertext comprises elements $C_1=X^{s_1}$,
$C_2=F(\ID)^{s_1-s_1'}$  and
 $C_3=e(g,h)^{s_1-s_1''}$  where $s_1'>0$ and $s_1'' \geq 0$. Hence, the symmetric key
 $K$
 that $\B$ calculates is derived from
 \begin{eqnarray} \label{find-cca-eq}
   \psi
   &=&  e(g,Y_A^{s_1} \cdot Y_B^{\kappa  s_1}) \cdot e(F(\ID),X)^{s_1' (r_A+ \kappa r_B)} \cdot  e(g,h)^{s_1''(t_A^{(1)}  + \kappa t_B^{(1)} )}
\end{eqnarray}
where $\kappa=H(C_1,C_2,C_3 )$. Upon termination of the key
generation protocol, $\A$ has no information on $r_A,r_B$
 (as $\B$
re-randomizes its key). Even if  $\kappa$ was the same   in all
decryption queries (which may happen if these queries all involve
identical $(C_1,C_2,C_3)$),
  the second term of the product
(\ref{find-cca-eq}) remains almost uniformly random to $\A$ at each
new   query. Indeed, for each failed one, $\A$ learns at most one
value that is not $r_A+ \kappa r_B$. After $i$ attempts, $p-i$
candidates are left and the  distance between the uniform
distribution on $\G_T$ and that of $e(F(\ID),X)^{s_1' (r_A+ \kappa
r_B)}$ becomes at most $i/p \leq q_d /p$.  Then, the only way for
$\A$ to  cause the new rejection rule to apply  is to forge a
symmetric authenticated encryption for an essentially random   key
$K$. A standard argument shows that, throughout all
queries, the probability of $\B$ not rejecting a type I ciphertext
 is   smaller than $  q_d \cdot
(\mathbf{Adv}^{\textsf{CT-INT}}(\ell) +
\mathbf{Adv}^{\textsf{KDF}}(\lambda,\ell)  + q_d/p)$.  It easily
comes that $|\PR[S_1]-\PR[S_0]| \leq q_d \cdot
(\mathbf{Adv}^{\mathsf{CT\textrm{-}INT}}(\lambda) +
 \mathbf{Adv}^{\mathsf{KDF}}(\lambda,\ell)+q_d/p)$.\\ \noindent
 \indent  We now consider type II invalid
queries. While $\A$ knows $t_B^{(1)}$,  she has initially no
information on  $t_A^{(1)}$ and the last term of the product
(\ref{find-cca-eq}) is  unpredictable to her at the first type II
query. Each such  rejected   query allows $\A$ to rule out at most
one candidate as for the   value $t_A^{(1)}$. After $i\leq q_d$
unsuccessful type II queries, she is left with at least $p-i$
candidates  at the next type II query, where the distance between
the uniform distribution on $\G_T$ and that of $\psi$ (calculated as
per (\ref{find-cca-eq})) becomes smaller than $i/p \leq q_d/p$.
Again, one can show that, throughout all queries, the probability of
$\B$ not rejecting a type II ciphertext
 is at most $  q_d \cdot
(\mathbf{Adv}^{\textsf{CT-INT}}(\ell) +
\mathbf{Adv}^{\textsf{KDF}}(\lambda,\ell)  + q_d/p)$. Let us call
$\texttt{type-2}$ the latter event.    If all invalid ciphertexts
are rejected, $\A$'s probability of success is given by
$\PR[S_1|\neg\texttt{type-2}]\leq 1/(p-q_d) \leq (q_d+1)/p$.
 Since
\begin{eqnarray*}
\PR[S_1] & =&  \PR[ S_1 \wedge \texttt{type-2} ] + \PR[S_1 \wedge
\neg \texttt{type-2} ] \\ & \leq & \PR[ \texttt{type-2} ]+ \PR[
S_1|\neg\texttt{type-2} ] \PR[\neg \texttt{type-2}] \\
& \leq & \PR[ \texttt{type-2} ] +  \PR[ S_1|\neg\texttt{type-2} ] \\
& \leq &  q_d \cdot \big(\mathbf{Adv}^{\textsf{CT-INT}}(\ell) +
\mathbf{Adv}^{\textsf{KDF}}(\lambda,\ell)  + \frac{q_d}{p} \big) +
\frac{q_d+1}{p}
\end{eqnarray*}
and $|\PR[S_0]-\PR[S_1]| \leq
  q_d \cdot
(\mathbf{Adv}^{\mathsf{CT\textrm{-}INT}}(\lambda) +
 \mathbf{Adv}^{\mathsf{KDF}}(\lambda,\ell)+q_d/p) $, the claimed upper
 bound follows. \qed


\begin{thebibliography}{50}
\bibitem{ADMNPS} M.~Abdalla, A.~Dent, J.~Malone-Lee, G.~Neven,
  D.-H.~Phan, N.~Smart. \newblock Identity-Based Traitor Tracing.
  \newblock In {\em PKC'07}, {\em LNCS} 4450, pp. 361--376,
  2007.

\bibitem{AKN07} M.~Abdalla, E.~Kiltz, G.~Neven. \newblock Generalized Key Delegation for Hierarchical Identity-Based
Encryption. \newblock In {\em ESORICS'07}, {\em LNCS} 4734, pp.
139--154. Springer, 2007.


\bibitem{AP03} S.~Al-Riyami, K.~Paterson. \newblock Certificateless
  Public Key Cryptography. \newblock In {\em Asiacrypt'03}, {\em LNCS}
  2894, pp. 452--473, 2003.




\bibitem{AHLSW} M.-H.~Au, Q.~Huang, J.-K.~Liu, W.~Susilo, D.-S.~Wong,
G.~Yang. \newblock Traceable and Retrievable Identity-Based
Encryption. \newblock In {\em ACNS'08}, {\em LNCS}
  5037,  pp. 94--110, 2008.



\bibitem{BB2}
D.~Boneh, X.~Boyen.
\newblock Efficient Selective-ID Secure Identity-Based Encryption Without
  Random Oracles.
\newblock In {\em Eurocrypt'04},   {\em LNCS} 3027, pp. 223--238, 2004.

\bibitem{BB3}
D.~Boneh, X.~Boyen.
\newblock Secure Identity-Based Encryption Without Random Oracles.
\newblock In {\em Crypto'04},   {\em LNCS} 3152, pp. 443--459, 2004.


\bibitem{BBG05}
D.~Boneh, X.~Boyen, E.-J. Goh.
\newblock Hierarchical Identity-Based encryption with Constant Size Ciphertext.
\newblock In {\em Eurocrypt'05},  {\em LNCS} 3494, pp. 440--456, 2005.




\bibitem{BF2} D.~Boneh, M.~Franklin.
\newblock Identity-Based Encryption from the {Weil} Pairing.
\newblock In    {\em SIAM} Journal of Computing 32(3), pp. 586--615,
2003, earlier version in \newblock {\em Crypto'01}, {\em LNCS} 2139,
pp. 213--229, 2001.

\bibitem{BGH07}
D.~Boneh, C.~Gentry, M.~Hamburg
\newblock
Space-Efficient Identity-Based Encryption Without Pairings.
\newblock
In {\em FOCS'07}, pp. 647--657, 2007.


\bibitem{BH08} D.~Boneh, M.~ Hamburg. \newblock Generalized Identity Based and
Broadcast Encryption Schemes. \newblock In {\em Asiacrypt'08}, {\em
LNCS} 5350, pp. 455--470, 2008


\bibitem{BK05}
D.~Boneh, J.~Katz.
\newblock Improved Efficiency for CCA-Secure Cryptosystems Built Using Identity-Based Encryption.
\newblock In {\em CT-RSA'05},   {\em LNCS} 3376, pp. 87--103, 2005.




\bibitem{BMW} X.~Boyen, Q.~Mei, B.~Waters.
\newblock  Direct Chosen Ciphertext Security from Identity-Based Techniques.
\newblock   in {\em  ACM CCS'05},    pp. 320--329,
    2005.

\bibitem{BW06} X.~Boyen, B.~Waters. \newblock Anonymous Hierarchical Identity-Based Encryption (Without Random Oracles).
 \newblock  In {\em Crypto'06},   {\em LNCS} 4117,  pages 290--307, 2006.








\bibitem{CHK03}
R.~Canetti, S.~Halevi,   J.~Katz.
\newblock A Forward-Secure Public-Key Encryption Scheme.
\newblock In {\em Eurocrypt'03},   {\em LNCS} 2656, pp. 254--271, 2003.

\bibitem{CHK}
R.~Canetti, S.~Halevi,   J.~Katz.
\newblock Chosen-Ciphertext Security from Identity-Based Encryption.
\newblock In {\em Eurocrypt'04},  {\em LNCS}  3027,  pp. 207--222, 2004.





\bibitem{Cheon} J.~H.~Cheon. \newblock Security Analysis of the Strong
Diffie-Hellman Problem. \newblock In {\em  Eurocrypt'06},   {\em
LNCS} 4004, pp.  1--11,   2006.


\bibitem{Co}
C.~Cocks.
\newblock An Identity-Based Encryption Scheme Based on Quadratic Residues.
\newblock In {\em 8th {IMA} International Conference},  {\em
  LNCS} 2260, pp. 360--363, 2001.



\bibitem{CrSh98}
R.~Cramer, V.~Shoup.
\newblock A Practical Public-Key Cryptosystem Provably Secure Against Adaptive
  Chosen Ciphertext Attack.
\newblock In {\em Crypto'98},   {\em LNCS} 1462, pp. 13--25, 1998.

\bibitem{CrSh02}
R.~Cramer, V.~Shoup.
\newblock Universal Hash Proofs and a Paradigm for Adaptive Chosen Ciphertext
  Secure Public-Key Encryption.
\newblock In {\em Eurocrypt'02},   {\em LNCS} 2332, pp. 45--64, 2002.


\bibitem{Ge03} C.~Gentry. \newblock Certificate-Based Encryption and the Certificate Revocation Problem. \newblock In {\em
Eurocrypt'03}, volume 2656 of {\em LNCS}, pp. 272--293,  2003.

\bibitem{Ge06} C.~Gentry. \newblock Practical Identity-Based Encryption Without Random Oracles.  \newblock In {\em Eurocrypt'06},  {\em LNCS}  4004, pp.
445--464, 2006.

\bibitem{GW09} C.~Gentry, B.~Waters. \newblock  Adaptive Security in Broadcast Encryption Systems (with Short Ciphertexts).
\newblock In {\em Eurocrypt'09},  {\em LNCS}  5479, pp.
 171--188, 2009.



\bibitem{GS0}
C.~Gentry, A.~Silverberg.
\newblock Hierarchical {ID}-Based Cryptography.
\newblock In {\em Asiacrypt'02},   {\em LNCS} 2501, pp. 548--566, 2002.


\bibitem{Goyal} V.~Goyal. \newblock Reducing Trust in the PKG in
  Identity-Based Cryptosystems. \newblock In {\em Crypto'07}, {\em
    LNCS} 4622, pp.  430--447, 2007.

\bibitem{GLSW} V.~Goyal, S.~Lu, A.~Sahai, B.~Waters. \newblock Black-Box Accountable Authority Identity Based
Encryption. \newblock In {\em ACM-CCS'08},   2008.


\bibitem{GPSW} V.~Goyal, O.~Pandey, A.~Sahai, B.~Waters. \newblock Attribute-based encryption for fine-grained access control of encrypted data.
\newblock In {\em ACM CCS'06}, pp.
89--98,   2006.




\bibitem{HK07} D.~Hofheinz, E.~Kiltz. \newblock \newblock Secure Hybrid Encryption from Weakened Key
Encapsulation. \newblock In {\em Crypto'07}, {\em
    LNCS} 4622, pp.   553--571, 2007.




\bibitem{KY02} A.~Kiayias, M.~Yung. \newblock  Traitor Tracing with Constant
Transmission Rate. In {\em  Eurocrypt'02},   {\em LNCS} 2332,  pp.
450--465, 2002. Updated version available as \newblock Cryptology
ePrint Archive: Report 2006/458, 2006.



\bibitem{KV08} E.~Kiltz, Y.~Vahlis. \newblock CCA2 Secure IBE: Standard Model Efficiency through Authenticated Symmetric Encryption \newblock
In {\em  CT-RSA'08},   {\em LNCS} 4964, pp. 221--238, 2008.

\bibitem{KD04} K.~Kurosawa, Y.~Desmedt. \newblock A New Paradigm of Hybrid Encryption
Scheme. \newblock   In {\em Crypto'04}, {\em LNCS}  3152, pp.
445--456, 2004.

\bibitem{LV09} B.~Libert, D.~Vergnaud. \newblock Towards Black-Box Accountable Authority IBE with Short Ciphertexts and Private Keys.
\newblock In {\em PKC'09}, {\em LNCS}  5443, pp. 235--255, 2009.



\bibitem{NSS99} D.~Naccache, A.~Shamir, J.-P.~Stern. \newblock How to Copyright a Function. \newblock In {\em PKC'99}, {\em LNCS}  1560, pp.
188--196, 1999.





\bibitem{Ok92} T.~Okamoto. \newblock  Provably Secure and Practical Identification Schemes and Corresponding Signature Schemes.
\newblock In {\em Crypto'92},  {\em LNCS}  740, pp. 31--53, 2002.


\bibitem{Ped91} T.~Pedersen. \newblock Non-Interactive and Information-Theoretic Secure Verifiable Secret Sharing.
\newblock In {\em Crypto'91},   {\em LNCS}  576,  pp. 129--140, 1991.





\bibitem{SW05} A.~Sahai, B.~Waters. \newblock  Fuzzy Identity-Based Encryption  \newblock In {\em Eurocrypt'05},
  {\em LNCS} 3494, pp. 457--473, 2005.

\bibitem{SK}
R.~Sakai, M.~Kasahara.
\newblock {ID}-based Cryptosystems with Pairing on Elliptic Curve.
\newblock In {\em {SCIS}'03},
\newblock \url{http://eprint.iacr.org/2003/054},  2003.

\bibitem{SC07} P.~Sarkar, S.~Chatterjee. \newblock Construction of a Hybrid HIBE Protocol Secure Against Adaptive Attacks. \newblock In {\em
ProvSec'07}, {\em LNCS} 4784,  pp. 51--67, 2007.


\bibitem{Sch}
C.~P. Schnorr.
\newblock Efficient Identification and Signatures for Smart Cards.
\newblock In {\em Crypto'89},   {\em LNCS} 435, pp. 239--252, 1989.

\bibitem{Sha} A.~Shamir.  \newblock Identity-Based Cryptosystems and
  Signature Schemes.  \newblock In {\em Crypto'84}, {\em LNCS} 196,
  pp.  47--53, 1984.


\bibitem{SG04} V.~Shoup, R.~Gennaro. \newblock A Note on An Encryption Scheme of Kurosawa and
Desmedt. \newblock Cryptology ePrint Archive: Report 2004/194, 2004.

\bibitem{Wat05} B.~Waters. \newblock Efficient Identity-Based
  Encryption Without Random Oracles. \newblock In {\em Eurocrypt'05},
  {\em LNCS} 3494, pp. 114--127, 2005.


\end{thebibliography}
\end{document}